\newtheorem{theorem}{Theorem}
\newtheorem{proposition}{Proposition}
\newtheorem{lemma}{Lemma}
\newtheorem{fact}{Fact}
\newtheorem{definition}{Definition}
\newtheorem{example}{Example}
\newtheorem{remark}{Remark}
\begin{document}

\title{A complete set of transformation rules for reversible circuits}

\author{Shiguang Feng,~\IEEEmembership{Memeber,~IEEE,}
	Lvzhou Li
	\thanks{This work was supported by the National Key Research and Development Program of China (Grant No.2024YFB4504004), the National Natural Science Foundation of China (Grant No. 92465202, 62272492, 12447107),  the Guangdong Provincial Quantum Science Strategic Initiative (Grant No. GDZX2303007), the Guangzhou Science and Technology Program (Grant No. 2024A04J4892). (Corresponding author: Lvzhou Li.)}
	\thanks{Shiguang Feng and Lvzhou Li are with the School of Computer Science and Engineering, Sun Yat-sen University, Guangzhou 510006, China (e-mail:fengshg3@mail.sysu.edu.cn; lilvzh@mail.sysu.edu.cn).}}

\markboth{IEEE Transactions on Computer-Aided Design of Integrated Circuits and Systems,~Vol.~XX, No.~X, XXXX}{Shiguang Feng, Lvzhou Li: A complete set of transformation rules for reversible circuits}

\IEEEpubid{Citation information: DOI 10.1109/TCAD.2025.3641533~\copyright~2025 IEEE}

\maketitle

\begin{abstract}
Reversible logic synthesis is a crucial component in quantum electronic design automation. While rule-based methodologies have gained prominence in reversible circuit optimization, the completeness of the transformation rule systems is a longstanding problem in this domain. In this work, we propose the first complete set of transformation rules for reversible circuits, comprising five fundamental rules: any two equivalent reversible circuits can be transformed into each other using the rules. To prove the completeness, a canonical circuit representation for reversible functions is introduced, and we show that every reversible function is computed by a unique reversible circuit in the canonical form, and any reversible circuit can be transformed into its canonical form by applying the rules. 
\end{abstract}

\begin{IEEEkeywords}
Reversible circuits, quantum computing, reversible logic synthesis, circuit optimization, transformation rules, canonical form
\end{IEEEkeywords}

\section{Introduction}
Quantum computing is an emerging field that leverages the principles of quantum mechanics to solve problems beyond the capabilities of classical computers. The efficient execution of quantum algorithms is a prerequisite for achieving quantum computational advantage. As a critical step in quantum computing, quantum compilation transforms the high-level descriptions of quantum algorithms into the low-level executable quantum circuits that comply with the constraints of specific quantum hardware, which has become an indispensable component of quantum electronic design automation (EDA). In applications of EDA, only sound and complete axiomatizations are of interest~\cite{Amaru2017newmajority}. 
In 2023, Cl\'{e}ment et~al. first introduced a complete equational theory for quantum circuits through their seminal work~\cite{Clement2023complete}, settling a 30-year-old open problem. Subsequent research efforts have focused on the structural minimality and extensions within this framework~\cite{Clement2024minimal,Clement2024qauntum}. The axiomatization of quantum circuits achieved a pivotal breakthrough that resolves the gaps in the systematic understanding of quantum circuit algebraization and provides categorical completeness guarantees for verification protocols.

Reversible circuits constitute a principal subclass of quantum circuits, which Toffoli first introduced as a computation model for the reversible computational process~\cite{Toffoli1980reversible}. 
They are Turing-complete and are polynomially equivalent to classical Boolean circuits within computational complexity theory.
Due to the inherent reversibility of quantum operations, any classical algorithm that needs to run on quantum computers must be converted into a reversible circuit. This transformation enables the exploitation of quantum phenomena, such as superposition and entanglement, to address complex classical problems through quantum computation. Some prominent quantum algorithms incorporate reversible circuits as core components, such as the oracle in Grover's search algorithm and the modular exponentiation module in Shor's factoring algorithm.

The realization of reversible functions is a very challenging task in quantum algorithm design. Reversible logic synthesis generates an optimized reversible circuit from a functional specification. As a fundamental component of reversible logic synthesis, circuit optimization improves the executability of quantum algorithms by reducing the circuit size and depth, and has garnered substantial research~\cite{Saeedi2013synthesis,Abdessaied2016reversiblequantum,Zulehner2021introducing,Wu2024asymptotically}. The rule-based and template-based methods are widely used for circuit optimization. A transformation (or rewriting) rule consists of a pair of equivalent circuits.
By applying the rules, we can transform a reversible circuit into a smaller equivalent circuit. Numerous rule-based optimization methods have been developed~\cite{Arabzadeh2010rule,Soeken2010window,Soeken2012optimizing,Cheng2012simplification,Rahman2014templates,Datta2015post,Bernardino2025reversible}. 
Templates are a generalization of rules. A template is a sequence of gates $A_1A_2\cdots A_m$ that performs the identity function. If a reversible circuit $\mathbf{C}$ contains a sequence $A_1A_2\cdots A_k$ ($k>m/2$) as its subcircuit, then $\mathbf{C}$ can be optimized by substituting $A_m A_{m-1}\cdots A_{k-1}$ for $A_1A_2\cdots A_k$ to reduce the number of gates. This process, known as template matching, has been intensively studied in the literature~\cite{Miller2003transformation,Maslov2003fredkin,Maslov2005toffoli,Maslov2005quantum,Vos2010reversible,Abdessaied2013exact,Taha2015reversible}.

The template-based and rule-based methods are essentially the same technique. A set of rules or templates is considered complete if any two equivalent circuits can be transformed into one another using those rules. Based on a complete set of templates, template matching can result in optimal circuits~\cite{Rahman2012properties}. However, the aforementioned optimization approaches are incomplete and cannot guarantee an optimal circuit after optimization~\cite{Abdessaied2016reversiblequantum}. 
In 2002, Iwama et~al. presented a complete set of transformation rules for reversible circuits that compute single-output Boolean functions~\cite{Iwama2002transrule}. Since then, the completeness of rule-based methods has attracted significant research interest in this field~\cite{Rahman2012properties,Soeken2013white,Thomsen2015ricerar,Hutslar2018library}. 
\IEEEpubidadjcol
Based on category theory, Cockett et~al. presented the complete sets of transformation rules for reversible circuits employing CNOT gates in 2017~\cite{Robin2017category}, and those employing Toffoli gates in 2018~\cite{Comfort2018category}, respectively.
The above works only discuss the complete set of transformation rules for some special reversible circuits. Whether there is a complete set of transformation rules for general reversible circuits is a longstanding problem. In particular, given that the existence of a complete set of transformation rules for quantum circuits has been proven~\cite{Clement2023complete}, it is more urgent to consider this problem for reversible circuits.

The main contribution of this work is the first complete set of transformation rules for reversible circuits. We systematically review and consolidate existing circuit transformation rules and optimization templates, and introduce a new rule to establish a set of five fundamental rules. To prove the completeness, we define a novel canonical circuit representation for $n$-ary reversible functions derived from a Hamiltonian path of an $n$-hypercube graph\footnote{A reversible function can be interpreted as a permutation over a Hamiltonian path of a hypercube graph.}. We show that every reversible function is computed by a unique reversible circuit in the canonical form, and any reversible circuit can be transformed into its canonical form. Hence, any two equivalent reversible circuits can be mutually transformed through the unique canonical form, via systematic application of the rules (see Fig.~\ref{fig-ideacomplete}). The transformation rules proposed in this paper provide formal guarantees for optimization completeness – any rule-based optimization approach subsuming the five rules can theoretically achieve circuit optimality. The developed theoretical framework establishes mathematical underpinnings for rule-based circuit optimization methodologies in quantum EDA systems.

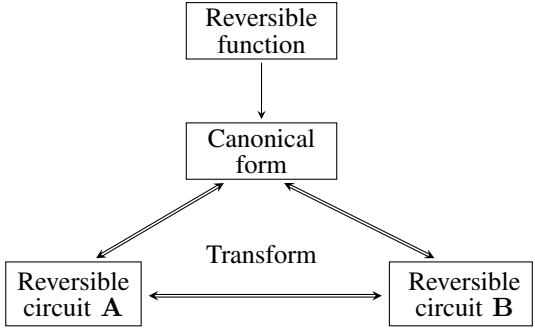
\begin{figure}[t]
	\centering
	\scalebox{0.8}{
	\begin{tikzpicture}
		\draw (1.6,-0.65) rectangle (3.4,-1.5);
		\node at (2.5,-0.9) [] {Reversible};
		\node at (2.5,-1.25) [] {circuit $\mathbf{A}$};
		
		\draw (4,0.45) rectangle (6,1.2) ;
		\node at (5,1) [] {Canonical};
		\node at (5,0.65) [] {form};
		
		\node at (5,-0.55) [] {Transform};
		
		\draw (6.7,-0.65) rectangle (8.6,-1.5);
		\node at (7.7,-0.9) [] {Reversible};
		\node at (7.7,-1.25) [] {circuit $\mathbf{B}$};
		
		\draw (4,2.05) rectangle (6,2.8);
		\node at (5,2.6) [] {Reversible};
		\node at (5,2.25) [] {function};
		
		\draw[>= stealth, ->, black] (5,2) -- (5,1.25);
		
		\draw[>= stealth, double, <->, black] (3.5,-1.1) -- (6.6,-1.1);
		
		\draw[>= stealth, double, <->, black] (2.8,-0.6) -- (4.5,0.4);
		
		\draw[>= stealth, double, <->, black] (7.3,-0.6) -- (5.3,0.4);
	\end{tikzpicture}}
	\caption{Sketch of the completeness proof.}
	\label{fig-ideacomplete}
\end{figure}

The paper is organized as follows. In Section~\ref{sec-pre}, we set up the notation and terminology of reversible functions and reversible circuits. 
In Section~\ref{sec-transformationrules}, we propose a set $\mathcal{RC}$ of transformation rules for reversible circuits, and prove the soundness of $\mathcal{RC}$.
In Section~\ref{sec-completeness}, we define the canonical forms of reversible circuits and prove the completeness of $\mathcal{RC}$.
Finally, we conclude the paper in Section~\ref{sec-conclusion}.

\section{Preliminaries}\label{sec-pre}
An $n$-ary reversible function 
\[f(x_1,x_2\dots,x_n)=(y_1,y_2\dots,y_n)\]
where $x_i,y_i\in\{0,1\}$ ($1\leq i \leq n$), is a bijection from $\{0,1\}^n$ to $\{0,1\}^n$. 
A reversible logic gate computes a reversible function. The X, CNOT, and Toffoli gates are three elementary reversible logic gates (see Fig.~\ref{fig-XCNOTTOFFgate}). The X gate is a 1-bit reversible logic gate that flips the input bit. The CNOT gate is a 2-bit reversible logic gate that has one control bit and one target bit. It flips the target bit iff the control bit has value 1. The Toffoli gate is a 3-bit reversible logic gate that has two control bits and one target bit: it flips the target bit iff both of the two control bits have value 1.

\begin{figure}[t]
	\centering
	\subfloat{
	\scalebox{0.8}{
	\begin{quantikz}
			\lstick{$x$} & \gate{X} & \rstick{$\neg x$}
	\end{quantikz}}}
	\subfloat{
	\scalebox{0.8}{
	\begin{quantikz}
		\lstick{$x$} & \ctrl{1} & \rstick{$x$} \\
		\lstick{$y$} & \targ{} & \rstick{$x \oplus y$}
	\end{quantikz}}}
	\subfloat{
	\scalebox{0.8}{
	\begin{quantikz}
		\lstick{$x$} & \ctrl{1} & \rstick{$x$}  \\
		\lstick{$y$} & \ctrl{1} & \rstick{$y$} \\
		\lstick{$z$} & \targ{} & \rstick{$(x \land y) \oplus z$}
	\end{quantikz}}}
	\caption{\label{fig-XCNOTTOFFgate} The X gate, CNOT gate, and Toffoli gate.}
\end{figure}
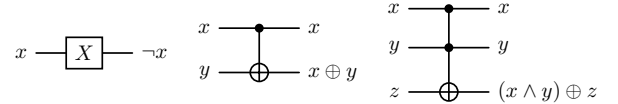


\begin{figure}[t]
	\centering
	\subfloat[\label{fig-gate-mct}]{
	\scalebox{0.8}{
	\begin{quantikz}[row sep={0.7cm,between origins}]
		\lstick{$x_1$}  & \ctrl{1}   & \rstick{$x_1$}  \\
		\lstick{$x_2$}  & \ctrl{2}   & \rstick{$x_2$}  \\
		\lstick{$\vdots$} & \wave  	 & \rstick{$\vdots$} \\
		\lstick{$x_n$} 	& \ctrl{1} 	 & \rstick{$x_n$} \\
		\lstick{$y$} & \targ{} & \rstick{$(\bigwedge_{i=1}^{n}x_{i})\oplus y$}
	\end{quantikz}}}
	\subfloat[\label{fig-gate-mpmct}]{
	\scalebox{0.8}{
	\begin{quantikz}[row sep={0.7cm,between origins}]
	\lstick{$x_1$}    & \octrl{1}    & \rstick{$x_1$}   \\
	\lstick{$x_2$}    & \ctrl{2}     & \rstick{$x_2$}   \\
	\lstick{$\vdots$} & \wave  	     & \rstick{$\vdots$} \\
	\lstick{$x_n$} 	  & \octrl{1}     & \rstick{$x_n$}  \\
	\lstick{$y$}      & \targ{}      & \rstick{$(\neg x_1\land x_2\land \cdots\land \neg x_n)\oplus y$} 
	\end{quantikz}}}
	\caption{\label{fig-gate}The illustration of (a) an MCT gate, and (b) an MPMCT gate.}
\end{figure}
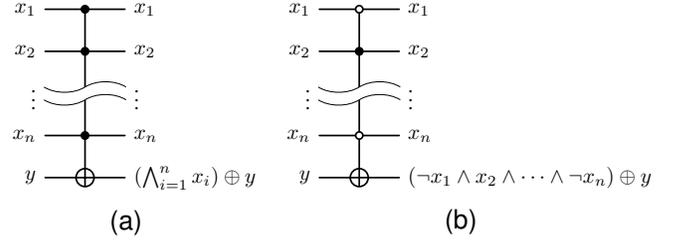

The mixed polarity multiple-control Toffoli (MPMCT) gates extend MCT gates with negative control bits. More precisely, an MPMCT gate has a set $P$ of positive control bits, a set $N$ of negative control bits, and a target bit (see Fig.~\ref{fig-gate-mpmct}, where the black dots and white dots denote the positive control bits and negative control bits, respectively). It flips the target bit iff all bits in $P$ have value 1 and all bits in $N$ have value 0.

Let $P,N$ be two sets of bits satisfying $P\cap N = \emptyset$, and $q$ a bit such that $q\notin P\cup N$. We use $\mathbf{G}[P,N,q]$ to denote the reversible logic gate $A$ where $P$ (resp. $N$) is the set of positive (resp. negative) control bits of $A$, and $q$ is the target bit of $A$. Hence, $\mathbf{G}[\emptyset,\emptyset,q]$ denotes the X gate that operates on $q$, $\mathbf{G}[\{p\},\emptyset,q]$ denotes the CNOT gate whose control (resp. target) bit is $p$ (resp. $q$), and $\mathbf{G}[P,\emptyset,q]$ denotes an MCT gate if $P$ has more than one element.
We abbreviate $\mathbf{G}[\emptyset,\emptyset,q]$ and $\mathbf{G}[\{p\},\emptyset,q]$ to $\mathrm{X}[q]$ and $\mathrm{CNOT}[p,q]$, respectively.

A reversible circuit is a sequence of reversible logic gates. We use the convention that the leftmost gate in the reversible circuit executes first. Toffoli showed that the X, CNOT, Toffoli, and MCT gates are universal~\cite{Toffoli1980reversible}. For technical convenience, we also allow MPMCT gates. An ancillary bit is an extra input bit being used as the temporary workspace. It is usually initialized to 0. Since every $n$-ary reversible function can be computed by an $n$-bit reversible circuit~\cite{Toffoli1980reversible}, we focus on the reversible circuits without ancillary bits in this paper. Unless otherwise stated, for an $n$-bit reversible logic gate or $n$-bit reversible circuit, we assume that it operates on the bits $\{q_0,\dots,q_{n-1}\}$. 

\begin{example}
	The following is a picture visualization of the circuit
	\[
	\begin{aligned}
	\mathbf{C}	= \bigl(\mathrm{CNOT}[q_0,q_2] \mathrm{X}[q_3] \mathbf{G}[\{q_0,q_2\},\{q_1\},q_3] \\
		\mathrm{CNOT}[q_2,q_1] \mathbf{G}[\emptyset,\{q_0,q_1\},q_2]\mathbf{G}[\{q_2,q_3\},\emptyset,q_1]\bigr)
	\end{aligned}
	\]
	\[
	\mathbf{C}	=
	\scalebox{0.7}{
	\begin{quantikz}[row sep={0.7cm,between origins}]
			\lstick{$q_0:$} & \ctrl{2} & \ctrl{1}  & \qw  		& \octrl{1} & \qw    	& \qw \\
			\lstick{$q_1:$} & \qw	   & \octrl{1} & \targ{} 	& \octrl{1}	& \targ{}   & \qw \\
			\lstick{$q_2:$} & \targ{1} & \ctrl{1}  & \ctrl{-1} 	& \targ{}   & \ctrl{-1}	& \qw \\
			\lstick{$q_3:$} & \gate{X} & \targ{1}  & \qw 		& \qw 		& \ctrl{-1}	& \qw 
	\end{quantikz}}
	\]
\end{example}

We say that an $n$-ary reversible function $f$ exchanges two strings $a,b$ if
\[
 f(x) = 
\begin{cases} 
	a, & \text{if } x = b, \\
	b, & \text{if } x = a, \\
	x, & \text{if } x\notin \{a,b\}.
\end{cases}
\]
A reversible circuit $\mathbf{C}$ exchanges $a,b$ if the reversible function computed by $\mathbf{C}$ exchanges $a,b$.
Let $\mathbf{A}$ and $\mathbf{B}$ be two reversible circuits.  We use $\mathbf{A}\equiv \mathbf{B}$ to denote that $\mathbf{A}$ and $\mathbf{B}$ are equivalent, i.e., they compute the same reversible function, and use $\mathbf{AB}$ to denote the reversible circuit that is a concatenation of $\mathbf{A}$ and $\mathbf{B}$.

\section{Transformation rules}\label{sec-transformationrules}
In this section, we propose a set of transformation rules for reversible circuits and prove their soundness.
 
\subsection{The set of transformation rules}
We define $\mathcal{RC}$ to be the set of the following five basic transformation rules.

\begin{enumerate}[\textbf{Rule} 1.]
	\item\label{rule-gate-elimn} For any reversible logic gate $A$, 
	\[AA \equiv \epsilon\]
	where $\epsilon$ denotes the empty circuit. 

	\item\label{rule-qubit-add-remove} If $A_0 = \mathbf{G}[P,N\cup \{p\},q]$, $A_1 = \mathbf{G}[P\cup \{p\},N,q]$, and $A =\mathbf{G}[P,N,q]$, then
	\[
	A_0 A_1 \equiv A.
	\]
	
	\item\label{rule-gateswap} If $A =\mathbf{G}[P_1,N_1,p]$, $B =\mathbf{G}[P_2,N_2,q]$ are two gates satisfying $P_1\cap N_2\neq \emptyset$ or $P_2\cap N_1\neq \emptyset$, then
	\[
	A B \equiv B A.
	\]
	
	\item\label{rule-swap-eliman} If $A=\mathrm{CNOT}[p,q]$, $B=\mathrm{CNOT}[q,p]$, and
	\[
	\begin{aligned}
		C_1 & = \mathbf{G}[P\cup P_1,N\cup N_1,p], \\
		C_2 & = \mathbf{G}[P\cup P_2, N\cup N_2, q]
	\end{aligned}
	\]
	are four gates in which the sets $P_1,P_2,N_1,N_2$ satisfy one of the following conditions:
	\begin{itemize}
		\item $P_1 =\{q\}$, $P_2 =\{p\}$, $N_1 = N_2=\emptyset$,
		\item $N_1 =\{q\}$, $N_2 =\{p\}$, $P_1 = P_2=\emptyset$,
	\end{itemize}
	then
	\[
	ABA C_1 \equiv C_2 ABA.
	\]
	
	\item\label{rule-completeness} Let $A_0 = \mathbf{G}[P,N\cup Q, q_0]$, $A_1 = \mathbf{G}[P\cup Q,N,q_0]$, where $Q=\{q_1,\dots,q_m\}$. 
	Set $P'=P\cup\{q_0\}$ and $N'=N\cup\{q_0\}$. For each $1\leq i \leq m$, define 
	\[
	\begin{aligned}
		B_i = &\mathbf{G}[P'\cup\{q_{i+1},\dots,q_m\},N\cup \{q_1,\dots,q_{i-1}\}, q_i] \\
		B_i' =&\mathbf{G}[P\cup\{q_{i+1},\dots,q_m\},N'\cup \{q_1,\dots,q_{i-1}\}, q_i].
	\end{aligned}
	\]
	Then
	\[
	A_0 A_1 B_1\cdots B_m\cdots B_1 A_1 A_0 \equiv B_1'\cdots B_m'\cdots B_1'.
	\]
\end{enumerate}

Let's briefly explain the five rules with examples. Rule~\ref{rule-gate-elimn} says that two adjacent identical gates can be removed from the circuit. 
Rule~\ref{rule-qubit-add-remove} says that if two adjacent gates have the same control bits where exactly one bit $p$ among them has different polarities in the two gates, then the two gates can be reduced to one gate with $p$ removed, as shown in the following example.
\[
	\scalebox{0.7}{
	\begin{quantikz}[row sep={0.7cm,between origins}]
		& \octrl{1}  & \ctrl{1}&  \qw \\
		& \ctrl{2}   & \ctrl{2} &  \qw \\
		& \wave 	 &          &  \qw \\
		& \ctrl{1}   & \ctrl{1} &  \qw \\
		& \targ{}    & \targ{}  &  \qw
	\end{quantikz}}
	\equiv
	\scalebox{0.7}{
	\begin{quantikz}[row sep={0.7cm,between origins}]
		& \qw 		&  \qw \\
		& \ctrl{2}  &  \qw \\
		& \wave     &  \qw \\
		& \ctrl{1}  &  \qw \\
		& \targ{}   &  \qw
	\end{quantikz}}
\]

Rule~\ref{rule-gateswap} says that two gates commute if they have a common control bit that has different polarities in the two gates, respectively. For example,
\[
	\scalebox{0.7}{
	\begin{quantikz}[row sep={0.7cm,between origins}]
		& \ctrl{1}   & \qw  	 &  \qw \\
		& \ctrl{2}   & \octrl{2} &  \qw \\
		& \wave 	 &           &  \qw \\
		& \ctrl{1}   & \targ{}   &  \qw \\
		& \targ{}    & \ctrl{-1} &  \qw
	\end{quantikz}}
	\equiv
	\scalebox{0.7}{
	\begin{quantikz}[row sep={0.7cm,between origins}]
		& \qw		 & \ctrl{1}   &  \qw \\
		& \octrl{2}  & \ctrl{2}   &  \qw \\
		& \wave 	 &            &  \qw \\
		& \targ{}    & \ctrl{1}   &  \qw \\
		& \ctrl{-1}  & \targ{}    &  \qw
	\end{quantikz}}
\]

The SWAP gate is a widely used gate in quantum circuits that swaps the states of two quantum bits. It can be decomposed into three CNOT gates, as shown below.
\[	
	\scalebox{0.7}{
	\begin{quantikz}[row sep={0.6cm,between origins}]
		\lstick{$x$} & \ctrl{1} & \targ{}   & \ctrl{1} & \rstick{$y$} \\
		\lstick{$y$} & \targ{}  & \ctrl{-1} & \targ{}  & \rstick{$x$}
	\end{quantikz}}
	=
	\scalebox{0.7}{
	\begin{quantikz}[row sep={0.6cm,between origins}]
		\lstick{$x$} & \swap{1}  & \rstick{$y$} \\
		\lstick{$y$} & \targX{}  & \rstick{$x$}
	\end{quantikz}}
\]
We do not regard the SWAP gate as an elementary gate, but rather consider its equivalent form in the paper. Roughly speaking, by Rule~\ref{rule-swap-eliman}, if a SWAP gate operates on the bits of a gate $A$, then we can move the SWAP gate through $A$ with its corresponding control and target bits exchanged. The following is an example of Rule~\ref{rule-swap-eliman}.
	\[
	\scalebox{0.7}{
	\begin{quantikz}[row sep={0.7cm,between origins}]
		& \qw  	   & \qw 	   & \qw		& \ctrl{2}  & \qw \\
		& \qw      & \qw 	   & \qw		& \wave	   	& \qw \\
		& \qw  	   & \qw 	   & \qw		& \octrl{1} & \qw \\
		& \ctrl{1} & \targ{}   & \ctrl{1}   & \targ{}   & \qw \\
		& \targ{}  & \ctrl{-1} & \targ{}    & \ctrl{-1} & \qw 
	\end{quantikz}}
	\equiv
	\scalebox{0.7}{
	\begin{quantikz}[row sep={0.7cm,between origins}]
		& \ctrl{2}  & \qw  	   & \qw 	   & \qw	  & \qw \\
		& \wave		& \qw      & \qw 	   & \qw	  & \qw \\
		& \octrl{1}	& \qw  	   & \qw 	   & \qw	  & \qw \\
		& \ctrl{1}  & \ctrl{1} & \targ{}   & \ctrl{1} & \qw \\
		& \targ{}	& \targ{}  & \ctrl{-1} & \targ{}  & \qw 
	\end{quantikz}}
	\]

Rule~\ref{rule-completeness} can be used to change the polarity of control bits. It is essential for the proof of the completeness of $\mathcal{RC}$. The following is a specific instance of this rule, where $Q=\{q_1,q_2,q_3\}$, $P=N=\emptyset$, $A_0 = \mathbf{G}[\emptyset, Q, q_0]$, $A_1 = \mathbf{G}[Q, \emptyset, q_0]$, and
\[
\begin{aligned}
	B_1 =  \mathbf{G}[\{q_0,q_2,q_3\}, \emptyset, q_1],  	B_1' = \mathbf{G}[\{q_2,q_3\}, \{q_0\}, q_1], \\
	B_2 =  \mathbf{G}[\{q_0,q_3\}, \{q_1\}, q_2],  	B_2' = \mathbf{G}[\{q_3\}, \{q_0,q_1\}, q_2], \\
	B_3 =  \mathbf{G}[\{q_0\}, \{q_1,q_2\}, q_3],  	B_3' = \mathbf{G}[\emptyset, \{q_0,q_1,q_2\}, q_3].
\end{aligned}
\]

\[
	\begin{aligned}
		 & \scalebox{0.7}{\begin{quantikz}[row sep={0.7cm,between origins}]
		 	\setwiretype{n}	& \push{A_0}& \push{A_1}& \push{B_1} & \push{B_2}	& \push{B_3} & \push{B_2}	& \push{B_1}& \push{A_1} &\push{A_0}  & \\
			\lstick{$q_3$:} & \octrl{1} & \ctrl{1} & \ctrl{1}  & \ctrl{1}	& \targ{}	 & \ctrl{1} 	& \ctrl{1} 	& \ctrl{1} & \octrl{1}  & \qw \\
			\lstick{$q_2$:} & \octrl{1} & \ctrl{1} & \ctrl{1}  & \targ{}	& \octrl{-1} & \targ{}		& \ctrl{1} 	& \ctrl{1} & \octrl{1}  & \qw \\
			\lstick{$q_1$:} & \octrl{1} & \ctrl{1} & \targ{}   & \octrl{-1}	& \octrl{-1} & \octrl{-1}	& \targ{}  	& \ctrl{1} & \octrl{1}  & \qw \\
			\lstick{$q_0$:} 	& \targ{}   & \targ{}  & \ctrl{-1} & \ctrl{-1}	& \ctrl{-1}  & \ctrl{-1}	& \ctrl{-1} & \targ{}  & \targ{}    & \qw
		\end{quantikz}} \\
		& \qquad\equiv
		\scalebox{0.7}{\begin{quantikz}[row sep={0.7cm,between origins}]
			\setwiretype{n}	& \push{B_1'} & \push{B_2'}	& \push{B_3'} & \push{B_2'}	& \push{B_1'}   &  \\
			\lstick{$q_3$:} & \ctrl{1}  & \ctrl{1}	& \targ{}	 & \ctrl{1} 	& \ctrl{1} 	  & \qw \\
			\lstick{$q_2$:} & \ctrl{1}  & \targ{}	& \octrl{-1} & \targ{}		& \ctrl{1}   & \qw \\
			\lstick{$q_1$:} & \targ{}   & \octrl{-1}	& \octrl{-1} & \octrl{-1}	& \targ{}  	 & \qw \\
			\lstick{$q_0$:} 	& \octrl{-1} & \octrl{-1}	& \octrl{-1}  & \octrl{-1}	& \octrl{-1}  & \qw
		\end{quantikz}}
	\end{aligned}
\]

\begin{example}
	An application of Rule~\ref{rule-completeness} is transforming a CNOT gate to a negatively controlled-NOT gate, as shown below. We first expand the two X gates to two subcircuits $A_0 A_1$ by Rule~\ref{rule-qubit-add-remove}, then apply Rule~\ref{rule-gateswap} to transform the second subcircuit $A_0 A_1$ to $A_1 A_0$ by swapping $A_0, A_1$, and finally apply Rule~\ref{rule-completeness} to the whole circuit to obtain the negatively controlled-NOT gate.
	\[
	\begin{aligned}
		&\scalebox{0.7}{
			\begin{quantikz}[row sep={0.7cm,between origins}]
				\setwiretype{n}	& & \push{B_1}  & & \\
				\lstick{$q_1$:} & \qw		& \targ{}    & \qw 		 & \qw \\
				\lstick{$q_0$:} & \gate{X}	& \ctrl{-1} & \gate{X}	 & \qw 
		\end{quantikz}} \overset{\text{Rule~\ref{rule-qubit-add-remove}}}{\equiv}
		\scalebox{0.7}{
			\begin{quantikz}[row sep={0.7cm,between origins}]
				\setwiretype{n}	& \push{A_0}& \push{A_1}& \push{B_1} & \push{A_0} &\push{A_1}  & \\
				\lstick{$q_1$:} & \octrl{1}& \ctrl{1} & \targ{}   & \octrl{1}  & \ctrl{1}  & \qw \\
				\lstick{$q_0$:} & \targ{}  & \targ{}   & \ctrl{-1}& \targ{}   & \targ{} & \qw 
		\end{quantikz}} \\
		& \overset{\text{Rule~\ref{rule-gateswap}}}{\equiv} \scalebox{0.7}{
			\begin{quantikz}[row sep={0.7cm,between origins}]
				\setwiretype{n}	& \push{A_0}& \push{A_1}& \push{B_1} & \push{A_1} &\push{A_0}  & \\
				\lstick{$q_1$:} & \octrl{1}& \ctrl{1} & \targ{}   & \ctrl{1}  & \octrl{1}  & \qw \\
				\lstick{$q_0$:} & \targ{}  & \targ{}   & \ctrl{-1}& \targ{}   & \targ{} & \qw 
		\end{quantikz}}
		\overset{\text{Rule~\ref{rule-completeness}}}{\equiv}
		\scalebox{0.7}{
			\begin{quantikz}[row sep={0.7cm,between origins}]
				\setwiretype{n}	& \push{B_1'}  & \\
				\lstick{$q_1$:} & \targ{}    & \qw \\
				\lstick{$q_0$:} & \octrl{-1}  & \qw 
		\end{quantikz}}
	\end{aligned}
	\]
	Actually, combining with other rules, we can transform every MPMCT gate into a combination of an MCT gate and X gates (see Rule~\ref{rule-MPMCTtoMCT}).
\end{example}

Let $\mathbf{A}$ and $\mathbf{B}$ be two reversible circuits. We use $\mathbf{A} \Leftrightarrow \mathbf{B}$ to denote that there is a transformation between $\mathbf{A}$ and $\mathbf{B}$ by applying the rules in $\mathcal{RC}$. Obviously, ``$\Leftrightarrow$'' is an equivalence relation. The following is easy to check by the definition.
\begin{itemize}
	\item\label{enum-prop-soundproperty1} If $\mathbf{A} \Leftrightarrow \mathbf{B}$ and $\mathbf{B} \Leftrightarrow \mathbf{C}$, then $\mathbf{A} \Leftrightarrow \mathbf{C}$.
	\item\label{enum-prop-soundproperty2} If $\mathbf{A} \Leftrightarrow \mathbf{B}$ and $\mathbf{C} \Leftrightarrow \mathbf{D}$, then $\mathbf{AC} \Leftrightarrow \mathbf{BD}$.
	\item\label{enum-prop-soundproperty3} If $\mathbf{A} \Leftrightarrow \mathbf{B}$, then $\mathbf{CAD} \Leftrightarrow \mathbf{CBD}$. 	
\end{itemize}

\begin{theorem}[\textbf{Soundness}]\label{thm-soundness}
	If $\mathbf{A} \Leftrightarrow \mathbf{B}$, then $\mathbf{A} \equiv \mathbf{B}$.
\end{theorem}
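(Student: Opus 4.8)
The plan is to reduce the soundness of the transformation relation $\Leftrightarrow$ to the soundness of each individual rule. Since $\Leftrightarrow$ is generated by applying the five rules within arbitrary circuit contexts, and since $\equiv$ (computing the same reversible function) is itself an equivalence relation that is a congruence with respect to circuit concatenation, it suffices to verify that each of the five rules is sound, i.e., that in each rule the left-hand circuit and the right-hand circuit compute the same reversible function. Indeed, the three displayed closure properties of $\Leftrightarrow$ mirror exactly the facts that $\equiv$ is transitive and respects concatenation on both sides, so an induction on the number of rule applications in a derivation $\mathbf{A}\Leftrightarrow\mathbf{B}$ immediately yields $\mathbf{A}\equiv\mathbf{B}$ once the base case — soundness of a single rule in context — is established.

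First I would record the semantic description of a single gate $\mathbf{G}[P,N,q]$ as a function on $\{0,1\}^n$: it flips the bit at position $q$ precisely when every bit in $P$ is $1$ and every bit in $N$ is $0$, leaving all other bits fixed. From this, Rules~\ref{rule-gate-elimn} through~\ref{rule-gateswap} are quick verifications. Rule~\ref{rule-gate-elimn} holds because any gate is its own inverse (flipping twice under the same control condition is the identity). For Rule~\ref{rule-qubit-add-remove}, on any input the two conditions (with $p$ positive, with $p$ negative) are mutually exclusive and their disjunction is exactly the condition of $A$ with $p$ dropped, so exactly one of $A_0,A_1$ fires and the net effect matches $A$. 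For Rule~\ref{rule-gateswap}, the hypothesis $P_1\cap N_2\neq\emptyset$ or $P_2\cap N_1\neq\emptyset$ forces the two control conditions to be simultaneously unsatisfiable, so at most one gate ever acts; moreover neither gate's target lies in the other's control set in the firing case, hence $A$ and $B$ commute.

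The remaining two rules require tracking how conjugation by gates transforms control conditions. For Rule~\ref{rule-swap-eliman} I would use that $ABA$ with $A=\mathrm{CNOT}[p,q]$, $B=\mathrm{CNOT}[q,p]$ is the SWAP gate exchanging bits $p$ and $q$; conjugating a controlled gate $C_1$ by SWAP simply interchanges the roles of $p$ and $q$ wherever they appear as controls or target, and the two bulleted cases on $(P_1,P_2,N_1,N_2)$ are exactly the two ways (positive/positive and negative/negative) this relabeling sends $C_1$ to $C_2$, giving $ABA\,C_1 = C_2\,ABA$ as functions. For Rule~\ref{rule-completeness} I would interpret both sides as functions and compare their action input-by-input; the left side conjugates the palindromic block $B_1\cdots B_m\cdots B_1$ by $A_0A_1$, and one checks that the conjugation under the condition $P\cup Q$ active (versus $N\cup Q$) converts each $B_i$ to $B_i'$ by toggling the polarity of the $q$-control from $P'$/$N$ to $P$/$N'$.

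The main obstacle I expect is Rule~\ref{rule-completeness}: its left-hand side is a long palindromic product of multiply-controlled gates whose control sets are indexed and interleaved, so a direct multiplication of permutations is unwieldy. The cleanest route is a case analysis on an arbitrary input $x\in\{0,1\}^n$ split according to whether the "guard" bits in $P$ are all $1$ and those in $N$ all $0$ (outside of $Q$ and $q$): when the guard fails, every gate on both sides is inert and equality is trivial; when the guard holds, the computation localizes to the bits in $Q\cup\{q\}$, where one verifies that conjugating the symmetric cascade by the pair $A_0A_1$ has precisely the effect of flipping the $q$-polarity throughout, matching the primed cascade. I would present this verification at the level of the induced action on the relevant sub-coordinates rather than expanding the full permutation, as that keeps the argument finite and transparent while still rigorous.
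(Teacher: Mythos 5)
Your proposal is correct and takes essentially the same route as the paper: reduce soundness of $\Leftrightarrow$ to per-rule soundness via the congruence properties of $\equiv$, check Rules 1--3 by inspecting the firing conditions, treat Rule 4 through the fact that $ABA$ is a SWAP whose conjugation relabels $p$ and $q$ (the paper instead runs the corresponding two-case input analysis explicitly), and handle Rule 5 by splitting inputs on the guard over $P$ and $N$ and analyzing the core on $Q\cup\{q\}$. The one point to tighten is that in Rule 5 conjugation by $A_0A_1$ does not turn each $B_i$ into $B_i'$ gate-by-gate; the identity holds only for the whole palindromic product, which is the transposition of $(10\cdots0)$ and $(11\cdots1)$ and is carried by $A_0A_1$-conjugation to the transposition of $(00\cdots0)$ and $(01\cdots1)$, exactly as the paper verifies.
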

\begin{proof}
	It suffices to prove the soundness of the rules in $\mathcal{RC}$. It is easy to check for Rule~\ref{rule-gate-elimn} and \ref{rule-qubit-add-remove}. For the soundness of Rule~\ref{rule-gateswap}, note that if two gates $A$ and $B$ have a common control bit that has different polarities in them, then at most one gate works for an arbitrary input. Hence, changing the order of $A$ and $B$ does not influence the result of the computation.
	
	For the soundness of Rule~\ref{rule-swap-eliman}, let $Q=\{q_1,\dots,q_n\}$. Suppose that $A=\mathrm{CNOT}[q_i,q_j]$, $B=\mathrm{CNOT}[q_j,q_i]$, $C_1 = \mathbf{G}[Q/\{q_i\},\emptyset,q_i]$, and $C_2 = \mathbf{G}[Q/\{q_j\},\emptyset,q_j]$, where $1\leq i < j \leq n$. We show that the two circuits $ABA C_1$ and $C_2 ABA$ compute the same reversible function. The proof for the other case of Rule~\ref{rule-swap-eliman} is similar.
	
	It is easily seen that the circuit $ABA$ swaps the values of $q_i$ and $q_j$.	
	We use $\bar{s}=(s_1 \cdots s_n)\in \{0,1\}^n$ to denote that $s_k$ is the input of $q_k$ ($1\leq k \leq n$), and denote by $[\bar{s}]^{i}_{j}$ the sequence that swaps the $s_i$ and $s_j$ in $\bar{s}$.
	There are two cases need to consider.
	\begin{itemize}
		\item There exists some $q_k\in Q/\{q_j\}$ such that its input $s_k=0$. Hence, $C_2$ has no effect on $\bar{s}$. The result after executing the circuit $C_2 ABA$ on $\bar{s}$ is $[\bar{s}]^{i}_{j}$. Furthermore, $C_1$ has no effect on $[\bar{s}]^{i}_{j}$ either. The result after executing the circuit $ABA C_1$ on $\bar{s}$ is also $[\bar{s}]^{i}_{j}$.
		\item The inputs of all bits in $Q/\{q_j\}$ are 1. Executing the gate $C_2$ on $\bar{s}$ flips $s_j$, we denote the result by $\bar{s}'$. Then executing the circuit $ABA$ on $\bar{s}'$ we obtain $[\bar{s}']^{i}_{j}$. The result after executing the circuit $ABA$ on $\bar{s}$ is $[\bar{s}]^{i}_{j}$. Applying the gate $C_1$ on $[\bar{s}]^{i}_{j}$ we also obtain $[\bar{s}']^{i}_{j}$.
	\end{itemize}
	
	We now prove the soundness of Rule~\ref{rule-completeness}. Let $A_0$, $A_1$, $B_i$, and $B_i'$ ($1\leq i \leq m$) be the gates defined in the rule. There are two cases. 
	
	\textbf{Case\hspace{0.1cm}1:} $P=N=\emptyset$. We use the sequence $(s_0 s_1 \cdots s_m)\in \{0,1\}^{m+1}$ to denote that $s_k$ is the input of $q_k$ ($0\leq k \leq m$).
	Let $\bar{s}_1=(1 a_1\cdots a_m)$ $\bar{s}_2=(1 b_1\cdots b_m)$ where $a_i=0$ and $b_i=1$ ($1\leq i \leq m$). It is easy to verify that the circuit $B_1\cdots B_m\cdots B_1$ exchanges $\bar{s}_1=(10\cdots0)$ and $\bar{s}_2=(11\cdots1)$. That is, it outputs $\bar{s}_2$ (resp. $\bar{s}_1$) when given $\bar{s}_1$ (resp. $\bar{s}_2$), and outputs the same string when given $\bar{s}$ where $\bar{s}\neq \bar{s}_1$ and $\bar{s}\neq \bar{s}_2$. The gates $B_i$ and $B_i'$ ($1\leq i \leq m$) are only different on the polarity of control bit $q_0$, so the circuit $B_1'\cdots B_m'\cdots B_1'$ exchanges $(00\cdots0)$ and $(01\cdots1)$. 
	
	It is easily seen that the two circuits $A_0 A_1$ and $A_1 A_0$ are equivalent and they compute the reversible function
	\[
	f(x) = 
	\begin{cases} 
		00\cdots 0, & \text{if } x = 10\cdots 0, \\
		10\cdots 0, & \text{if } x = 00\cdots 0, \\
		01\cdots 1, & \text{if } x = 11\cdots 1, \\
		11\cdots 1, & \text{if } x = 01\cdots 1, \\
		x, & \text{if } x\notin \{00\cdots 0,10\cdots 0,01\cdots 1,11\cdots 1\}.
	\end{cases}
	\]
	Given a string $(00\cdots 0)$, $A_0 A_1$ converts it to $\bar{s}_1=(10\cdots 0)$, $B_1\cdots B_m\cdots B_1$ converts $\bar{s}_1$ to $\bar{s}_2=(11\cdots 1)$, and $A_1 A_0$ converts $\bar{s}_2$ to $(01\cdots 1)$. Similarly, for the other input string $\bar{s}$, by executing the circuits $A_0 A_1$, $B_1\cdots B_m\cdots B_1$, and $A_1 A_0$, we obtain $(00\cdots 0)$ if $\bar{s}=(01\cdots 1)$, and still $\bar{s}$ otherwise. Therefore, the circuit $A_0 A_1 B_1\cdots B_m\cdots B_1 A_1 A_0$ exchanges $(00\cdots0)$ and $(01\cdots1)$, and the following equivalence is proved
	\[
	A_0 A_1 B_1\cdots B_m\cdots B_1 A_1 A_0 \equiv B_1'\cdots B_m'\cdots B_1'.
	\]
	
	We give an example below to explain the proof. The following circuit $B_1' B_2' B_3' B_2' B_1'$ exchanges $(0000)$ and $(0111)$.
	\[	
	\scalebox{0.7}{\begin{quantikz}[row sep={0.7cm,between origins}]
			\setwiretype{n}	& \push{B_1'} & \push{B_2'}	& \push{B_3'} & \push{B_2'}	& \push{B_1'}   &  \\
			\lstick{$q_3$:}	& \ctrl{1} 	  & \ctrl{1}	& \targ{}	  & \ctrl{1}	& \ctrl{1} 		& \qw \\
			\lstick{$q_2$:}	& \ctrl{1}    & \targ{}		& \octrl{-1}  & \targ{}		& \ctrl{1}      & \qw \\
			\lstick{$q_1$:}	& \targ{}     & \octrl{-1}	& \octrl{-1}  & \octrl{-1}	& \targ{}  	 	& \qw \\
			\lstick{$q_0$:}	& \octrl{-1}  & \octrl{-1}	& \octrl{-1}  & \octrl{-1}	& \octrl{-1}  	& \qw
	\end{quantikz}}
	\]
	The circuit $ B_1 B_2 B_3 B_2 B_1$ exchanges $(1000)$ and $(1111)$.
	\[	
	\scalebox{0.7}{\begin{quantikz}[row sep={0.7cm,between origins}]
			\setwiretype{n}	& \push{B_1} & \push{B_2}	& \push{B_3} & \push{B_2}	& \push{B_1}   &  \\
			\lstick{$q_3$:}	& \ctrl{1} 	  & \ctrl{1}	& \targ{}	  & \ctrl{1}	& \ctrl{1} 		& \qw \\
			\lstick{$q_2$:}	& \ctrl{1}    & \targ{}		& \octrl{-1}  & \targ{}		& \ctrl{1}      & \qw \\
			\lstick{$q_1$:}	& \targ{}     & \octrl{-1}	& \octrl{-1}  & \octrl{-1}	& \targ{}  	 	& \qw \\
			\lstick{$q_0$:}	& \ctrl{-1}  & \ctrl{-1}	& \ctrl{-1}  & \ctrl{-1}	& \ctrl{-1}  	& \qw
	\end{quantikz}}
	\]
	The two circuits $A_0 A_1$ and $A_1 A_0$ compute the function
	\[
	g(x) = 
	\begin{cases} 
		0000, & \text{if } x = 1000, \\
		1000, & \text{if } x = 0000, \\
		0111, & \text{if } x = 1111, \\
		1111, & \text{if } x = 0111, \\
		x, & \text{if } x\notin \{0000,1000,0111,1111\}.
	\end{cases}
	\]
	By concatenating $A_0 A_1$, $B_1 B_2 B_3 B_2 B_1$, and $A_1 A_0$, we obtain the following circuit $A_0 A_1 B_1 B_2 B_3 B_2 B_1 A_1 A_0$ that exchanges $(0000)$ and $(0111)$, which is equivalent to  $B_1' B_2' B_3' B_2' B_1'$.
	\[
	\scalebox{0.75}{\begin{quantikz}[row sep={0.7cm,between origins}]
		\setwiretype{n}	& \push{A_0}& \push{A_1}& \push{B_1} & \push{B_2}	& \push{B_3} & \push{B_2}	& \push{B_1}& \push{A_1} &\push{A_0}  & \\
		\lstick{$q_3$:}	& \octrl{1} & \ctrl{1}  & \ctrl{1}   & \ctrl{1}		& \targ{}	 & \ctrl{1} 	& \ctrl{1} 	& \ctrl{1}   & \octrl{1}  & \qw \\
		\lstick{$q_2$:}	& \octrl{1} & \ctrl{1}  & \ctrl{1}   & \targ{}		& \octrl{-1} & \targ{}		& \ctrl{1} 	& \ctrl{1}   & \octrl{1}  & \qw \\
		\lstick{$q_1$:}	& \octrl{1} & \ctrl{1}  & \targ{}    & \octrl{-1}	& \octrl{-1} & \octrl{-1}	& \targ{}  	& \ctrl{1}   & \octrl{1}  & \qw \\
		\lstick{$q_0$:}	& \targ{}   & \targ{}   & \ctrl{-1}  & \ctrl{-1}	& \ctrl{-1}  & \ctrl{-1}	& \ctrl{-1} & \targ{}    & \targ{}    & \qw
	\end{quantikz}}
	\]
	
	\textbf{Case\hspace{0.1cm}2:} $P\neq \emptyset$ or $Q\neq \emptyset$. The input strings of the circuit can be divided into two sets:
	\begin{enumerate}[(i)]
		\item\label{enu-thm-sound1} the strings that assign 0 to a bit in $P$, or assign 1 to a bit in $N$;
		\item\label{enu-thm-sound2} the string that assign 1 to all bits in $P$, and assign 0 to all bits in $N$.
	\end{enumerate}
If the input is a string in \eqref{enu-thm-sound1}, then at least one control bit from $P\cup Q$ in the gates $A_0$, $A_1$, $B_i$, and $B_i'$ ($1\leq i \leq m$) has the value opposing its polarity. Hence, no gate flips its target bit, and the circuits on both sides of Rule~\ref{rule-completeness} are equivalent to an empty circuit. If the input is a string in \eqref{enu-thm-sound2}, then the control bits from $P\cup Q$ become irrelevant and can be ignored in the circuits. The proof is the same as that of Case\hspace{0.1cm}1.
\end{proof}

\begin{proposition}\label{prop-soundproperty}
	Let $\mathbf{A}, \mathbf{B}$ be two reversible circuits, $q$ a bit that does not occur in $\mathbf{A},\mathbf{B}$, and $\mathbf{A}',\mathbf{B}'$ obtained by adding $q$ as a positive control bit to all gates in $\mathbf{A},\mathbf{B}$. If $\mathbf{A} \Leftrightarrow \mathbf{B}$, then $\mathbf{A}' \Leftrightarrow \mathbf{B}'$.
\end{proposition}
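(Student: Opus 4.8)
The plan is to induct on the length of a derivation witnessing $\mathbf{A}\Leftrightarrow\mathbf{B}$. Since ``$\Leftrightarrow$'' is generated by single applications of the five rules inside arbitrary contexts, and since lifting a circuit (adjoining $q$ as a positive control to every gate) is performed gate by gate and therefore commutes with concatenation, it suffices to treat one rewriting step. Concretely, if a step rewrites $\mathbf{A}=\mathbf{C}L\mathbf{D}$ into $\mathbf{B}=\mathbf{C}R\mathbf{D}$ for some rule instance $L\equiv R$, then $\mathbf{A}'=\mathbf{C}'L'\mathbf{D}'$ and $\mathbf{B}'=\mathbf{C}'R'\mathbf{D}'$, so by the contextual and compositional properties of $\Leftrightarrow$ recorded before the Soundness theorem it is enough to prove $L'\Leftrightarrow R'$ for every instance of every rule. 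Freshness of $q$ (it appears in no gate of $L$ or $R$) guarantees that adjoining $q$ to positive-control sets is always well defined.

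For Rules~\ref{rule-gate-elimn}, \ref{rule-qubit-add-remove}, \ref{rule-gateswap}, and \ref{rule-completeness} I would show that the lifted instance $L'\Leftrightarrow R'$ is again an instance of the \emph{same} rule, so a single application settles it. This is immediate for Rule~\ref{rule-gate-elimn}. For Rules~\ref{rule-qubit-add-remove} and \ref{rule-completeness}, adjoining $q$ to every positive-control set amounts to replacing the base positive set $P$ by $P\cup\{q\}$ uniformly, which preserves the shape of the rule and all its side conditions. For Rule~\ref{rule-gateswap}, the hypothesis $P_1\cap N_2\neq\emptyset$ or $P_2\cap N_1\neq\emptyset$ survives, since $q$ is added only to positive sets and $q\notin N_1\cup N_2$, so the relevant intersections are unchanged.

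The genuine obstacle is Rule~\ref{rule-swap-eliman}: adjoining $q$ turns the two CNOT gates $A=\mathrm{CNOT}[p,q']$, $B=\mathrm{CNOT}[q',p]$ (I rename the rule's second bit $q'$ to avoid clashing with the fresh control $q$) into Toffoli gates $A',B'$, so the lifted identity $A'B'A'C_1'\Leftrightarrow C_2'A'B'A'$ is no longer a Rule~\ref{rule-swap-eliman} instance; indeed $W:=A'B'A'$ is now a \emph{controlled}-SWAP. The key idea is to peel the new control off the SWAP. Using Rule~\ref{rule-qubit-add-remove} I would split each CNOT as $A\Leftrightarrow \bar{A}'A'$ and $B\Leftrightarrow \bar{B}'B'$ into its $q=0$ part $\bar{\,\cdot\,}'$ (carrying $q$ as a \emph{negative} control) and its $q=1$ part (the lifted gate, with $q$ positive). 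Since every negative-$q$ gate commutes with every positive-$q$ gate by Rule~\ref{rule-gateswap}, repeatedly commuting them apart yields $ABA\Leftrightarrow \bar{W}W\Leftrightarrow W\bar{W}$, where $\bar{W}=\bar{A}'\bar{B}'\bar{A}'$; cancelling $WW$ and $\bar{W}\bar{W}$ via Rule~\ref{rule-gate-elimn} then gives $W\Leftrightarrow \bar{W}(ABA)$ and $W\Leftrightarrow (ABA)\bar{W}$.

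With this in hand I would compute the conjugation by writing the left factor of $WC_1'W$ as $\bar{W}(ABA)$ and the right factor as $(ABA)\bar{W}$, so that $WC_1'W\Leftrightarrow \bar{W}\,(ABA)\,C_1'\,(ABA)\,\bar{W}$. The middle block is now an \emph{uncontrolled} SWAP conjugating $C_1'$, and $C_1'=\mathbf{G}[P\cup\{q',q\},N,p]$ has exactly the form demanded by Rule~\ref{rule-swap-eliman} once its common positive set is read as $P\cup\{q\}$; the ordinary rule gives $(ABA)C_1'(ABA)\Leftrightarrow C_2'$. Finally $\bar{W}C_2'\bar{W}\Leftrightarrow C_2'$, because $C_2'$ carries $q$ positively while every gate of $\bar{W}$ carries $q$ negatively, so they commute by Rule~\ref{rule-gateswap} and $\bar{W}\bar{W}$ cancels. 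Chaining these yields $WC_1'W\Leftrightarrow C_2'$, hence $A'B'A'C_1'\Leftrightarrow C_2'A'B'A'$; the second sub-case of Rule~\ref{rule-swap-eliman} is entirely analogous. I expect the bookkeeping in this last rule — verifying that the middle block is a legitimate instance of the uncontrolled rule and that every commutation is licensed by Rule~\ref{rule-gateswap} (which is also what keeps the argument non-circular, grounding the controlled identity in the uncontrolled one) — to be the main technical burden, while the reduction to single rule instances and the four easy rules are routine.
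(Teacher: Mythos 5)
Your proposal is correct and follows essentially the same route as the paper: reduce to single rule instances, note that Rules~\ref{rule-gate-elimn}, \ref{rule-qubit-add-remove}, \ref{rule-gateswap}, and \ref{rule-completeness} lift verbatim (absorbing $q$ into the base positive-control set), and repair Rule~\ref{rule-swap-eliman} by pairing each lifted Toffoli with a negative-$q$ partner via Rule~\ref{rule-qubit-add-remove}, commuting opposite-polarity gates via Rule~\ref{rule-gateswap}, and cancelling via Rule~\ref{rule-gate-elimn} so that the uncontrolled instance of Rule~\ref{rule-swap-eliman} can be applied. The only difference is presentational: the paper argues through a worked example, whereas your factorization $W \Leftrightarrow \bar{W}(ABA) \Leftrightarrow (ABA)\bar{W}$ carries out the same manipulation in closed form for an arbitrary instance.
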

\begin{proof}
	If $\mathbf{A} \Leftrightarrow \mathbf{B}$, then $\mathbf{A}$ can be transformed into $\mathbf{B}$ by the rules in $\mathcal{RC}$. We notice that all other rules are still valid if adding a new positive control bit to the gates except Rule~\ref{rule-swap-eliman}. So we only need to consider the proof for this rule.
	 
	The CNOT gate becomes a Toffoli gate if adding a positive control bit to it, which violates the requirement of Rule~\ref{rule-swap-eliman}. The main idea is that we use auxiliary gates generated by Rule~\ref{rule-gate-elimn}, in which $q$ is a negative control bit, to eliminate the positive control bit $q$ in the Toffoli gates by Rule~\ref{rule-qubit-add-remove}, and then obtain CNOT gates. Rule~\ref{rule-gateswap} can be used to move the gates in the transformation. Finally, Rule~\ref{rule-swap-eliman} can be applied.  The following is an example of the proof. Suppose that we have
	\[
	\scalebox{0.7}{
		\begin{quantikz}[row sep={0.7cm,between origins}]
			& \qw  	   & \qw 	   & \qw		& \octrl{1} & \qw \\
			& \ctrl{1} & \targ{}   & \ctrl{1}   & \targ{}   & \qw \\
			& \targ{}  & \ctrl{-1} & \targ{}    & \ctrl{-1} & \qw 
		\end{quantikz}}
		\Longleftrightarrow
    \scalebox{0.7}{
		\begin{quantikz}[row sep={0.7cm,between origins}]
			& \octrl{1}	& \qw  	   & \qw 	   & \qw	  & \qw \\
			& \ctrl{1}  & \ctrl{1} & \targ{}   & \ctrl{1} & \qw \\
			& \targ{}	& \targ{}  & \ctrl{-1} & \targ{}  & \qw 
	\end{quantikz}}
	\]
	and would like to get 
	\[
	\scalebox{0.7}{
	\begin{quantikz}[row sep={0.7cm,between origins}]
		& \ctrl{2} & \ctrl{2}  & \ctrl{2}	& \ctrl{1}  & \qw \\
		& \qw  	   & \qw 	   & \qw		& \octrl{1} & \qw \\
		& \ctrl{1} & \targ{}   & \ctrl{1}   & \targ{}   & \qw \\
		& \targ{}  & \ctrl{-1} & \targ{}    & \ctrl{-1} & \qw 
	\end{quantikz}}
	\Longleftrightarrow
    \scalebox{0.7}{
	\begin{quantikz}[row sep={0.7cm,between origins}]
		& \ctrl{1}  & \ctrl{2} & \ctrl{2}  & \ctrl{2} & \qw \\
		& \octrl{1}	& \qw  	   & \qw 	   & \qw	  & \qw \\
		& \ctrl{1}  & \ctrl{1} & \targ{}   & \ctrl{1} & \qw \\
		& \targ{}	& \targ{}  & \ctrl{-1} & \targ{}  & \qw 
	\end{quantikz}}
	\]
	We can do the transformation as follows.
	\[
	\scalebox{0.7}{
		\begin{quantikz}[row sep={0.7cm,between origins}]
			& \ctrl{2} & \ctrl{2}  & \ctrl{2}	& \ctrl{1}  & \qw \\
			& \qw  	   & \qw 	   & \qw		& \octrl{1} & \qw \\
			& \ctrl{1} & \targ{}   & \ctrl{1}   & \targ{}   & \qw \\
			& \targ{}  & \ctrl{-1} & \targ{}    & \ctrl{-1} & \qw 
		\end{quantikz}}
	\overset{\text{Rule~\ref{rule-gate-elimn}}}{\Longleftrightarrow}
	\]
	\[
	\scalebox{0.7}{
		\begin{quantikz}[row sep={0.7cm,between origins}]
			& \octrl{2}& \octrl{2}& \octrl{2}& \octrl{2}& \octrl{2}& \octrl{2} & \ctrl{2} & \ctrl{2}  & \ctrl{2}	& \ctrl{1}  & \qw \\
			& \qw  	   & \qw 	   & \qw	& \qw  	   & \qw 	   & \qw& \qw  	   & \qw 	   & \qw		& \octrl{1} & \qw \\
			& \ctrl{1} & \targ{}   & \ctrl{1}& \ctrl{1} & \targ{}   & \ctrl{1} & \ctrl{1} & \targ{}   & \ctrl{1}   & \targ{}   & \qw \\
			& \targ{}  & \ctrl{-1} & \targ{} & \targ{}  & \ctrl{-1} & \targ{} & \targ{}  & \ctrl{-1} & \targ{}    & \ctrl{-1} & \qw 
	\end{quantikz}}
	\overset{\text{Rule~\ref{rule-gateswap}}}{\Longleftrightarrow}
	\]
	\[
	\scalebox{0.7}{
		\begin{quantikz}[row sep={0.7cm,between origins}]
			& \octrl{2}& \octrl{2} & \octrl{2}  & \octrl{2} & \ctrl{2} & \octrl{2} & \ctrl{2}  & \octrl{2} & \ctrl{2}	& \ctrl{1}  & \qw \\
			& \qw  	   & \qw 	   & \qw		& \qw & \qw  	   & \qw & \qw 	   & \qw& \qw		& \octrl{1} & \qw \\
			& \ctrl{1} & \targ{}   & \ctrl{1}	& \ctrl{1}& \ctrl{1} & \targ{}& \targ{}   & \ctrl{1}& \ctrl{1}   & \targ{}   & \qw \\
			& \targ{}  & \ctrl{-1} & \targ{}	& \targ{}  & \targ{}  & \ctrl{-1} & \ctrl{-1} & \targ{}& \targ{}    & \ctrl{-1} & \qw 
	\end{quantikz}}
	\overset{\text{Rule~\ref{rule-qubit-add-remove}}}{\Longleftrightarrow}
	\]
	\[
	\scalebox{0.7}{
	\begin{quantikz}[row sep={0.7cm,between origins}]
		& \octrl{2}& \octrl{2}& \octrl{2} & \qw & \qw  & \qw	& \ctrl{1}  & \qw \\
		& \qw  	   & \qw 	   & \qw& \qw  	   & \qw 	   & \qw		& \octrl{1} & \qw \\
		& \ctrl{1} & \targ{}   & \ctrl{1} & \ctrl{1} & \targ{}   & \ctrl{1}   & \targ{}   & \qw \\
		& \targ{}  & \ctrl{-1} & \targ{} & \targ{}  & \ctrl{-1} & \targ{}    & \ctrl{-1} & \qw 
	\end{quantikz}}
	\overset{\text{Rule~\ref{rule-swap-eliman}}}{\Longleftrightarrow}
	\]
	\[
	\scalebox{0.7}{
		\begin{quantikz}[row sep={0.7cm,between origins}]
			& \octrl{2}& \octrl{2}& \octrl{2}  & \ctrl{1} & \qw & \qw  & \qw	  & \qw \\
			& \qw  	   & \qw 	   & \qw 	& \octrl{1} & \qw  	   & \qw 	   & \qw	 & \qw \\
			& \ctrl{1} & \targ{}   & \ctrl{1} & \ctrl{1} & \ctrl{1} & \targ{}   & \ctrl{1}     & \qw \\
			& \targ{}  & \ctrl{-1} & \targ{} & \targ{} & \targ{}  & \ctrl{-1} & \targ{}     & \qw 
	\end{quantikz}}
	\Longleftrightarrow
    \scalebox{0.7}{
	\begin{quantikz}[row sep={0.7cm,between origins}]
	& \ctrl{1}  & \ctrl{2} & \ctrl{2}  & \ctrl{2} & \qw \\
	& \octrl{1}	& \qw  	   & \qw 	   & \qw	  & \qw \\
	& \ctrl{1}  & \ctrl{1} & \targ{}   & \ctrl{1} & \qw \\
	& \targ{}	& \targ{}  & \ctrl{-1} & \targ{}  & \qw 
	\end{quantikz}}
	\]
The last transformation is obtained by applying Rule~\ref{rule-gate-elimn}, \ref{rule-qubit-add-remove}, and \ref{rule-gateswap} again.
\end{proof}

\subsection{Derived rules}
The following rules are all derivable from $\mathcal{RC}$ and are commonly employed in reversible circuit optimization. We utilize them to simplify the proof of completeness.

\begin{enumerate}[\textbf{Rule} 1.]
	\setcounter{enumi}{5}
	\item\label{rule-binary-generate} Let $A =\mathbf{G}[P,N,q]$, $Q=\{q_1,\dots,q_m\}$ a set of bits such that $Q\cap(P\cup N \cup \{q\}) =\emptyset$, and $\{Q_0,Q_1,\dots,Q_{2^m-1}\}$ the power set of $Q$. For each $0\leq i \leq {2^m-1}$, define 
	\[
	A_{Q_i}=\mathbf{G}[P\cup Q_i,N\cup Q/Q_i,q].
	\]
	Then	
	\[
	 A \equiv A_{Q_0}A_{Q_1}\cdots A_{Q_{2^m-1}}.
	\]
	For example,
	\[
	\scalebox{0.7}{
		\begin{quantikz}[row sep={0.7cm,between origins}]
			& \qw  	   & \qw \\
			& \qw  	   & \qw \\
			& \ctrl{1} & \qw \\
			& \targ{}  & \qw 
		\end{quantikz}}
		\equiv
		\scalebox{0.7}{
		\begin{quantikz}[row sep={0.7cm,between origins}]
			& \octrl{1}	& \octrl{1} & \ctrl{1}& \ctrl{1} & \qw \\
			& \octrl{1}	& \ctrl{1} & \octrl{1}& \ctrl{1} & \qw \\
			& \ctrl{1}  & \ctrl{1} & \ctrl{1}& \ctrl{1}	& \qw \\
			& \targ{}	& \targ{}  & \targ{} & \targ{}  & \qw 
	\end{quantikz}}
	\]
	
	\item\label{rule-gateswap-derived} If $A =\mathbf{G}[P_1,N_1,q]$, $B =\mathbf{G}[P_2,N_2,q]$ are two gates, then
	\[
	A B \equiv B A.
	\]
	For example,
	\[
	\scalebox{0.7}{
	\begin{quantikz}[row sep={0.7cm,between origins}]
		& \octrl{1}  & \qw		 &  \qw \\
		& \ctrl{2}   & \ctrl{2}  &  \qw \\
		& \wave 	 &           &  \qw \\
		& \ctrl{1}   & \octrl{1} &  \qw \\
		& \targ{}    & \targ{}   &  \qw
	\end{quantikz}}
	\equiv
	\scalebox{0.7}{
	\begin{quantikz}[row sep={0.7cm,between origins}]
		& \qw		 & \octrl{1} &  \qw \\
		& \ctrl{2}   & \ctrl{2}  &  \qw \\
		& \wave 	 &           &  \qw \\
		& \octrl{1}  & \ctrl{1}  &  \qw \\
		& \targ{}    & \targ{}   &  \qw
	\end{quantikz}}
	\]

	\item\label{rule-MPMCTtoMCT} If $A =\mathbf{G}[P,N,q]$, $B =\mathbf{G}[P\cup N,\emptyset,q]$, where the set $N=\{q_1,\dots,q_m\}$, then
	\[
	A \equiv \mathrm{X}[q_1] \cdots \mathrm{X}[q_m] B\, \mathrm{X}[q_1] \cdots \mathrm{X}[q_m].
	\]
	For example, 
	\[
	\scalebox{0.7}{
		\begin{quantikz}[row sep={0.7cm,between origins}]
			& \octrl{1}    & \qw \\
			& \ctrl{2}     & \qw \\
			& \wave  	  & \qw \\
			& \octrl{1}    & \qw \\
			& \targ{}      & \qw
		\end{quantikz}}
		\equiv
		\scalebox{0.7}{
		\begin{quantikz}[row sep={0.7cm,between origins}]
			& \gate{X}  & \ctrl{1}   & \gate{X} & \qw \\
			& \qw		& \ctrl{2}   & \qw 		& \qw \\
			& \qw 		& \wave  	 & \qw  	& \qw \\
			& \gate{X}  & \ctrl{1}   & \gate{X} & \qw \\
			& \qw  		& \targ{}	 & \qw 		& \qw
	\end{quantikz}}
	\]

	\item\label{rule-swap-gate-changectrl}  If $A =\mathbf{G}[P_1,N_1,p]$,  $B_1 =\mathbf{G}[P_2\cup\{p\},N_2,q]$, and $B_2 =\mathbf{G}[P_2,N_2\cup\{p\},q]$ are gates such that $P_1\subseteq P_2$ and $N_1\subseteq N_2$, then
		\[
		A B_1 \equiv B_2 A.
		\]
		For example,
		\[
		\scalebox{0.7}{
		\begin{quantikz}[row sep={0.7cm,between origins}]
			& \qw		 & \ctrl{1}  &  \qw \\
			& \ctrl{2}   & \ctrl{2}  &  \qw \\
			& \wave 	 &           &  \qw \\
			& \targ{}    & \ctrl{1}  &  \qw \\
			& \qw	     & \targ{}   &  \qw
		\end{quantikz}}
		\equiv
		\scalebox{0.7}{
		\begin{quantikz}[row sep={0.7cm,between origins}]
			& \ctrl{1}   & \qw		 &  \qw \\
			& \ctrl{2}   & \ctrl{2}  &  \qw \\
			& \wave 	 &           &  \qw \\
			& \octrl{1}  & \targ{}   &  \qw \\
			& \targ{}  	 & \qw    	 &  \qw
		\end{quantikz}}
		\]
	
	\item\label{rule-ABA-BAB} If $A =\mathbf{G}[P\cup\{p\},\emptyset,q]$, $B =\mathbf{G}[P\cup\{q\},\emptyset,p]$ are two gates such that $p \neq q$, then
	\[
	ABA \equiv BAB.
	\]
	For example,
	\[
	\scalebox{0.7}{
		\begin{quantikz}[row sep={0.7cm,between origins}]
			& \ctrl{2} &\ctrl{2}  & \ctrl{2} & \qw \\
			& \wave    &   		  &			 & \qw \\
			& \ctrl{1} &\targ{}   & \ctrl{1} & \qw \\
			& \targ{}  &\ctrl{-1} & \targ{}  & \qw 
		\end{quantikz}}
		\equiv
		\scalebox{0.7}{
		\begin{quantikz}[row sep={0.7cm,between origins}]
			&\ctrl{2} & \ctrl{2} &\ctrl{2}  & \qw \\
			&\wave   &   		 &			& \qw \\
			&\targ{}  & \ctrl{1} &\targ{}   & \qw \\
			&\ctrl{-1}& \targ{}  &\ctrl{-1} & \qw 
		\end{quantikz}}
		\]
\end{enumerate}

Rule~\ref{rule-binary-generate} can be easily derived from Rule~\ref{rule-qubit-add-remove}. For Rule~\ref{rule-gateswap-derived}, if $A,B$ have a common control bit that has different polarities in the two gates, respectively, then Rule~\ref{rule-gateswap} can be applied directly. 
Otherwise, we first use Rule~\ref{rule-binary-generate} to expand the two gates $A,B$ to a sequence of gates such that they have the same control bits, then use Rule~\ref{rule-gateswap} to move these gates, and next use Rule~\ref{rule-binary-generate} again to reduce these gates, as shown in the following example.
\[
\scalebox{0.7}{
	\begin{quantikz}[row sep={0.7cm,between origins}]
		&\qw & \ctrl{1}    & \qw \\
		&\qw & \ctrl{1}    & \qw \\
		& \ctrl{1} & \ctrl{1}   & \qw \\
		& \targ{}& \targ{}    & \qw 
	\end{quantikz}}
	\Longleftrightarrow
	\scalebox{0.7}{
	\begin{quantikz}[row sep={0.7cm,between origins}]
	&\ctrl{1} &\octrl{1} &\ctrl{1} & \octrl{1} & \ctrl{1}    & \qw \\
	&\ctrl{1} &\ctrl{1} &\octrl{1} & \octrl{1} & \ctrl{1}    & \qw \\
	&\ctrl{1} &\ctrl{1} & \ctrl{1} & \ctrl{1} & \ctrl{1}   & \qw \\
	& \targ{} & \targ{} & \targ{} & \targ{}& \targ{}    & \qw 
	\end{quantikz}}
	\Longleftrightarrow
	\scalebox{0.7}{
	\begin{quantikz}[row sep={0.7cm,between origins}]
	& \ctrl{1}  &\qw   & \qw \\
	& \ctrl{1}   &\qw & \qw \\
	& \ctrl{1} & \ctrl{1}   & \qw \\
	& \targ{} & \targ{}    & \qw 
	\end{quantikz}
}
\]

Rule~\ref{rule-MPMCTtoMCT} decomposes an MPMCT gate to a combination of an MCT gate and X gates, where all negative control bits become positive control bits by adding X gates before and after. To derive the rule, we first use Rule~\ref{rule-qubit-add-remove} to expand the X gate, then apply Rule~\ref{rule-completeness}, and next use Rule~\ref{rule-gateswap} and \ref{rule-gate-elimn} to move and delete gates, respectively, as shown in the following example.
\[
\scalebox{0.7}{
	\begin{quantikz}[row sep={0.7cm,between origins}]
		& \qw & \ctrl{1}  & \qw & \qw \\
		& \gate{X} & \ctrl{1}  & \gate{X}& \qw \\
		& \qw & \targ{}  & \qw & \qw 
	\end{quantikz}}
	\overset{\text{Rule~\ref{rule-qubit-add-remove}}}{\Longleftrightarrow}
	\scalebox{0.7}{
	\begin{quantikz}[row sep={0.7cm,between origins}]
	& \octrl{1} & \ctrl{1} & \ctrl{1} & \ctrl{1} & \octrl{1}  & \qw \\
	& \targ{} & \targ{} & \ctrl{1} 	& \targ{} & \targ{}  & \qw \\
	& \qw & \qw & \targ{}  & \qw & \qw  & \qw 
	\end{quantikz}}
	\overset{\text{Rule~\ref{rule-qubit-add-remove}}}{\Longleftrightarrow}
\]
\[
\scalebox{0.7}{
	\begin{quantikz}[row sep={0.7cm,between origins}]
		& \octrl{1} & \ctrl{1} & \ctrl{1} & \ctrl{1} & \ctrl{1} & \ctrl{1} & \octrl{1}  & \qw \\
		& \targ{} & \targ{} & \targ{} & \ctrl{1} 	& \targ{} & \targ{} & \targ{}  & \qw \\
		& \qw & \octrl{-1} & \ctrl{-1} & \targ{} & \ctrl{-1} & \octrl{-1} & \qw & \qw
	\end{quantikz}}
	\overset{\text{Rule~\ref{rule-completeness}}}{\Longleftrightarrow}
	\scalebox{0.7}{
	\begin{quantikz}[row sep={0.7cm,between origins}]
 	& \octrl{1} & \ctrl{1} & \octrl{1}   & \qw \\
	& \targ{} & \octrl{1} 	& \targ{}  & \qw \\
	& \qw & \targ{}  & \qw & \qw  
	\end{quantikz}}
\]
\[
\overset{\text{Rule~\ref{rule-gateswap}}}{\Longleftrightarrow}
\scalebox{0.7}{
	\begin{quantikz}[row sep={0.7cm,between origins}]
		& \ctrl{1} & \octrl{1} & \octrl{1}   & \qw \\
		& \octrl{1} & \targ{} 	& \targ{}  & \qw \\
		& \targ{}  & \qw & \qw   & \qw 
	\end{quantikz}}
	\overset{\text{Rule~\ref{rule-gate-elimn}}}{\Longleftrightarrow}
	\scalebox{0.7}{
	\begin{quantikz}[row sep={0.7cm,between origins}]
	& \ctrl{1}    & \qw \\
	& \octrl{1}   & \qw \\
	& \targ{}     & \qw 
	\end{quantikz}}
\]

Rule~\ref{rule-swap-gate-changectrl} can be derived from Rule~\ref{rule-gate-elimn}, \ref{rule-gateswap}, \ref{rule-binary-generate}, and \ref{rule-MPMCTtoMCT}. This rule is used to change the polarity of a control bit by an X gate in the proof of completeness. For example,
\[
	\scalebox{0.7}{
	\begin{quantikz}[row sep={0.7cm,between origins}]
		&\qw & \ctrl{1} &\ctrl{1}  & \ctrl{1} & \qw \\
		&\gate{X}	& \octrl{2} &\octrl{2}  & \octrl{2} & \qw \\
		& \wave    &   	&	  &			 & \qw \\
		&\qw	& \ctrl{1} &\targ{}   & \ctrl{1} & \qw \\
		&\qw	& \targ{}  &\octrl{-1} & \targ{}  & \qw 
	\end{quantikz}}
	\Longleftrightarrow
	\scalebox{0.7}{
	\begin{quantikz}[row sep={0.7cm,between origins}]
		& \ctrl{1} &\ctrl{1}  & \ctrl{1} &\qw & \qw \\
		& \ctrl{2} & \ctrl{2} &\ctrl{2}  & \gate{X} & \qw \\
		& \wave  &   		 &			& \qw &\qw\\
		& \ctrl{1}&\targ{}  & \ctrl{1}  &\qw& \qw \\
		& \targ{} &  \octrl{-1} &\targ{} & \qw & \qw 
\end{quantikz}}
\]

For Rule~\ref{rule-ABA-BAB}, first by Rule~\ref{rule-gate-elimn} and \ref{rule-swap-eliman} we can conclude that
\[
\scalebox{0.7}{
	\begin{quantikz}[row sep={0.7cm,between origins}]
		& \ctrl{1} &\targ{}   & \ctrl{1} & \qw \\
		& \targ{}  &\ctrl{-1} & \targ{}  & \qw 
	\end{quantikz}}
	\overset{\text{Rule~\ref{rule-gate-elimn}}}{\Longleftrightarrow}
	\scalebox{0.7}{
	\begin{quantikz}[row sep={0.7cm,between origins}]
		& \ctrl{1} &\targ{}   & \ctrl{1} &\targ{}   &\targ{} & \qw \\
		& \targ{}  &\ctrl{-1} & \targ{}  &\ctrl{-1} &\ctrl{-1} & \qw 
	\end{quantikz}}
\]
\[
\overset{\text{Rule~\ref{rule-swap-eliman}}}{\Longleftrightarrow}
	\scalebox{0.7}{
	\begin{quantikz}[row sep={0.7cm,between origins}]
		& \ctrl{1} & \ctrl{1} &\targ{}   & \ctrl{1} &\targ{}   & \qw \\
		& \targ{}  & \targ{}  &\ctrl{-1} & \targ{}  &\ctrl{-1} & \qw 
	\end{quantikz}}
	\overset{\text{Rule~\ref{rule-gate-elimn}}}{\Longleftrightarrow}
	\scalebox{0.7}{
	\begin{quantikz}[row sep={0.7cm,between origins}]
		&\targ{}  & \ctrl{1} &\targ{}   & \qw \\
		&\ctrl{-1}& \targ{}  &\ctrl{-1} & \qw 
	\end{quantikz}}
\]
Then by Proposition~\ref{prop-soundproperty} we obtain Rule~\ref{rule-ABA-BAB}.

In the template-based circuit optimization method, a pair of circuits $(\mathbf{A},\mathbf{B})$ forms a template, where $\mathbf{A}$ typically has a larger size than $\mathbf{B}$. During circuit optimization, a subcircuit $\mathbf{A}$ within a circuit $\mathbf{C}$ can be replaced by $\mathbf{B}$ to reduce the cost. To ensure the optimized circuit remains functionally equivalent to $\mathbf{C}$, the two circuits $\mathbf{A}$ and $\mathbf{B}$ must be equivalent. By Theorem~\ref{thm-soundness}, the correctness of a template can be verified by checking whether $\mathbf{A}$ and $\mathbf{B}$ can be transformed into each other according to the rules in $\mathcal{RC}$. Furthermore, $\mathcal{RC}$ can be utilized to derive new templates.

\begin{example}\label{exa-template}
	The correctness of template~3.2 in Figure~4 of~\cite{Miller2003transformation} can be verified by the following transformation.
	\[
	\scalebox{0.75}{\begin{quantikz}[row sep={0.7cm,between origins}]
			& \qw	   & \ctrl{2} & \ctrl{1} & \qw \\
			& \ctrl{1} & \qw  	  & \targ{}  & \qw \\
			& \targ{}  & \targ{}  & \qw		 & \qw
	\end{quantikz}}
	\overset{\text{Rule~\ref{rule-qubit-add-remove}}}{\Longleftrightarrow}
	\scalebox{0.75}{\begin{quantikz}[row sep={0.7cm,between origins}]
			& \octrl{1}	 & \ctrl{1}	 & \ctrl{1}  & \ctrl{1}	& \ctrl{1} & \qw \\
			& \ctrl{1} 	 & \ctrl{1}	 & \octrl{1} & \ctrl{1}	& \targ{}  & \qw \\
			& \targ{} 	 & \targ{}   & \targ{}   & \targ{}	& \qw	   & \qw
	\end{quantikz}}
	\overset{\text{Rule~\ref{rule-gateswap}}}{\Longleftrightarrow}
	\]
	\[
	\scalebox{0.75}{\begin{quantikz}[row sep={0.7cm,between origins}]
			& \octrl{1}	 & \ctrl{1}	 & \ctrl{1}  & \ctrl{1}	& \ctrl{1} & \qw \\
			& \ctrl{1} 	 & \octrl{1} & \ctrl{1}  & \ctrl{1}	& \targ{}  & \qw \\
			& \targ{} 	 & \targ{}   & \targ{}   & \targ{}	& \qw	   & \qw
	\end{quantikz}}
	\overset{\text{Rule~\ref{rule-gate-elimn}}}{\Longleftrightarrow}
	\scalebox{0.75}{\begin{quantikz}[row sep={0.7cm,between origins}]
			& \octrl{1}	 & \ctrl{1}	 & \ctrl{1} & \qw \\
			& \ctrl{1} 	 & \octrl{1} & \targ{}  & \qw \\
			& \targ{} 	 & \targ{}   & \qw	   & \qw
	\end{quantikz}}
	\overset{\text{Rule~\ref{rule-swap-gate-changectrl}}}{\Longleftrightarrow}
	\]
	\[
	\scalebox{0.75}{\begin{quantikz}[row sep={0.7cm,between origins}]
			& \octrl{1}	 & \ctrl{1}	& \ctrl{1} & \qw \\
			& \ctrl{1} 	 & \targ{} 	& \ctrl{1} & \qw \\
			& \targ{} 	 & \qw		& \targ{}  & \qw
	\end{quantikz}}
	\overset{\text{Rule~\ref{rule-gateswap}}}{\Longleftrightarrow}
	\scalebox{0.75}{\begin{quantikz}[row sep={0.7cm,between origins}]
			& \ctrl{1} & \octrl{1}	& \ctrl{1} & \qw \\
			& \targ{}  & \ctrl{1} 	& \ctrl{1} & \qw \\
			& \qw	  & \targ{} 	& \targ{}  & \qw
	\end{quantikz}}
	\overset{\text{Rule~\ref{rule-qubit-add-remove}}}{\Longleftrightarrow}
	\scalebox{0.75}{\begin{quantikz}[row sep={0.7cm,between origins}]
			& \ctrl{1} & \qw	  & \qw \\
			& \targ{}  & \ctrl{1} & \qw \\
			& \qw	   & \targ{}  & \qw
	\end{quantikz}}
	\]
\end{example}

\section{Completeness}\label{sec-completeness}
In this section, we show that any two equivalent reversible circuits can be transformed into each other by applying the rules in $\mathcal{RC}$. The proof is based on the canonical forms of reversible circuits, where every reversible circuit has a unique canonical form.

\subsection{Canonical form}
The $n$-hypercube graph is an undirected graph defined on the set $\{0,1\}^n$ such that there is an edge between two nodes $a,b$ iff they only differ in exactly one bit. All hypercube graphs are Hamiltonian~\cite{Skiena1991implementing}. Let 
\[
\mathbb{H}=(a_0,a_1,\dots,a_{2^n-1})
\]
be a Hamiltonian path of an $n$-hypercube graph. Every $n$-ary reversible function $f$ defines a permutation
\[
\begin{pmatrix}
	a_0 & a_1 & \dots & a_{2^n-2} & a_{2^n-1} \\
	b_0 & b_1 & \dots & b_{2^n-2} & b_{2^n-1} \\
\end{pmatrix}
\]
such that $f(a_i)=b_i$ ($0\leq i < 2^n$). To simplify notation, we use $(b_0,b_1,\dots,b_{2^n-1})_{\mathbb{H}}$ to denote the permutation whose first row is given by $\mathbb{H}$.
Define
\[
\Delta_\mathbb{H}=\{M_0,M_1,\dots,M_{2^n-2}\}
\]
to be the set of $n$-bit MPMCT gates where for each $0\leq i \leq 2^n-2$, the gate $M_i$ exchanges $a_i$ and $a_{i+1}$, namely, the polarities of the control bits of $M_i$ coincide with the values of the common bits of $a_i$ and $a_{i+1}$, i.e., $q_j$ ($1\leq j \leq n$) is a positive (resp. negative) control bit of $M_i$ iff the $j$-th bit of both $a_i$ and $a_{i+1}$ is 1 (resp. 0). It is easily seen that $M_i$ defines the permutation 
\[
(a_0,\dots,a_{i-1},a_{i+1},a_{i},a_{i+2},\cdots,a_{2^n-1})_{\mathbb{H}}.
\]

Let $\mathbf{C}=M_i M_{i+1} \cdots M_{i+j-1}$ be a sequence of consecutive gates from $\Delta_\mathbb{H}$ ($0\leq i\leq 2^n-2$, $j\geq 1$). By the definition of $\Delta_\mathbb{H}$, the circuit $\mathbf{C}$ defines the following permutation
\[
(a_0,\dots,a_{i-1},\underbrace{a_{i+j},a_{i},\dots,a_{i+j-1},}_{\text{cyclic shift by 1 position}} a_{i+j+1},\dots,a_{2^n-1})_{\mathbb{H}},
\]
which maps $a_i$ to $a_{i+j}$, and $a_k$ to $a_{k-1}$ ($i+1\leq k \leq i+j$). Therefore, if a reversible circuit $\mathbf{C}'$ defines a permutation $(b_0,b_1,\dots,b_{2^n-1})_{\mathbb{H}}$, then the circuit $\mathbf{C}\mathbf{C}'$ defines the permutation
\[
(b_0,\dots,b_{i-1},\underbrace{b_{i+j},b_{i},\dots,b_{i+j-1},}_{\text{cyclic shift by 1 position}} b_{i+j+1},\dots,b_{2^n-1})_{\mathbb{H}}.
\]

\begin{definition}[\textbf{Canonical form}]\label{def-canonicalform}
	An $n$-bit reversible circuit is in the canonical form based on $\mathbb{H}$ if it has the form $\mathbf{C}_m\mathbf{C}_{m-1}\cdots\mathbf{C}_1\mathbf{C}_0$ such that
	\begin{enumerate}[(1)]
		\item for each $0\leq i \leq m$, the subcircuit
		\[
		\mathbf{C}_i= M_x M_{x+1} M_{x+2} \cdots M_{x+k}
		\] 
		is a sequence of consecutive gates from $\Delta_\mathbb{H}$ ($0\leq x \leq x+k\leq 2^n-2$),
		\item for $\mathbf{C}_i= M_x \cdots M_{x+k}$ and $\mathbf{C}_j= M_y \cdots M_{y+l}$, if $i<j$, then $x < y$.
	\end{enumerate}
\end{definition}

By the above definition, we immediately see the following fact.

\begin{fact}\label{fact-canonicalform}
	Let $\mathbf{C}_m\mathbf{C}_{m-1}\cdots\mathbf{C}_1\mathbf{C}_0$ be an arbitrary reversible circuit in the canonical form based on $\mathbb{H}$. Then
	\begin{enumerate}[(1)]
		\item\label{fact-circuitnumber} $m\leq 2^n-2$; 
		\item\label{fact-firstgate} the first gate $M_x$ of $\mathbf{C}_i$ does not occur in $\mathbf{C}_m\cdots \mathbf{C}_{i+1}$, and for every gate $M_z$ in $\mathbf{C}_m\cdots \mathbf{C}_{i+1}$, $z>x$;
		\item\label{fact-gatenumber} the gate $M_i$ ($0\leq i \leq 2^n-2$) occurs at most $i+1$ times in the canonical form.
	\end{enumerate}
\end{fact}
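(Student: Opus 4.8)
The plan is to reduce all three claims to a single structural observation: the starting indices of the blocks form a strictly increasing sequence. Write each block as $\mathbf{C}_i = M_{x_i} M_{x_i+1}\cdots M_{x_i+k_i}$, so that $x_i$ is the index of the first gate of $\mathbf{C}_i$ and the gate indices occurring in $\mathbf{C}_i$ are exactly the consecutive integers in the range $[x_i, x_i+k_i]$. Since every $M_{x_i}\in\Delta_\mathbb{H}=\{M_0,\dots,M_{2^n-2}\}$, each $x_i$ lies in $\{0,1,\dots,2^n-2\}$, and condition~(2) of Definition~\ref{def-canonicalform} reads precisely as $x_0 < x_1 < \cdots < x_m$. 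All three assertions are consequences of this monotonicity together with the fact that each index occurs at most once inside a single block.

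For part~(1) I would simply count: the $m+1$ starting indices $x_0,\dots,x_m$ are distinct elements of the $(2^n-1)$-element set $\{0,1,\dots,2^n-2\}$, so $m+1 \le 2^n-1$, which gives $m \le 2^n-2$. For part~(2), fix $i$ and take any block $\mathbf{C}_j$ with $j>i$. By the ordering, $x_j > x_i$, and every gate of $\mathbf{C}_j$ has index in $[x_j, x_j+k_j]$, hence index $\ge x_j > x_i$. As this holds for all $j>i$, every gate $M_z$ appearing in $\mathbf{C}_m\cdots\mathbf{C}_{i+1}$ satisfies $z > x_i$; in particular $z \neq x_i$, so the first gate $M_{x_i}$ of $\mathbf{C}_i$ cannot occur in $\mathbf{C}_m\cdots\mathbf{C}_{i+1}$. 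This yields both statements of part~(2) simultaneously.

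The only claim needing a little counting beyond monotonicity is part~(3). Here I would first note that within a single block $\mathbf{C}_j$ the gate indices are the consecutive integers $x_j, x_j+1, \dots, x_j+k_j$, which are pairwise distinct, so any fixed gate $M_i$ occurs \emph{at most once} in each block. Hence the total number of occurrences of $M_i$ equals the number of blocks that contain it. A block $\mathbf{C}_j$ contains $M_i$ iff $x_j \le i \le x_j+k_j$, which forces $x_j \le i$. Since the $x_j$ are distinct and, when $x_j \le i$, lie in $\{0,1,\dots,i\}$, there are at most $i+1$ such blocks, and therefore at most $i+1$ occurrences of $M_i$.

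I do not expect any genuine obstacle, which is exactly why the authors flag this as immediate from the definition. The one place demanding care is part~(3), where two facts must be combined: that each block contributes at most one copy of a given gate (because its indices are consecutive, hence distinct) and that only blocks whose starting index is at most $i$ can contain $M_i$ (because the starting index is the smallest index in the block). Overlooking either half would break the $i+1$ bound.
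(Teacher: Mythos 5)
Your proof is correct and is exactly the argument the paper leaves implicit when it says the fact follows ``immediately'' from Definition~\ref{def-canonicalform}: everything reduces to the strict monotonicity $x_0<x_1<\cdots<x_m$ of the block starting indices within $\{0,\dots,2^n-2\}$, plus the observation that each block's indices are consecutive and hence contain any given $i$ at most once. No gaps; your part~(3) in particular spells out the counting that the authors omit.
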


\begin{remark}
	The choice of the Hamiltonian path $\mathbb{H}$ is arbitrary; it has no influence on the proof of completeness. Since every $n$-bit reversible circuit computes a reversible function, and every $n$-ary reversible function defines a permutation on $\{0,1\}^n$. So if $\mathbb{H}$ is given, we can construct a unique element moving process on $\mathbb{H}$ to get the permutation. Each moving step is realized by an $n$-bit MPMCT gate.
\end{remark}

The gates in $\Delta_\mathbb{H}$ are universal, as is shown in the following proposition.

\begin{proposition}[\textbf{Universality}]
	Every $n$-ary reversible function can be computed by a unique $n$-bit reversible circuit that is in the canonical form based on $\mathbb{H}$.
\end{proposition}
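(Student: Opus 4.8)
The plan is to identify each $n$-ary reversible function $f$ with the permutation $(b_0,\dots,b_{2^n-1})_{\mathbb{H}}$ it induces, where $b_k=f(a_k)$, and to prove existence and uniqueness \emph{simultaneously} by induction on the smallest index at which this permutation is nontrivial. Write $N=2^n$ and, for a circuit $\mathbf{C}$, let $f_{\mathbf{C}}$ denote the function it computes. The engine of the argument is the cyclic-shift property established just before Definition~\ref{def-canonicalform}: premultiplying a circuit by a consecutive run $M_x M_{x+1}\cdots M_{x+k}$ rotates the block of positions $[x,\,x+k+1]$ of its second row to the right, moving the entry at position $x+k+1$ into position $x$ and leaving every position outside $[x,\,x+k+1]$ untouched.

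First I would pin down the forced innermost run. Let $\mathbf{C}_m\cdots\mathbf{C}_1\mathbf{C}_0$ be any circuit in the canonical form based on $\mathbb{H}$ realizing $(b_0,\dots,b_{N-1})_{\mathbb{H}}$, and let $x_0$ be the start index of $\mathbf{C}_0$, which by Definition~\ref{def-canonicalform}(2) is the smallest start index occurring. Every gate of every run sits in a block whose least position equals that run's start index, so no gate touches a position below $x_0$; hence $b_p=a_p$ for all $p<x_0$. Building the second row from the innermost run outward via the cyclic-shift property, $\mathbf{C}_0$ places $a_{e_0}$ into position $x_0$, where $e_0=x_0+k_0+1$ is the end of its block, and every later run starts strictly above $x_0$ and thus never disturbs position $x_0$. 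Consequently $b_{x_0}=a_{e_0}$ with $e_0>x_0$, so $x_0$ is exactly the first position with $b_{x_0}\neq a_{x_0}$, and $e_0$ is forced as the unique index with $a_{e_0}=b_{x_0}$. Therefore $\mathbf{C}_0=M_{x_0}M_{x_0+1}\cdots M_{e_0-1}$ is completely determined by $f$.

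Next I would reduce. Since $f_{\mathbf{C}_m\cdots\mathbf{C}_1\mathbf{C}_0}=f_{\mathbf{C}_0}\circ f_{\mathbf{C}_m\cdots\mathbf{C}_1}=f$, the sub-circuit $\mathbf{C}_m\cdots\mathbf{C}_1$ must realize $f_{\mathbf{C}_0}^{-1}\circ f$, whose second row $b'$ satisfies $b'_k=f_{\mathbf{C}_0}^{-1}(b_k)$. Using the explicit action of $\mathbf{C}_0$ one checks $b'_p=a_p$ for $p<x_0$ and $b'_{x_0}=f_{\mathbf{C}_0}^{-1}(a_{e_0})=a_{x_0}$, so $b'$ fixes positions $0,\dots,x_0$ and its first nontrivial position is strictly larger than $x_0$. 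As $\mathbf{C}_m\cdots\mathbf{C}_1$ is again in canonical form, the induction hypothesis determines $\mathbf{C}_1,\dots,\mathbf{C}_m$ uniquely, and their start indices exceed $x_0$, preserving condition (2); this gives uniqueness. For existence I run the same recursion constructively: given $f\neq\mathrm{id}$, set $x_0$ to the first nontrivial position, let $e_0$ be the index with $a_{e_0}=b_{x_0}$, put $\mathbf{C}_0=M_{x_0}\cdots M_{e_0-1}$, recurse on $f_{\mathbf{C}_0}^{-1}\circ f$ to obtain $\mathbf{C}_1\cdots\mathbf{C}_m$, and output $\mathbf{C}_m\cdots\mathbf{C}_1\mathbf{C}_0$; the cyclic-shift property confirms it computes $f$, while the strictly increasing first-nontrivial positions guarantee the canonical ordering and termination. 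Because a permutation fixing positions $0,\dots,N-2$ fixes everything, the recursion halts after at most $N-1$ runs, recovering the bound $m\leq 2^n-2$ of Fact~\ref{fact-canonicalform}.

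The main obstacle I anticipate is the second step: rigorously justifying that the innermost run is forced. This rests entirely on the invariant that, in any canonical form, the blocks rotated by the runs never reach below the minimal start index and that position $x_0$ is written exactly once; making this precise requires careful tracking of how the successive block rotations compose, and it is precisely this fact that makes both existence and uniqueness fall out of a single induction.
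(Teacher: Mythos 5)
Your proof is correct and follows essentially the same route as the paper: a greedy/recursive construction that fixes the positions of the permutation one at a time, from the left, using consecutive runs from $\Delta_\mathbb{H}$ and the cyclic-shift property. The only difference is that you spell out the uniqueness argument (showing the innermost run is forced because its start index must equal the first position where $b_p\neq a_p$ and its end index is then determined by $b_{x_0}$), a step the paper dismisses as ``easy to check'' from the construction.
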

\begin{proof}
	Let $f$ be an arbitrary $n$-ary reversible function that defines a permutation $(a_{x_0},a_{x_1},\dots,a_{x_{2^n-1}})_{\mathbb{H}}$. We construct a reversible circuit that computes $f$ in the canonical form based on $\mathbb{H}$.
	
	Set $\mathbf{C}_0= M_0 M_1 \cdots M_{x_0-1}$. It defines the permutation 
	\[
	(a_{x_0},b_1,b_2,\dots,b_{2^n-1})_{\mathbb{H}},
	\]
	where $\{b_1,b_2,\dots,b_{2^n-1}\}=\{a_{x_1},\dots,a_{x_{2^n-1}}\}$. If $a_{x_1}=b_j$ ($1\leq j \leq 2^n-1$), then set $\mathbf{C}_1= M_1 M_2 \cdots M_{j-1}$. Thus, $\mathbf{C}_1\mathbf{C}_0$ defines the permutation
	\[
	(a_{x_0},a_{x_1},c_2,c_3,\dots,c_{2^n-1})_{\mathbb{H}},
	\]
	where $\{c_2,c_3,\dots,c_{2^n-1}\} = \{a_{x_2},\dots,a_{x_{2^n-1}}\}$. 
	
	Suppose that $\mathbf{C}_i\cdots\mathbf{C}_1 \mathbf{C}_0$ defines the permutation 
	\[
	(a_{x_0},a_{x_1},\dots,a_{x_i},d_{i+1},\dots,d_{2^n-1})_{\mathbb{H}},
	\]
	and $a_{x_{i+1}}=d_l$ ($i+1\leq l \leq 2^n-1$). Set $\mathbf{C}_{i+1}= M_{i+1} M_{i+2} \cdots M_{l-1}$. The reversible circuit $\mathbf{C}_{i+1}\mathbf{C}_i\cdots\mathbf{C}_1 \mathbf{C}_0$ defines the permutation
	\[
	(a_{x_0},a_{x_1},\dots,a_{x_i},a_{x_{i+1}},e_{i+2},\dots,e_{2^n-1})_{\mathbb{H}}.
	\]
	Repeated application of the process can finally generate a reversible circuit $\mathbf{C}=\mathbf{C}_m\mathbf{C}_{m-1}\cdots\mathbf{C}_1\mathbf{C}_0$ that moves every $a_{x_i}$ ($0\leq i \leq 2^n-1$) to its position in the permutation defined by $f$. It is easy to check that $\mathbf{C}$ is unique and is in the canonical form based on $\mathbb{H}$ from the construction above.
	
		\begin{algorithm}[t]
		\caption{\label{algo-contruction}Canonical circuit construction from a reversible function}
		\KwIn{A Hamiltonian path $\mathbb{H} = (a_0, a_1, \ldots, a_{2^n-1})$ of the $n$-hypercube and the permutation $(a_{x_0}, a_{x_1}, \ldots, a_{x_{2^n-1}})_\mathbb{H}$ defined by a reversible function $f$.}
		\KwOut{Canonical circuit $\mathbf{C}=\mathbf{C}_m\mathbf{C}_{m-1}\cdots\mathbf{C}_1\mathbf{C}_0$.}
		\SetAlgoLined
		\PrintSemicolon
		
		$\mathbf{C} \leftarrow \emptyset$\;
		$current \leftarrow (a_0, a_1, \ldots, a_{2^n-1})_\mathbb{H}$\;
		
		\For{$i \leftarrow 0$ \KwTo $2^n-1$}{
			$l \leftarrow$ index such that $current[l] = a_{x_i}$\;
	
			\If{$l > i$}{
				$\mathbf{C}_i \leftarrow \emptyset$\;
				\For{$j \leftarrow i$ \KwTo $l-1$}{
					
					$p \leftarrow$ the bit where $a_j$ and $a_{j+1}$ differ\;
					
					// 1 and 0 represent positive and negative control bits respectively //
					$P=\{ a_j[m] = 1 \mid m \in [0,n-1] \setminus \{p\}\}$;
					$N=\{ a_j[m] = 0 \mid m \in [0,n-1] \setminus \{p\}\}$;
					$gate \leftarrow \mathbf{G}[P, N, p]$\;
					$\mathbf{C}_i \leftarrow \mathbf{C}_i \cdot gate$\;
				}
				$current \leftarrow (current[0], \dots, current[i-1]$, $a_{x_i}, current[i], \dots, current[l-1]$, $current[l+1], \dots, current[2^n-1])$\;		
				$\mathbf{C} \leftarrow \mathbf{C}_i \mathbf{C}$ \;
			}
		}
		\Return $\mathbf{C}$\;
	\end{algorithm}
	We give an algorithm that constructs the canonical circuit from a reversible function in Algorithm~\ref{algo-contruction}. The time complexity is $O(n \cdot 4^n)$ where $n$ is the number of bits in the circuit.
\end{proof}

\begin{example}\label{exam-Hamiltonian}
Given a Hamiltonian path
	\[
	\mathbb{H}=(000,001,011,010,110,111,101,100)
	\]
	of a 3-hypercube graph, let $s_0 s_1 s_2$ be an entry of $\mathbb{H}$, we assume that $s_i$ is the input of bit $q_i$ ($0\leq i \leq 2$). 
	The gates in the set $\Delta_\mathbb{H}=\{M_0,M_1,\dots,M_6\}$ are listed in Fig.~\ref{fig-hamiltonian}.
	\begin{figure}[t]
		\centering
		\scalebox{0.75}{\begin{quantikz}[row sep={0.7cm,between origins}]
				\setwiretype{n}	& \push{M_0}& \push{M_1} & \push{M_2} & \push{M_3}	& \push{M_4} & \push{M_5}	& \push{M_6} & \\
				\lstick{$q_0$:}	& \octrl{1} & \octrl{1}  & \octrl{1}  & \targ{}		& \ctrl{1} 	 & \ctrl{1}		& \ctrl{1} 	 & \qw \\
				\lstick{$q_1$:}	& \octrl{1} & \targ{}  	 & \ctrl{1}   & \ctrl{-1}	& \ctrl{1}   & \targ{}		& \octrl{1}  & \qw \\
				\lstick{$q_2$:}	& \targ{}   & \ctrl{-1}  & \targ{}	  & \octrl{-1}	& \targ{}    & \ctrl{-1}	& \targ{}    & \qw
		\end{quantikz}}
		\caption{\label{fig-hamiltonian} The gates in the set $\Delta_\mathbb{H}$.}
	\end{figure}
	
	\begin{table}[t]
		\centering
		\caption{\label{tab-truthtable} The truth table of the function $f$.}
		\begin{tabular}{ccc|ccc}
			\hline
			\multicolumn{3}{c|}{Input} & \multicolumn{3}{c}{Output} \\ \hline
			$q_0$   & $q_1$  & $q_2$  & $q_0$   & $q_1$   & $q_2$  \\ \hline
			0       & 0      & 0      &    0    &    1    &    0   \\
			0       & 0      & 1      &    0    &    0    &    0   \\
			0       & 1      & 0      &    1    &    0    &    1   \\
			0       & 1      & 1      &    0    &    0    &    1   \\
			1       & 0      & 0      &    1    &    0    &    0   \\
			1       & 0      & 1      &    1    &    1    &    1   \\
			1       & 1      & 0      &    0    &    1    &    1   \\
			1       & 1      & 1      &    1    &    1    &    0  \\ \hline
		\end{tabular}
	\end{table}
	Let $f$ be a reversible function whose truth table is given in Table~\ref{tab-truthtable}.  It is clear that $f$ defines the permutation 
	\[
	\begin{pmatrix}
		000 & 001 & 011 & 010 & 110 & 111 & 101 & 100 \\
		010 & 000 & 001 & 101 & 011 & 110 & 111 & 100 \\
	\end{pmatrix},
	\]
	denoted by $P_f= (010, 000, 001, 101, 011, 110, 111, 100)_{\mathbb{H}}$.
	We construct the canonical circuit that computes $f$ based on $\mathbb{H}$ by starting from the initial permutation $P_0$ that is defined by the empty circuit.
	\[
	\setlength\arraycolsep{2pt}
	\begin{matrix}
		& 0 & 1 & 2 & 3 & 4 & 5 & 6 & 7 \\
	P_0 =& (000, & 001, & 011, & 010, & 110, & 111, & 101, & 100)_{\mathbb{H}} \\
	\end{matrix}
	\]
	The first different element of $P_0$ and $P_f$ is at the 0th position. Hence, we first move 010 to the 0th position. This step can be decomposed into a sequence of operations that swap adjacent elements as follows.
	\begin{enumerate}
		\item Apply the gate $M_2$ on $P_0$ that exchanges 011 and 010. The circuit $M_2$ defines the permutation $P_1$.
		\[
		\setlength\arraycolsep{2pt}
		\begin{matrix}
			& 0 & 1 & 2 & 3 & 4 & 5 & 6 & 7 \\
			P_1 =& (000, & 001, & \textcolor{red}{010}, & 011, & 110, & 111, & 101, & 100)_{\mathbb{H}} \\
		\end{matrix}
		\]
		\item Apply gate $M_1$ on $P_1$ to move 010 to the 1st position. The circuit $M_1 M_2$ defines the permutation $P_2$.
		\[
		\setlength\arraycolsep{2pt}
		\begin{matrix}
			& 0 & 1 & 2 & 3 & 4 & 5 & 6 & 7 \\
			P_2 =& (000, & \textcolor{red}{010}, & 001, & 011, & 110, & 111, & 101, & 100)_{\mathbb{H}} \\
		\end{matrix}
		\]
		\item Apply gate $M_0$ on $P_2$ to move 010 to the 0th position. The circuit $M_0 M_1 M_2$ defines the permutation $P_3$.
		\[
		\setlength\arraycolsep{2pt}
		\begin{matrix}
			& 0 & 1 & 2 & 3 & 4 & 5 & 6 & 7 \\
			P_3 =& (\textcolor{red}{010}, & 000, & 001, & 011, & 110, & 111, & 101, & 100)_{\mathbb{H}} \\
		\end{matrix}
		\]
	\end{enumerate}

	Set $\mathbf{C}_0 = M_0 M_1 M_2$. The first different element of $P_3$ and $P_f$ is at the 3rd position. Then we move 101 to the 3rd position by the following steps.
	\begin{enumerate}
	\item Apply gate $M_5$ on $P_3$ to move 101 to the 5th position. The circuit $M_5 \mathbf{C}_0$ defines the permutation $P_4$.
	\[
	\setlength\arraycolsep{2pt}
	\begin{matrix}
		& 0 & 1 & 2 & 3 & 4 & 5 & 6 & 7 \\
		P_4 =& (010, & 000, & 001, & 011, & 110, & \textcolor{red}{101}, & 111, & 100)_{\mathbb{H}} \\
	\end{matrix}
	\]
	\item Apply gate $M_4$ on $P_4$ to move 101 to the 4th position. The circuit $M_4 M_5 \mathbf{C}_0$ defines the permutation $P_5$.
	\[
	\setlength\arraycolsep{2pt}
	\begin{matrix}
	& 0 & 1 & 2 & 3 & 4 & 5 & 6 & 7 \\
	P_5 =& (010, & 000, & 001, & 011, & \textcolor{red}{101}, & 110, & 111, & 100)_{\mathbb{H}} \\
	\end{matrix}
	\]
	\item Apply gate $M_3$ on $P_5$ to move 101 to the 3rd position. The circuit $M_3 M_4 M_5 \mathbf{C}_0$ defines the permutation $P_6$.
	\[
	\setlength\arraycolsep{2pt}
	\begin{matrix}
		& 0 & 1 & 2 & 3 & 4 & 5 & 6 & 7 \\
		P_6 =& (010, & 000, & 001, & \textcolor{red}{101}, & 011, & 110, & 111, & 100)_{\mathbb{H}} \\
	\end{matrix}
	\]
	\end{enumerate}
Set $\mathbf{C}_1= M_3 M_4 M_5$. The permutation $P_6$ equals $P_f$. Thus, the circuit $\mathbf{C}_1 \mathbf{C}_0= M_3 M_4 M_5 M_0 M_1 M_2$ is the canonical circuit that computes $f$ based on $\mathbb{H}$.
\[\mathbf{C}_1 \mathbf{C}_0=
\scalebox{0.75}{\begin{quantikz}[row sep={0.7cm,between origins}]
\setwiretype{n}	
		& \push{M_3}	& \push{M_4} & \push{M_5} 	& \push{M_0}& \push{M_1} & \push{M_2}  & \\
		& \targ{}		& \ctrl{1} 	 & \ctrl{1}		& \octrl{1} & \octrl{1}  & \octrl{1}   & \qw \\
		& \ctrl{-1}		& \ctrl{1}   & \targ{}		& \octrl{1} & \targ{}  	 & \ctrl{1}    & \qw \\
		& \octrl{-1}	& \targ{}    & \ctrl{-1} 	& \targ{}   & \ctrl{-1}  & \targ{}	   & \qw
\end{quantikz}}
\]
\end{example}

We next show that every reversible circuit that only consists of the gates in $\Delta_\mathbb{H}$ can be transformed into its unique canonical form. The proof is divided into a sequence of lemmas.

\begin{lemma}\label{lem-MMcommute}
	Let $(b_1,b_2,\dots,b_m)$ be a sequence of distinct strings from $\{0,1\}^n$ such that $b_i,b_{i+1}$ differ in exactly one bit, and $A_i$ an $n$-bit gate that exchanges $b_i,b_{i+1}$ ($1\leq i <m$).
	Then for any two gates $A_i, A_j$ ($1\leq i,j < m$), if $|i-j|\geq 2$, there must exist a control bit that has different polarities in $A_i$ and $A_j$, respectively. 
\end{lemma}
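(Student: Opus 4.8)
The plan is to translate the claim about control-bit polarities into a statement about the Hamming geometry of the four strings $b_i, b_{i+1}, b_j, b_{j+1}$, and then rule out the bad case by distinctness. First I would record what each gate looks like: since $b_i$ and $b_{i+1}$ differ in exactly one bit, $A_i$ is an MPMCT gate whose target is that distinguishing position $t_i$, and whose control bits are all remaining positions, with the polarity at position $k \neq t_i$ equal to the common value (writing $b_i[k]$ for the $k$-th bit of $b_i$, this is $b_i[k] = b_{i+1}[k]$). Likewise $A_j$ has target $t_j$ and control polarities $b_j[k] = b_{j+1}[k]$ for $k \neq t_j$. Consequently a position $k$ is a control bit of both $A_i$ and $A_j$ with \emph{different} polarities precisely when $k \notin \{t_i, t_j\}$ and $b_i[k] \neq b_j[k]$.

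Then I would argue by contradiction: assume no such position exists, i.e. $b_i$ and $b_j$ agree on every coordinate outside $\{t_i, t_j\}$. Writing $\oplus$ for bitwise XOR and $e_t$ for the unit string carrying a single $1$ in position $t$, this assumption says exactly that $b_i \oplus b_j$ is supported on $\{t_i, t_j\}$, so it equals one of $0$, $e_{t_i}$, $e_{t_j}$, or $e_{t_i}\oplus e_{t_j}$. Using the relations $b_{i+1} = b_i \oplus e_{t_i}$ and $b_{j+1} = b_j \oplus e_{t_j}$, I would check each value in turn: $b_i \oplus b_j = 0$ forces $b_i = b_j$; $b_i \oplus b_j = e_{t_i}$ forces $b_j = b_{i+1}$; $b_i \oplus b_j = e_{t_j}$ forces $b_i = b_{j+1}$; and $b_i \oplus b_j = e_{t_i}\oplus e_{t_j}$ forces $b_{j+1} = b_j \oplus e_{t_j} = b_i \oplus e_{t_i} = b_{i+1}$. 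Each of these is an equality between two of the four strings $b_i, b_{i+1}, b_j, b_{j+1}$.

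To reach the contradiction I would invoke the hypothesis $|i-j| \geq 2$, which guarantees that the four indices $i, i+1, j, j+1$ are pairwise distinct (assuming WLOG $i<j$, so $j \geq i+2$ yields $i < i+1 < j < j+1$); since the $b$'s are assumed distinct, no two of these four strings can coincide, and each of the four equalities above is impossible. This closes the argument and establishes the existence of a control bit of differing polarity, exactly the commutation precondition of Rule~\ref{rule-gateswap}.

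I do not expect a serious obstacle, as the lemma is a combinatorial fact about the hypercube. The only point needing care is the bookkeeping that converts ``a common control bit of opposite polarity'' into the XOR support condition while correctly excluding the target positions $t_i, t_j$; note that the degenerate case $t_i = t_j$ merely collapses the fourth value $e_{t_i}\oplus e_{t_j}$ to $0$ and is absorbed by the same enumeration, so no separate treatment is required.
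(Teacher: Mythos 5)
Your proposal is correct and follows essentially the same route as the paper: argue by contradiction that if every common control bit (i.e., every position outside the two targets) had equal polarity, then $b_i,b_{i+1},b_j,b_{j+1}$ would agree outside $\{t_i,t_j\}$, forcing two of the four strings to coincide and contradicting distinctness (which $|i-j|\geq 2$ guarantees applies to four distinct indices). Your explicit XOR enumeration of the four possible values of $b_i\oplus b_j$ is just a sharper rendering of the paper's final step, which asserts without enumeration that some string from $\{b_i,b_{i+1}\}$ must equal some string from $\{b_j,b_{j+1}\}$.
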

\begin{proof}
	Let the target bits of $A_i$ and $A_j$ be $q_l$ and $q_k$ ($1\leq l, k \leq n$), respectively. So the common control bits of $A_i$ and $A_j$ are $Q=\{q_1,\dots,q_n\}/\{q_l,q_k\}$.
	Suppose that $A_i$ exchanges $b_i$ and $b_{i+1}$, and $A_j$ exchanges $b_j$ and $b_{j+1} $. From $|i-j|\geq 2$ we know that $b_i$, $b_{i+1}$, $b_j$, and $b_{j+1}$ must be different from each other.
	
	Assume that for all $q\in Q$, the polarity of $q$ in $A_i$ is the same as that in $A_j$. This implies that $b_i$, $b_{i+1}$, $b_j$, $b_{j+1}$ have the same value in their $d$-th bit for each $d\in \{1,\dots,n\}/\{l,k\}$. Hence, whatever the $l$-th bits of $b_i,b_{i+1}$ and the $k$-th bits of $b_j,b_{j+1}$ are, there always be a string from $b_i, b_{i+1}$ that equals a string from $b_j, b_{j+1}$, a contradiction. Therefore, there must be a $q\in Q$ such that $q$ has different polarities in $A_i$ and $A_j$, respectively. 
\end{proof}

By Lemma~\ref{lem-MMcommute}, it is clear that for any $M_i, M_j\in \Delta_\mathbb{H}$, if $|i-j|\geq 2$, then there is a control bit that has different polarities in $M_i$ and $M_j$, respectively. By Rule~\ref{rule-gateswap} we have $M_i M_j\Leftrightarrow M_j M_i$.

\begin{lemma}\label{lem-ABA-BAB}
	Let	$A =\mathbf{G}[P_1,N_1,p]$, $B =\mathbf{G}[P_2,N_2,q]$ be two gates satisfying the following conditions:
	\begin{itemize}
		\item $P_1\cup N_1 \cup \{p\} = P_2\cup N_2 \cup \{q\}$,
		\item if $p'$ is a common control bit of $A$ and $B$, then the polarity of $p'$ in $A$ is the same as that in $B$.
	\end{itemize}
	Then
	\[ABA\Leftrightarrow BAB.\]
\end{lemma}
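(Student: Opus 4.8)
The plan is to reduce the claim to the all-positive-control braid relation already available as Rule~\ref{rule-ABA-BAB}, by conjugating the whole relation with a block of $\mathrm{X}$ gates chosen to turn every negative control into a positive one.

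First I would dispose of the degenerate case $p=q$: the hypothesis then forces $P_1\cup N_1 = P_2\cup N_2$ with matching polarities on every common control, so $A=B$ and both $ABA$ and $BAB$ equal $AAA$. Assume now $p\neq q$. Since the two control-and-target sets coincide and a gate cannot control its own target, $q$ is a control bit of $A$ and $p$ is a control bit of $B$, while every remaining control is common to both gates and, by the polarity hypothesis, carries the same polarity in each. Writing $P$ (resp. $N$) for this shared set of positive (resp. negative) common controls, put $A^{+}=\mathbf{G}[P\cup N\cup\{q\},\emptyset,p]$ and $B^{+}=\mathbf{G}[P\cup N\cup\{p\},\emptyset,q]$, the gates obtained from $A,B$ by making every negative control positive. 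These are exactly of the shape required by Rule~\ref{rule-ABA-BAB}, so $A^{+}B^{+}A^{+}\Leftrightarrow B^{+}A^{+}B^{+}$.

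The heart of the argument is to connect $A,B$ with $A^{+},B^{+}$. Let $R$ be the set of all bits occurring as a negative control in $A$ or in $B$ (so $R\subseteq N\cup\{p,q\}$), and set $U=\prod_{s\in R}\mathrm{X}[s]$. Using that $\mathrm{X}$ gates on distinct bits commute and cancel, a routine consequence of Rules~\ref{rule-gate-elimn}, \ref{rule-binary-generate}, \ref{rule-gateswap-derived}, and \ref{rule-swap-gate-changectrl} which in particular gives $UU\Leftrightarrow\epsilon$, I would show $U A U\Leftrightarrow A^{+}$ and $U B U\Leftrightarrow B^{+}$. Concretely, the factors of $U$ sitting on the genuine negative controls of $A$ turn them positive by Rule~\ref{rule-MPMCTtoMCT}, whereas the single factor landing on the target bit of $A$, which happens precisely when $p$ is a negative control of $B$, commutes through $A^{+}$ by Rule~\ref{rule-gateswap-derived} and cancels by Rule~\ref{rule-gate-elimn}; the symmetric statement holds for $B$. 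This uniform bookkeeping simultaneously handles the configurations in which $q$ is negative in $A$, $p$ is negative in $B$, or both.

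Finally I would insert copies of $UU\Leftrightarrow\epsilon$ to get $U(ABA)U\Leftrightarrow(UAU)(UBU)(UAU)\Leftrightarrow A^{+}B^{+}A^{+}$ and likewise $U(BAB)U\Leftrightarrow B^{+}A^{+}B^{+}$. Rule~\ref{rule-ABA-BAB} equates the two right-hand sides, and conjugating back by $U$, again using $UU\Leftrightarrow\epsilon$ together with the congruence properties of $\Leftrightarrow$, yields $ABA\Leftrightarrow BAB$. I expect the main obstacle to be the $\mathrm{X}$-gate bookkeeping rather than the reduction itself: commutation and cancellation of $\mathrm{X}$ gates on distinct bits does not follow from a single rule and must be derived, and the cancellation of the $\mathrm{X}$ that lands on a gate's own target has to be tracked carefully across the polarity cases, whereas the appeal to Rule~\ref{rule-ABA-BAB} is then immediate.
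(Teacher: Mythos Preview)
Your proposal is correct and follows essentially the same route as the paper: conjugate by a block of $\mathrm{X}$ gates to convert every negative control to a positive one, invoke Rule~\ref{rule-ABA-BAB} on the resulting all-positive gates, and then undo the conjugation. The paper phrases this as pushing $\mathrm{X}$ gates through via Rule~\ref{rule-swap-gate-changectrl} (together with Rules~\ref{rule-gate-elimn} and~\ref{rule-gateswap-derived}) and illustrates it on one example, whereas you package the same idea as conjugation by an explicit product $U$ and invoke Rule~\ref{rule-MPMCTtoMCT}; you also spell out the degenerate case $p=q$ and the bookkeeping for the $\mathrm{X}$ factor that lands on a target bit, which the paper leaves implicit.
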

\begin{proof}
	The lemma follows easily by Rule~\ref{rule-gate-elimn}, \ref{rule-gateswap-derived}, \ref{rule-swap-gate-changectrl}, and \ref{rule-ABA-BAB}. We use the X gate to change the negative control bits to positive control bits, and apply Rule~\ref{rule-ABA-BAB}, and then change the positive control bits back to negative control bits. For example,
	\[
		\scalebox{0.7}{
		\begin{quantikz}[row sep={0.7cm,between origins}]
			\setwiretype{n}	& \push{A}& \push{B}& \push{A}  & \\
			& \ctrl{1} &\ctrl{1}  & \ctrl{1} & \qw \\
			& \octrl{1} &\octrl{1}  & \octrl{1} & \qw \\
			\lstick{$q$:} & \ctrl{1} &\targ{}   & \ctrl{1} & \qw \\
			\lstick{$p$:} & \targ{}  &\octrl{-1} & \targ{}  & \qw 
		\end{quantikz}}
		\overset{\text{Rule~\ref{rule-gate-elimn}}}{\Longleftrightarrow}
		\scalebox{0.7}{
		\begin{quantikz}[row sep={0.7cm,between origins}]
			\setwiretype{n}	& & & A & B & A & \\
			& \qw &\qw	& \ctrl{1} &\ctrl{1}  & \ctrl{1} & \qw \\
			& \gate{X} &\gate{X}	&\octrl{1} & \octrl{1} &\octrl{1}   & \qw \\
			& \qw &\qw	& \ctrl{1}&\targ{}  & \ctrl{1}  & \qw \\
			& \gate{X}&\gate{X}	&\targ{} &  \octrl{-1} &\targ{}  & \qw 
		\end{quantikz}}
		\overset{\text{Rule~\ref{rule-gateswap-derived},\ref{rule-swap-gate-changectrl}}}{\Longleftrightarrow}
	\]
	\[
		\scalebox{0.7}{
		\begin{quantikz}[row sep={0.7cm,between origins}]
			&\qw	& \ctrl{1} &\ctrl{1}  & \ctrl{1} &\qw & \qw \\
			&\gate{X}	&\ctrl{1} & \ctrl{1} &\ctrl{1}  & \gate{X} & \qw \\
			&\qw	& \ctrl{1}&\targ{}  & \ctrl{1}  &\qw& \qw \\
			&\gate{X}	&\targ{} &  \ctrl{-1} &\targ{} & \gate{X} & \qw 
		\end{quantikz}}
		\overset{\text{Rule~\ref{rule-ABA-BAB}}}{\Longleftrightarrow}
		\scalebox{0.7}{
		\begin{quantikz}[row sep={0.7cm,between origins}]
			&\qw	& \ctrl{1} &\ctrl{1}  & \ctrl{1} &\qw & \qw \\
			& \gate{X}	&\ctrl{1} & \ctrl{1} &\ctrl{1}  & \gate{X} & \qw \\
			&\qw	&\targ{}  & \ctrl{1} &\targ{}   &\qw& \qw \\
			& \gate{X}	&\ctrl{-1}& \targ{}  &\ctrl{-1} & \gate{X} & \qw 
		\end{quantikz}
	}
	\]
	\[
		\overset{\text{Rule~\ref{rule-gateswap-derived},\ref{rule-swap-gate-changectrl}}}{\Longleftrightarrow}
		\scalebox{0.7}{
		\begin{quantikz}[row sep={0.7cm,between origins}]
			\setwiretype{n}	& & & B & A & B & \\
			&\qw &\qw	& \ctrl{1} &\ctrl{1}  & \ctrl{1} & \qw \\
			& \gate{X} & \gate{X}	&\octrl{1} & \octrl{1} &\octrl{1}  & \qw \\
			&\qw &\qw	&\targ{}  & \ctrl{1} &\targ{}   & \qw \\
			& \gate{X} & \gate{X}	&\octrl{-1}& \targ{}  &\octrl{-1}  & \qw 
		\end{quantikz}}
		\overset{\text{Rule~\ref{rule-gate-elimn}}}{\Longleftrightarrow}
		\scalebox{0.7}{
		\begin{quantikz}[row sep={0.7cm,between origins}]
			\setwiretype{n}	& \push{B}& \push{A}& \push{B}  & \\
			&\ctrl{1} &\ctrl{1}  & \ctrl{1} & \qw \\
			&\octrl{1} & \octrl{1} &\octrl{1}   & \qw \\
			\lstick{$q$:} &\targ{}  & \ctrl{1} &\targ{}  & \qw \\
			\lstick{$p$:} &\octrl{-1}& \targ{}  &\octrl{-1} & \qw 
	\end{quantikz}}
	\]
\end{proof}

\begin{lemma}\label{lem-ABCDCBA}
	Let $A_1,\dots,A_{m}$ be $m$ $n$-bit MPMCT gates. If the following two conditions are satisfied
	\begin{enumerate}[(1)]
		\item\label{enu-ABCDCBA-1} for $1\leq i < m$, $A_i$ and $A_{i+1}$ coincide on the polarities of their common control bits,
		\item\label{enu-ABCDCBA-2} for $A_i$ and $A_j$ with $|i-j|\geq 2$, there is a control bit that has different polarities in $A_i$ and $A_j$, respectively,
	\end{enumerate}
	then
	\[
	\begin{aligned}
		& A_1 A_2\cdots A_{m-1}A_m A_{m-1}\cdots A_2A_1 \\
		\Leftrightarrow\; & A_m A_{m-1}\cdots A_2 A_1 A_2\cdots A_{m-1} A_m.
	\end{aligned}
	\]
\end{lemma}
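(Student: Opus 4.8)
The plan is to reduce everything to two \emph{local} relations among the gates and then run an induction on $m$. From condition~\eqref{enu-ABCDCBA-2}, whenever $|i-j|\ge 2$ the gates $A_i,A_j$ share a control bit of opposite polarity, so $P_i\cap N_j\ne\emptyset$ or $P_j\cap N_i\ne\emptyset$, and Rule~\ref{rule-gateswap} gives the \emph{commutation} relation $A_iA_j\Leftrightarrow A_jA_i$. From condition~\eqref{enu-ABCDCBA-1}, consecutive gates agree on all common control bits; since each $A_i$ is an $n$-bit MPMCT gate its controls and target exhaust $\{q_1,\dots,q_n\}$, so any two of them have the same support and the hypotheses of Lemma~\ref{lem-ABA-BAB} hold for the pair $A_i,A_{i+1}$. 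This yields the \emph{braid} relation $A_iA_{i+1}A_i\Leftrightarrow A_{i+1}A_iA_{i+1}$.

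Write $P_k=A_1\cdots A_{k-1}A_kA_{k-1}\cdots A_1$ for the peak-$A_k$ palindrome on the first $k$ gates and $Q_k=A_k\cdots A_1\cdots A_k$ for its reversal, so the claim is $P_m\Leftrightarrow Q_m$. The key step I would isolate is a \emph{peak-extraction} identity $P_k\Leftrightarrow A_kP_{k-1}A_k$. To prove it, first I would apply the braid relation to the central block, $A_{k-1}A_kA_{k-1}\Leftrightarrow A_kA_{k-1}A_k$, turning $P_k$ into $A_1\cdots A_{k-2}A_kA_{k-1}A_kA_{k-2}\cdots A_1$. Because $A_k$ commutes with every $A_j$ for $j\le k-2$, I can slide the left copy of $A_k$ past $A_{k-2}\cdots A_1$ to the front and the right copy past $A_{k-2}\cdots A_1$ to the back, obtaining $A_k\,(A_1\cdots A_{k-2}A_{k-1}A_{k-2}\cdots A_1)\,A_k=A_kP_{k-1}A_k$.

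With the extraction identity in hand I would induct on $m$. The base case $m=1$ is trivial, and $m=2$ is a single braid. For the inductive step, extraction gives $P_m\Leftrightarrow A_mP_{m-1}A_m$; the gates $A_1,\dots,A_{m-1}$ inherit conditions~\eqref{enu-ABCDCBA-1}--\eqref{enu-ABCDCBA-2}, so the induction hypothesis yields $P_{m-1}\Leftrightarrow Q_{m-1}$, whence $P_m\Leftrightarrow A_mQ_{m-1}A_m=A_mA_{m-1}\cdots A_1\cdots A_{m-1}A_m=Q_m$, which is exactly the desired conclusion.

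I expect the main difficulty to be bookkeeping rather than conceptual. One must verify that the support condition $P_i\cup N_i\cup\{p_i\}=P_{i+1}\cup N_{i+1}\cup\{p_{i+1}\}$ of Lemma~\ref{lem-ABA-BAB} really holds (this is where the full-support nature of $n$-bit MPMCT gates enters, and condition~\eqref{enu-ABCDCBA-1} is precisely what supplies the required polarity agreement), and one must check that every commutation move sliding the two copies of $A_k$ outward is licensed by condition~\eqref{enu-ABCDCBA-2}, i.e.\ that all indices crossed differ from $k$ by at least $2$. Once the peak-extraction identity is established cleanly, the induction closes immediately.
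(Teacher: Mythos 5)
Your proof is correct and follows essentially the same route as the paper: the paper's proof also braids the central block $A_{m-1}A_mA_{m-1}\Leftrightarrow A_mA_{m-1}A_m$ via Lemma~\ref{lem-ABA-BAB}, slides the two copies of $A_m$ outward via Rule~\ref{rule-gateswap}, and repeats. Your peak-extraction identity and explicit induction are just a cleaner formalization of the paper's "repeat Step 1 and Step 2."
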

\begin{proof}
	We show the two main steps for the transformation in an example below.
	
	\textbf{Step\hspace{0.1cm}1:} By \eqref{enu-ABCDCBA-1} and Lemma~\ref{lem-ABA-BAB}, we have $A_i A_{i+1} A_i \Leftrightarrow A_{i+1} A_i A_{i+1}$ ($1\leq i < m$). The circuit can be transformed from the inside as follows:
	\[
	\begin{aligned}
		& A_1 \cdots A_{m-2} A_{m-1} A_m A_{m-1} A_{m-2} \cdots A_1 \\
		\Leftrightarrow\; &  A_1 \cdots A_{m-2} A_m A_{m-1} A_m A_{m-2} \cdots A_1.
	\end{aligned}
	\]
	
	\textbf{Step\hspace{0.1cm}2:} By \eqref{enu-ABCDCBA-2} and Rule~\ref{rule-gateswap}, we have $A_i A_j \Leftrightarrow  A_j A_i$ ($|i-j|\geq 2$). Thus, the two $A_m$ gates can be moved to the outside of the circuit:
	\[
	\begin{aligned}
	& A_1 \cdots A_{m-2} A_m A_{m-1} A_m A_{m-2} \cdots A_1 \\
	\Leftrightarrow\; & A_m A_1 \cdots A_{m-2} A_{m-1} A_{m-2} \cdots A_1 A_m.
	\end{aligned}
	\]
	Therefore, by repeating Step\hspace{0.1cm}1 and Step\hspace{0.1cm}2, we can transform the two reversible circuits $A_1 A_2\cdots A_{m-1}A_m A_{m-1}\cdots A_2A_1$ and $A_m A_{m-1}\cdots A_2 A_1 A_2\cdots A_{m-1} A_m$ into each other.
\end{proof}

Let $\mathbf{C}$ be a reversible circuit. We denote by $\Delta(\mathbf{C})$ the set of gates that occur in $\mathbf{C}$.

\begin{lemma}\label{lem-MM-M}
	Let $M_i\in \Delta_\mathbb{H}$, and $\mathbf{D}$ a reversible circuit such that $\Delta(\mathbf{D})\subseteq \Delta_\mathbb{H}$ and $M_i\notin \Delta(\mathbf{D})$. Then the circuit $M_i\mathbf{D} M_i$ can be transformed into a circuit $\mathbf{D}'$ that has at most one occurrence of $M_i$ and $\Delta(\mathbf{D}')\subseteq \Delta(\mathbf{D})\cup \{M_i\}$.
\end{lemma}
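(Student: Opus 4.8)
The plan is to use only the two elementary moves that $\mathcal{RC}$ makes available on $\Delta_\mathbb{H}$, exploiting that $M_i$ fails to commute with at most two other gates. I would first record the algebra of $\Delta_\mathbb{H}$: by Lemma~\ref{lem-MMcommute} together with Rule~\ref{rule-gateswap}, every $M_j$ with $|i-j|\ge 2$ commutes with $M_i$, so $M_iM_j\Leftrightarrow M_jM_i$; and by Lemma~\ref{lem-ABA-BAB} the two neighbours satisfy the braid relations $M_iM_{i-1}M_i\Leftrightarrow M_{i-1}M_iM_{i-1}$ and $M_iM_{i+1}M_i\Leftrightarrow M_{i+1}M_iM_{i+1}$. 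Hence inside $\mathbf{D}$ every gate other than the occurrences of $M_{i-1}$ and $M_{i+1}$ slides freely past $M_i$, and the only real interaction is with these two neighbours.

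Next I would distribute the conjugation over the word. Writing $\mathbf{D}=A_1\cdots A_L$ and inserting $M_iM_i\Leftrightarrow\epsilon$ (Rule~\ref{rule-gate-elimn}) between consecutive gates gives $M_i\mathbf{D}M_i\Leftrightarrow (M_iA_1M_i)\cdots(M_iA_LM_i)$. A non-neighbour factor collapses, $M_iM_jM_i\Leftrightarrow M_j$, and contributes no $M_i$; a neighbour factor becomes $M_iM_{i\pm1}M_i\Leftrightarrow M_{i\pm1}M_iM_{i\pm1}$, contributing a single flanked copy of $M_i$. After this step the number of $M_i$'s equals the number of neighbour gates in $\mathbf{D}$, and the remaining task is to coalesce them down to at most one.

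The coalescing is clean when the neighbour gates lie on a single side of $i$ — say $\mathbf{D}$ has no $M_{i+1}$, which is exactly the situation when $M_i$ is the maximal-index gate of the circuit being normalised. Discarding the factors that commute with $M_i$, one is left with $M_i\pi M_i$ for a word $\pi$ over $M_0,\dots,M_{i-1}$. I would take a coset decomposition $\pi=\pi_0\,(M_{i-1}M_{i-2}\cdots M_{i-t})$ with $\pi_0$ a word over $M_0,\dots,M_{i-2}$ (so its permutation fixes $a_i$ and $\pi_0$ commutes with $M_i$); every factor of the descending run except the leading $M_{i-1}$ also commutes with $M_i$, so a single braid absorbs the conjugation, $M_i(M_{i-1}M_{i-2}\cdots M_{i-t})M_i\Leftrightarrow M_{i-1}M_iM_{i-1}M_{i-2}\cdots M_{i-t}$, leaving exactly one $M_i$. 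Reattaching the commuting prefix and suffix then yields $\mathbf{D}'$ with a single $M_i$ and $\Delta(\mathbf{D}')\subseteq\Delta(\mathbf{D})\cup\{M_i\}$.

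The main obstacle is the coalescing when both $M_{i-1}$ and $M_{i+1}$ occur in $\mathbf{D}$: the two $M_i$'s produced when distributing the conjugation are then separated by oppositely-indexed neighbours, and neither commutation nor braiding lets them meet. Concretely, $M_iM_{i-1}M_{i+1}M_i$ computes the permutation exchanging $a_{i-1}\leftrightarrow a_{i+1}$ and $a_i\leftrightarrow a_{i+2}$, and by soundness no circuit over $\{M_{i-1},M_i,M_{i+1}\}$ using at most one $M_i$ can compute it. I therefore expect the reduction to rely on $\mathbf{D}$ being confined to one side of $M_i$, and I would secure the lemma by invoking it only after the gate of largest index has been isolated (so that $M_{i+1}$ is absent), threading this one-sidedness through the induction that drives the circuit to its canonical form.
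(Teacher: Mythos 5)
Your ``obstacle'' paragraph is the decisive observation, and it is correct: for $\mathbf{D}=M_{i-1}M_{i+1}$ the circuit $M_i\mathbf{D}M_i$ computes the permutation $(a_{i-1}\ a_{i+1})(a_i\ a_{i+2})$, which moves \emph{both} elements of the block $\{a_{i-1},a_i\}$ into $\{a_{i+1},a_{i+2}\}$, whereas any word over $\{M_{i-1},M_i,M_{i+1}\}$ with at most one occurrence of $M_i$ moves exactly one element across (if $M_i$ occurs) or none (if it does not), since $M_{i-1}$ and $M_{i+1}$ preserve both blocks and $M_i$ swaps a single pair across them. By soundness no admissible $\mathbf{D}'$ exists, so the lemma is false as literally stated. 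The paper's own proof tacitly assumes the one-sidedness you identify: its simple case (ii) and its Cases 2--4 all describe the gates adjacent to $M_i$ as a monotone run $M_{i\circ 1},\dots,M_{i\circ k}$ for a single sign $\circ\in\{+,-\}$, and none of its cases applies to $M_iM_{i-1}M_{i+1}M_i$. The statement needs the extra hypothesis that every $M_j\in\Delta(\mathbf{D})$ has $j$ on one fixed side of $i$; that hypothesis does hold at every invocation in Proposition~\ref{prop-canonical-lim}, because the gates are eliminated in increasing order of index (the paper goes bottom-up, so it is $M_{i-1}$ rather than $M_{i+1}$ that is absent, but this is only a mirror image of your fix).

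Your positive argument for the one-sided case, however, has a gap at the coset-decomposition step. Writing $\pi=\pi_0(M_{i-1}M_{i-2}\cdots M_{i-t})$ with $\pi_0$ a word over $M_0,\dots,M_{i-2}$ is an identity of permutations, but the lemma requires a derivation in $\Leftrightarrow$: you must show that an arbitrary word over $M_0,\dots,M_{i-1}$ can be driven into this normal form using only Rule~\ref{rule-gate-elimn}, the commutations supplied by Lemma~\ref{lem-MMcommute} and Rule~\ref{rule-gateswap}, and the braidings supplied by Lemma~\ref{lem-ABA-BAB}. That is essentially the assertion that these relations present the symmetric group on $\{a_0,\dots,a_i\}$ --- true, but it is precisely the kind of rewriting fact the paper is in the middle of establishing, so it cannot be invoked for free; moreover, passing to a group-theoretic normal form does not obviously preserve the containment $\Delta(\mathbf{D}')\subseteq\Delta(\mathbf{D})\cup\{M_i\}$ that Proposition~\ref{prop-canonical-lim} relies on. The paper sidesteps both issues by never normalising $\pi$: it repeatedly cancels adjacent duplicates (Case 1), pushes far-commuting gates out past the flanking $M_i$'s (Case 2), and braids a repeated gate until it cancels or escapes (Cases 3--4); this terminates because each step deletes a gate or moves one outside, and it manifestly introduces no new gate types. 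Replacing your coset step with such an explicit terminating rewriting (or proving the normal-form reduction by induction on word length with bookkeeping of $\Delta$) would close the gap; your distribution step $M_i\mathbf{D}M_i\Leftrightarrow(M_iA_1M_i)\cdots(M_iA_LM_i)$ and the collapse of non-neighbour factors are fine as they stand.
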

\begin{proof}
	First, we consider two simple cases for $M_i\mathbf{D} M_i$.
	\begin{enumerate}[(i)]
		\item\label{enu-MM-1} If $|i-j|\geq 2$ for every $M_j\in \Delta(\mathbf{D})$, then we can move the two $M_i$ gates to be adjacent by Lemma~\ref{lem-MMcommute} and Rule~\ref{rule-gateswap}, and eliminate them by Rule~\ref{rule-gate-elimn} to obtain $\mathbf{D}'$.
		\item\label{enu-MM-2}  If $M_i\mathbf{D} M_i$ is in the form of
		\[
		M_i M_{i\circ 1}\cdots M_{i\circ(k-1)} M_{i\circ k} M_{i\circ (k-1)} \cdots M_{i\circ 1} M_i,
		\]
		where $\circ \in\{+,-\}$, then by Lemma~\ref{lem-MMcommute} and \ref{lem-ABCDCBA}, we can transform $M_i\mathbf{D} M_i$ into
		\[
		M_{i\circ k} M_{i\circ (k-1)} \cdots M_{i\circ 1} M_i M_{i\circ 1}\cdots M_{i\circ (k-1)} M_{i\circ k},
		\]
		which has exactly one occurrence of $M_i$.
	\end{enumerate}

The basic idea of the proof is to transform $M_i\mathbf{D} M_i$ into a circuit $\mathbf{D}_1 M_i\mathbf{D}_2 M_i \mathbf{D}_3$ such that $\mathbf{D}_1, \mathbf{D}_3$ do not have any occurrence of $M_i$, and $M_i\mathbf{D}_2 M_i$ satisfies the condition in \eqref{enu-MM-1} or \eqref{enu-MM-2}. Then the circuit $\mathbf{D}'$ can be obtained immediately. We list the four cases of the transformation and show how to deal with them in the following. For simplicity, we only consider the subcircuit between the two $M_i$ gates.
	 
\textbf{Case\hspace{0.1cm}1:} The circuit has the form $M_i \cdots M_j M_j\cdots M_i$. Then the two gates $M_j M_j$ can be removed by Rule~\ref{rule-gate-elimn}.

\textbf{Case\hspace{0.1cm}2:} The circuit has the form 
	\[
	M_i M_{i\circ 1}\cdots M_{i\circ(k-1)} M_{i\circ k} \underline{M_x} \cdots M_i
	\]
	or
	\[
	M_i\cdots \underline{M_x} M_{i\circ k} M_{i\circ(k-1)} \cdots M_{i\circ 1} M_i,
	\]
	where $\circ \in\{+,-\}$ and $|x- j|\geq 2$ for every $j$ among the numbers $i,i\circ 1,\dots, i\circ k$. By Lemma~\ref{lem-MMcommute} and Rule~\ref{rule-gateswap}, we can move $M_x$ to the outside of the circuit and obtain 
	\[
	\underline{M_x} M_i M_{i\circ 1}\cdots M_{i\circ(k-1)} M_{i\circ k} \cdots M_i
	\]
	or
	\[
	M_i\cdots M_{i\circ k} M_{i\circ(k-1)} \cdots M_{i\circ 1} M_i \underline{M_x}.
	\]

\textbf{Case\hspace{0.1cm}3:} The circuit has the form 
	\[
	M_i M_{i\circ 1}\cdots  M_{i\circ(k-2)} \underline{M_{i\circ(k-1)} M_{i\circ k}  M_{i\circ(k-1)}} \cdots M_i,
	\]
	where $\circ \in\{+,-\}$. By Lemma~\ref{lem-ABA-BAB} we have
	\[
	M_{i\circ(k-1)} M_{i\circ k}  M_{i\circ(k-1)} \Leftrightarrow M_{i\circ k}  M_{i\circ(k-1)} M_{i\circ k}.
	\]
	Hence, the circuit can be transformed into 
	\[
	M_i M_{i\circ 1}\cdots M_{i\circ(k-2)} \underline{M_{i\circ k}  M_{i\circ(k-1)} M_{i\circ k}} \cdots M_i,
	\]
	which satisfies the condition in Case\hspace{0.1cm}2. Thus, we can move the first $M_{i\circ k}$ to the left side of the circuit and obtain
	\[
	\underline{M_{i\circ k}} M_i M_{i\circ 1}\cdots M_{i\circ(k-2)} \underline{M_{i\circ(k-1)} M_{i\circ k}} \cdots M_i.
	\]

\textbf{Case\hspace{0.1cm}4:} The circuit has the form
	\[
	M_i M_{i\circ 1}\cdots \underline{M_{i\circ j} M_{i\circ (j+1)}} M_{i\circ (j+2)} \cdots M_{i\circ k}  \underline{M_{i\circ j}} \cdots M_i,
	\]
	where $\circ \in\{+,-\}$ and $k-j \geq 2$. By Lemma~\ref{lem-MMcommute} and Rule~\ref{rule-gateswap}, we can move the second $M_{i\circ j}$ gate to the right side of $M_{i\circ (j+1)}$ and obtain
	\[
	M_i M_{i\circ 1}\cdots\underline{M_{i\circ j} M_{i\circ (j+1)} M_{i\circ j}}  M_{i\circ (j+2)} \cdots M_{i\circ k} \cdots M_i,
	\]
	which satisfies the condition in Case\hspace{0.1cm}3. Thus, the circuit above can be transformed into the following circuit
	\[
	M_{i\circ (j+1)} M_i M_{i\circ 1}\cdots M_{i\circ j} M_{i\circ (j+1)}\cdots M_{i\circ k} \cdots M_i.
	\]

We do the transformation according to the four cases above. Note that in each case, either two gates are eliminated or one gate is moved to the outside of the two $M_i$ gates. Hence, we can eventually get a subcircuit $M_i\mathbf{D}_2 M_i$ that satisfies the condition in \eqref{enu-MM-1} or \eqref{enu-MM-2}. Moreover, since no new gate is introduced in the transformation, we have $\Delta(\mathbf{D}')\subseteq \Delta(\mathbf{D})\cup \{M_i\}$.
\end{proof}

\begin{lemma}\label{lem-M123}
	Let $M_i\in \Delta_\mathbb{H}$, and $\mathbf{D}$ a reversible circuit such that $\Delta(\mathbf{D})\subseteq \Delta_\mathbb{H}$ and $j>i$ for every $M_j\in \Delta(\mathbf{D})$. Then the circuit $M_i\mathbf{D}$ can be transformed into a circuit 
	\[
	\mathbf{D}'' = \mathbf{D}' M_i M_{i+1}\cdots M_{i+k},
	\]
	where the subcircuit $\mathbf{D}'$ does not have any occurrence of $M_i$ and $\Delta(\mathbf{D}'') \subseteq \Delta(\mathbf{D})\cup \{M_i\}$.
\end{lemma}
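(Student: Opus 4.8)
The plan is to push the single leading $M_i$ rightward through $\mathbf{D}$ by induction on the number of gates in $\mathbf{D}$, exploiting that the gates of $\Delta_\mathbb{H}$ behave like adjacent transpositions: by Lemma~\ref{lem-MMcommute} together with Rule~\ref{rule-gateswap}, $M_i$ commutes with any $M_l$ satisfying $|i-l|\ge 2$, and by Lemma~\ref{lem-ABA-BAB} two consecutive gates satisfy the braid relation $M_l M_{l+1} M_l \Leftrightarrow M_{l+1} M_l M_{l+1}$. The hypotheses of Lemma~\ref{lem-ABA-BAB} hold for consecutive gates because every gate of $\Delta_\mathbb{H}$ controls on all $n-1$ non-target bits, and three consecutive path nodes agree on every bit outside the two positions $p,q$ where successive pairs differ (so the common control bits carry equal polarities, and $p\neq q$). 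If $\mathbf{D}$ is empty the claim is trivial with $\mathbf{D}'$ empty and $k=0$.

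For the inductive step I would peel off the last gate, writing $\mathbf{D} = \mathbf{E} M_j$ with $j>i$, and apply the induction hypothesis to $M_i \mathbf{E}$ to obtain $M_i \mathbf{E} \Leftrightarrow \mathbf{E}' M_i M_{i+1}\cdots M_{i+k}$, with $\mathbf{E}'$ free of $M_i$. It then remains to absorb the trailing $M_j$ into the staircase $M_i\cdots M_{i+k}$, which I would do by distinguishing four cases according to where $j$ sits relative to $[i, i+k]$. If $j \ge i+k+2$, then $M_j$ commutes past the entire staircase and I set $\mathbf{D}' = \mathbf{E}' M_j$. If $j = i+k+1$, the staircase simply lengthens to $M_i\cdots M_{i+k+1}$ with $\mathbf{D}'=\mathbf{E}'$. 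If $j = i+k$, the adjacent equal gates $M_{i+k} M_{i+k}$ cancel by Rule~\ref{rule-gate-elimn}, shortening the staircase.

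The main obstacle is the remaining case $i+1 \le j \le i+k-1$, where $M_j$ lands strictly inside the staircase. Here I would first commute the trailing $M_j$ leftward (it passes $M_{i+k},\dots,M_{j+2}$, all of index-distance $\ge 2$) until it abuts $M_{j+1}$, producing the local pattern $M_j M_{j+1} M_j$; the braid relation turns this into $M_{j+1} M_j M_{j+1}$, and the newly created left copy of $M_{j+1}$ then commutes past all of $M_{j-1},\dots,M_i$ (again index-distance $\ge 2$) to exit on the left of the staircase. What remains to its right is exactly the original staircase $M_i\cdots M_{i+k}$, so I set $\mathbf{D}'=\mathbf{E}' M_{j+1}$. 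In every case $\mathbf{D}'$ contains no $M_i$, since the only indices introduced are $j$ or $j+1$, both exceeding $i$; and no gate outside $\Delta(\mathbf{E})\cup\{M_i\}\cup\{M_j\}\subseteq\Delta(\mathbf{D})\cup\{M_i\}$ is ever created — note that $M_{j+1}$ with $j+1\le i+k$ already occurs in the staircase — so the bound $\Delta(\mathbf{D}'')\subseteq\Delta(\mathbf{D})\cup\{M_i\}$ is preserved. This reflection of an internal collision out to the boundary, via one braid move plus commutations, is the crux of the argument; the accompanying bookkeeping of $\Delta$ is routine once the four cases are laid out.
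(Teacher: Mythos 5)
Your proof is correct, and it is considerably more explicit than the paper's own argument, which simply asserts that Cases 1--4 from the proof of Lemma~\ref{lem-MM-M} ``can be adopted'' to transform $M_i\mathbf{D}$ into $\mathbf{D}''$. The ingredients are the same in both treatments --- commutation of $M_l$ and $M_{l'}$ for $|l-l'|\ge 2$ via Lemma~\ref{lem-MMcommute} and Rule~\ref{rule-gateswap}, cancellation of adjacent equal gates via Rule~\ref{rule-gate-elimn}, and the braid relation via Lemma~\ref{lem-ABA-BAB} (your verification of its hypotheses for consecutive gates of $\Delta_\mathbb{H}$, using that three consecutive path nodes agree outside the two flipped positions and that $p\neq q$ by distinctness of the nodes, is exactly right). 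What differs is the organization: the paper reduces the segment trapped between occurrences of a distinguished gate, working outward from collisions in the middle, whereas you run an induction that peels gates off the right end of $\mathbf{D}$ and maintains the invariant that the circuit has the form $\mathbf{D}'\,M_i M_{i+1}\cdots M_{i+k}$ with $\mathbf{D}'$ free of $M_i$. Your four cases ($j\ge i+k+2$, $j=i+k+1$, $j=i+k$, and the internal collision reflected out by one braid move plus commutations) are exhaustive since $j>i$, the staircase can never lose its leading $M_i$ (that would require $j=i$), and your bookkeeping that the only gate ever prepended to $\mathbf{D}'$ is $M_j$ or $M_{j+1}$ --- both of index exceeding $i$ and both already in $\Delta(\mathbf{E})\cup\{M_i\}$ --- correctly establishes both side conditions of the lemma. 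The payoff of your route is a self-contained, checkable induction with a clear termination measure (the length of $\mathbf{D}$), at the cost of not reusing the case machinery already set up for Lemma~\ref{lem-MM-M}.
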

\begin{proof}
	By an analysis of the circuit $\mathbf{D}''$ we see that the Case\hspace{0.1cm}1, 2, 3, and 4 in the proof of Lemma~\ref{lem-MM-M} can be adopted for the transformation from $M_i\mathbf{D}$ to $\mathbf{D}''$, and the lemma follows.
\end{proof}

\begin{proposition}\label{prop-canonical-lim}
	 Every reversible circuit $\mathbf{C}$ with $\Delta(\mathbf{C})\subseteq \Delta_\mathbb{H}$ can be transformed into its canonical form based on $\mathbb{H}$.
\end{proposition}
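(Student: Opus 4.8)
The plan is to induct on $|\Delta(\mathbf{C})|$, the number of distinct gates occurring in $\mathbf{C}$, peeling off at each stage the rightmost canonical block $\mathbf{C}_0$ built from the smallest-indexed gate present. If $\mathbf{C}$ is empty it is trivially in canonical form, which serves as the base case. Otherwise, let $i$ be the least index with $M_i\in\Delta(\mathbf{C})$. Since $i$ is minimal, every gate lying strictly between two occurrences of $M_i$ has index $>i$, so I may write $\mathbf{C}=\mathbf{D}_0 M_i\mathbf{D}_1 M_i\cdots M_i\mathbf{D}_t$, where each $\mathbf{D}_s$ satisfies $\Delta(\mathbf{D}_s)\subseteq\Delta_\mathbb{H}$ and contains no $M_i$.

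First I would collapse the occurrences of $M_i$ to a single one. While at least two occurrences remain, I choose a consecutive pair $M_i\mathbf{D}_s M_i$ (so $M_i\notin\Delta(\mathbf{D}_s)$) and apply Lemma~\ref{lem-MM-M} in place, using the congruence property of $\Leftrightarrow$; this replaces the pair by a subcircuit with at most one $M_i$ and introduces no new gate, strictly decreasing the number of $M_i$'s. Repeating terminates either with $\mathbf{C}\Leftrightarrow\mathbf{E}_0 M_i\mathbf{E}_1$ having a single $M_i$, or with $M_i$ eliminated entirely; in the latter case $|\Delta|$ has already dropped and the induction hypothesis finishes. In the former case $\mathbf{E}_0,\mathbf{E}_1$ use only gates of index $>i$, so Lemma~\ref{lem-M123} applies to $M_i\mathbf{E}_1$ and yields $M_i\mathbf{E}_1\Leftrightarrow\mathbf{E}_1' M_i M_{i+1}\cdots M_{i+k}$ with $M_i\notin\Delta(\mathbf{E}_1')$. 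Setting $\mathbf{C}_0=M_iM_{i+1}\cdots M_{i+k}$ and $\mathbf{E}=\mathbf{E}_0\mathbf{E}_1'$ gives $\mathbf{C}\Leftrightarrow\mathbf{E}\,\mathbf{C}_0$, where $\mathbf{C}_0$ is a consecutive run from $\Delta_\mathbb{H}$ starting at the minimal index $i$ (so it meets clause (1) of Definition~\ref{def-canonicalform}), and $\Delta(\mathbf{E})\subseteq\Delta(\mathbf{C})\setminus\{M_i\}$, since neither lemma creates a gate not already present and both strip the lone $M_i$ from $\mathbf{E}$.

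Because $M_i\notin\Delta(\mathbf{E})$ we have $|\Delta(\mathbf{E})|<|\Delta(\mathbf{C})|$, so the induction hypothesis transforms $\mathbf{E}$ into a canonical form $\mathbf{C}_m\cdots\mathbf{C}_1$. Every block here is built from gates of index $>i$ and hence starts at an index $>i$, while $\mathbf{C}_0$ starts at $i$; concatenating gives $\mathbf{C}\Leftrightarrow\mathbf{C}_m\cdots\mathbf{C}_1\mathbf{C}_0$, and the ordering clause (2) of Definition~\ref{def-canonicalform} holds because the start indices of $\mathbf{C}_1,\dots,\mathbf{C}_m$ are inherited increasing from the canonical form of $\mathbf{E}$ and all exceed the start index $i$ of $\mathbf{C}_0$. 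This is the desired canonical form.

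I expect the routine commutation bookkeeping to be harmless, since Lemma~\ref{lem-MMcommute} already guarantees that gates whose indices differ by at least two commute via Rule~\ref{rule-gateswap}. The genuine work has been front-loaded into Lemmas~\ref{lem-MM-M} and~\ref{lem-M123}; the main thing to get right here is the choice of a well-founded measure, namely the gate-alphabet size $|\Delta(\mathbf{C})|$, together with the verification that peeling off $\mathbf{C}_0$ neither enlarges $\Delta$ nor reintroduces $M_i$, so that each recursive call strictly shrinks the alphabet while the emitted block carries the smallest start index — which is exactly what clause (2) of the canonical form requires.
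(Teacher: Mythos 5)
Your proof is correct and follows essentially the same strategy as the paper's: use Lemma~\ref{lem-MM-M} to collapse the minimal-index gate to at most one occurrence, use Lemma~\ref{lem-M123} to peel off the rightmost block $\mathbf{C}_0$, and recurse on the remainder, with clause (2) of Definition~\ref{def-canonicalform} following because all surviving gates have larger index. Your explicit induction on $|\Delta(\mathbf{C})|$ is a slightly tidier way to organize the paper's iterative ``repeat until done'' argument, but the substance is identical.
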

\begin{proof}
	 By~\eqref{fact-gatenumber} of Fact~\ref{fact-canonicalform}, we know that the gate $M_0$ occurs at most once in the canonical form of $\mathbf{C}$. 
	 By Lemma~\ref{lem-MM-M}, we can transform $\mathbf{C}$ into a circuit $\mathbf{C}'$ that has at most one occurrence of $M_0$. 
	 \begin{enumerate}[(i)]
	 \item If $\mathbf{C}'$ has one occurrence of $M_0$, then by Lemma~\ref{lem-M123}, $\mathbf{C}'$ can be transformed into a circuit of the form $\mathbf{C}''\mathbf{C}_0$, where $\mathbf{C}_0=M_0 M_1 \cdots M_i$ ($i\geq 0$) and for every gate $M_x$ in $\Delta(\mathbf{C}'')$, $x>0$.
	 \item\label{enu-canonical-search} If $\mathbf{C}'$ has no occurrence of $M_0$, then we continue the transformation and reduce the gates $M_1,M_2\dots,M_{2^n-2}$ in turn according to Lemma~\ref{lem-MM-M} until finding a gate $M_j$ that has exactly one occurrence. By Lemma~\ref{lem-M123}, $\mathbf{C}'$ can be transformed into a circuit $\mathbf{C}''\mathbf{C}_0$, where $\mathbf{C}_0=M_j M_{j+1} \cdots M_{j+k}$ ($k\geq 0$) and for every gate $M_x$ in $\Delta(\mathbf{C}'')$, $x>j$.
	 \end{enumerate}
	 
	Suppose that the circuit $\mathbf{D} \mathbf{C}_i\cdots\mathbf{C}_1 \mathbf{C}_0$ has been constructed, and $M_x$ is the first gate of $\mathbf{C}_i$. By~\eqref{fact-firstgate} of Fact~\ref{fact-canonicalform}, similarly as in~\eqref{enu-canonical-search}, we continue the transformation on the subcircuit $\mathbf{D}$ by reducing the gates $M_{x+1},M_{x+2},\dots,M_{2^n-2}$ in turn until a gate with only one occurrence is found. And then construct the subcircuit $\mathbf{C}_{i+1}$ by Lemma~\ref{lem-M123}. Finally, we can get the canonical form $\mathbf{C}_m\cdots\mathbf{C}_1 \mathbf{C}_0$ of $\mathbf{C}$.
\end{proof}

\begin{example}
We show how to transform a circuit $\mathbf{C}$ with 3 bits into the canonical form. We use the Hamiltonian path $\mathbb{H}$ in Example~\ref{exam-Hamiltonian}. All gates of $\mathbf{C}$ are from $\Delta_\mathbb{H}$ (see Fig.~\ref{fig-hamiltonian}).
	\[\mathbf{C}=
	\scalebox{0.75}{\begin{quantikz}[row sep={0.7cm,between origins}, column sep=0.45cm]
	\setwiretype{n}	
	& \push{M_4}& \push{M_2} & \push{M_3} & \textcolor{red}{\push{M_0}} & \push{M_4} & \push{M_1}	& \push{M_3}  & \push{M_4} & \push{M_0} & \\
	& \ctrl{1}  & \octrl{1}  & \targ{} 	  & \octrl{1}   & \ctrl{1} 	 & \octrl{1}	& \targ{}	  & \ctrl{1}   & \octrl{1}	& \qw \\
	& \ctrl{1}  & \ctrl{1}   & \ctrl{-1}  & \octrl{1}	& \ctrl{1}   & \targ{}		& \ctrl{-1}   & \ctrl{1}   & \octrl{1}	& \qw \\
	& \targ{}   & \targ{}    & \octrl{-1} & \targ{} 	& \targ{}    & \ctrl{-1}	& \octrl{-1}  & \targ{}	   & \targ{}    & \qw
	\end{quantikz}}
	\]
	\[
	\overset{\text{Rule~\ref{rule-gateswap}}}{\Longleftrightarrow}
	\scalebox{0.75}{\begin{quantikz}[row sep={0.7cm,between origins}, column sep=0.45cm]
	\setwiretype{n}	
			& \push{M_4}& \push{M_2} & \push{M_3} & \push{M_4}	& \textcolor{red}{\push{M_0}} & \push{M_1}	& \textcolor{blue}{\push{M_3}}  & \textcolor{blue}{\push{M_4}} & \push{M_0} & \\
			& \ctrl{1}  & \octrl{1}  & \targ{} 	  & \ctrl{1}	& \octrl{1}  & \octrl{1}	& \targ{}	  & \ctrl{1}   & \octrl{1}	& \qw \\
			& \ctrl{1}  & \ctrl{1}   & \ctrl{-1}  & \ctrl{1}	& \octrl{1}  & \targ{}		& \ctrl{-1}   & \ctrl{1}   & \octrl{1}	& \qw \\
			& \targ{}   & \targ{}    & \octrl{-1} & \targ{} 	& \targ{}    & \ctrl{-1}	& \octrl{-1}  & \targ{}	   & \targ{}    & \qw
	\end{quantikz}}
	\]
	\[
	\overset{\text{Rule~\ref{rule-gateswap}}}{\Longleftrightarrow}
	\scalebox{0.75}{\begin{quantikz}[row sep={0.7cm,between origins}, column sep=0.45cm]
	\setwiretype{n}	
		& \push{M_4}& \push{M_2} & \push{M_3} & \push{M_4}	& \textcolor{blue}{\push{M_3}} & \textcolor{blue}{\push{M_4}} 	& \textcolor{red}{\push{M_0}} & \push{M_1} & \push{M_0} & \\
		& \ctrl{1}  & \octrl{1}  & \targ{} 	  & \ctrl{1}	& \targ{}	  & \ctrl{1} & \octrl{1}  & \octrl{1}	 & \octrl{1}	& \qw \\
		& \ctrl{1}  & \ctrl{1}   & \ctrl{-1}  & \ctrl{1}	& \ctrl{-1}   & \ctrl{1} & \octrl{1}  & \targ{}		 & \octrl{1}	& \qw \\
		& \targ{}   & \targ{}    & \octrl{-1} & \targ{} 	& \octrl{-1}  & \targ{}	 & \targ{}    & \ctrl{-1}	 & \targ{}  & \qw
	\end{quantikz}}
	\]
	\[
	\overset{\text{Lemma~\ref{lem-ABA-BAB}}}{\Longleftrightarrow}
	\scalebox{0.75}{\begin{quantikz}[row sep={0.7cm,between origins}, column sep=0.45cm]
	\setwiretype{n}	
	& \push{M_4}& \push{M_2} & \textcolor{blue}{\push{M_3}} & \textcolor{blue}{\push{M_4}}	& \textcolor{blue}{\push{M_3}} & \push{M_4}& \push{M_1} & \push{M_0} & \push{M_1} & \\
	& \ctrl{1}  & \octrl{1}  & \targ{} 	  & \ctrl{1}	& \targ{}	  & \ctrl{1}\slice{$\qquad\mathbf{C}_1$} & \octrl{1}\slice{$\qquad\mathbf{C}_0$}  & \octrl{1}  & \octrl{1}	& \qw \\
	& \ctrl{1}  & \ctrl{1}   & \ctrl{-1}  & \ctrl{1}	& \ctrl{-1}   & \ctrl{1} & \targ{}	  & \octrl{1}  & \targ{}	& \qw \\
	& \targ{}   & \targ{}    & \octrl{-1} & \targ{} 	& \octrl{-1}  & \targ{}	 & \ctrl{-1}  & \targ{}    & \ctrl{-1}	& \qw
	\end{quantikz}}
	\]
	\[
	\overset{\text{Lemma~\ref{lem-ABA-BAB}}}{\Longleftrightarrow}
	\scalebox{0.75}{\begin{quantikz}[row sep={0.7cm,between origins}, column sep=0.45cm]
	\setwiretype{n}	
	& \push{M_4}& \push{M_2} & \textcolor{blue}{\push{M_4}} & \textcolor{blue}{\push{M_3}}	& \textcolor{blue}{\push{M_4}} & \push{M_4}& \push{M_1} & \push{M_0} & \push{M_1} & \\
	& \ctrl{1}  & \octrl{1}  & \ctrl{1}	& \targ{}	 & \ctrl{1}  & \ctrl{1}\slice{$\qquad\mathbf{C}_1$} & \octrl{1}\slice{$\qquad\mathbf{C}_0$}  & \octrl{1}  & \octrl{1}	& \qw \\
	& \ctrl{1}  & \ctrl{1}   & \ctrl{1}	& \ctrl{-1}  & \ctrl{1}  & \ctrl{1} & \targ{}	  & \octrl{1}  & \targ{}	& \qw \\
	& \targ{}   & \targ{}    & \targ{} 	& \octrl{-1} & \targ{}  & \targ{}	 & \ctrl{-1}  & \targ{}    & \ctrl{-1}	& \qw
	\end{quantikz}}
	\]
	\[
	\overset{\text{Rule~\ref{rule-gate-elimn}}}{\Longleftrightarrow}
	\scalebox{0.75}{\begin{quantikz}[row sep={0.7cm,between origins}]
	\setwiretype{n}	
	& \push{M_4}& \push{M_2} & \push{M_4} & \push{M_3}	& \push{M_1} & \push{M_0} & \push{M_1} & \\
	& \ctrl{1}  & \octrl{1}  & \ctrl{1}	& \targ{}\slice{$\qquad\mathbf{C}_1$} & \octrl{1}\slice{$\qquad\mathbf{C}_0$}  & \octrl{1}  & \octrl{1}	& \qw \\
	& \ctrl{1}  & \ctrl{1}   & \ctrl{1}	& \ctrl{-1}  & \targ{}	  & \octrl{1}  & \targ{}	& \qw \\
	& \targ{}   & \targ{}    & \targ{} 	& \octrl{-1} & \ctrl{-1}  & \targ{}    & \ctrl{-1}	& \qw
	\end{quantikz}}
	\]
	\[
	\overset{\text{Rule~\ref{rule-gateswap}}}{\Longleftrightarrow}
	\scalebox{0.75}{\begin{quantikz}[row sep={0.7cm,between origins}]
	\setwiretype{n}	
	& \push{M_2}& \push{M_4} & \push{M_4} & \push{M_3}	& \push{M_1} & \push{M_0} & \push{M_1} & \\
	& \octrl{1}  & \ctrl{1}  & \ctrl{1}	& \targ{}\slice{$\qquad\mathbf{C}_1$} & \octrl{1}\slice{$\qquad\mathbf{C}_0$}  & \octrl{1}  & \octrl{1}	& \qw \\
	& \ctrl{1}  & \ctrl{1}   & \ctrl{1}	& \ctrl{-1}  & \targ{}	  & \octrl{1}  & \targ{}	& \qw \\
	& \targ{}   & \targ{}    & \targ{} 	& \octrl{-1} & \ctrl{-1}  & \targ{}    & \ctrl{-1}	& \qw
	\end{quantikz}}
	\]
	\[
	\overset{\text{Rule~\ref{rule-gate-elimn}}}{\Longleftrightarrow}
	\scalebox{0.75}{\begin{quantikz}[row sep={0.7cm,between origins}]
	\setwiretype{n}	
	& \push{M_2}& \push{M_3}	& \push{M_1} & \push{M_0} & \push{M_1} & \\
	\slice{$\qquad\mathbf{C}_2$}& \octrl{1} & \targ{}\slice{$\qquad\mathbf{C}_1$} & \octrl{1}\slice{$\qquad\mathbf{C}_0$}  & \octrl{1}  & \octrl{1}	& \qw \\
	& \ctrl{1}  & \ctrl{-1}  & \targ{}	  & \octrl{1}  & \targ{}	& \qw \\
	& \targ{}   & \octrl{-1} & \ctrl{-1}  & \targ{}    & \ctrl{-1}	& \qw
	\end{quantikz}}
	\]
The canonical form of $\mathbf{C}$ based on $\mathbb{H}$ is $\mathbf{C}_2 \mathbf{C}_1 \mathbf{C}_0$, where $\mathbf{C}_0=M_0 M_1$, $\mathbf{C}_1=M_1$, and $\mathbf{C}_2=M_2 M_3$.
\end{example}

\subsection{Completeness of the rules}
In this section, we prove a generalization of Proposition~\ref{prop-canonical-lim} that every reversible circuit can be transformed into its unique canonical form, which implies that $\mathcal{RC}$ is complete.

A coordinate sequence is a sequence of numbers from $\{1,\dots,n\}$. Let $\omega =(m_1,m_2,\dots,m_k)$ be a coordinate sequence, and $b_0\in \{0,1\}^n$. We say that $\omega$ generates a string $b_k$ from $b_0$ if there is a sequence $(b_0,b_1,\dots,b_k)$ of strings such that $b_{i+1}$ is obtained by flipping the $m_{(i+1)}$-th bit of $b_i$ ($0\leq i < k$), as shown below
\[
\omega\!:\ b_0 \overset{m_1}{\longrightarrow} b_1 \overset{m_2}{\longrightarrow} b_2 \overset{m_3}{\longrightarrow} \cdots \overset{m_k}{\longrightarrow} b_k.
\]

\begin{example}\label{exam-coordinate}
	Let $b_0=000$ and $\omega=(1,2,1,2,3)$. We have
	\[
	\omega\!:\  000 \overset{1}{\longrightarrow} 100 \overset{2}{\longrightarrow} 110 \overset{1}{\longrightarrow} 010 \overset{2}{\longrightarrow} 000 \overset{3}{\longrightarrow} 001.
	\]
	Let $\omega_1=(1,2,2,1,3)$ that swaps the 3rd and 4th elements in $\omega$. We have 
	\[
	\omega_1\!:\  000 \overset{1}{\longrightarrow} 100 \overset{2}{\longrightarrow} 110 \overset{2}{\longrightarrow} 100 \overset{1}{\longrightarrow} 000 \overset{3}{\longrightarrow} 001.
	\]
	Let $\omega_2=(1,1,3)$ that deletes the number 2 in $\omega_1$. We have 
	\[
	\omega_2\!:\  000 \overset{1}{\longrightarrow} 100 \overset{1}{\longrightarrow} 000 \overset{3}{\longrightarrow} 001.
	\]
	Let $\omega_3=(3)$ that deletes the number 1 in $\omega_2$. We have 
	\[
	\omega_3\!:\ 000 \overset{3}{\longrightarrow} 001.
	\]
	It is easy to check that the coordinate sequences $\omega,\omega_1,\omega_2,\omega_3$ generate the same string from $b_0$.
\end{example}

\begin{fact}\label{fact-coordinate}
	Let $\omega$ be a coordinate sequence.
	\begin{enumerate}[(1)]
		\item Changing the order of elements in $\omega$ does not change the generated string.
		\item Deleting two adjacent identical elements in $\omega$ does not change the generated string.
	\end{enumerate}
\end{fact}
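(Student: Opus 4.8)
The plan is to reduce both claims to a single algebraic observation: flipping the $m$-th bit of a string is exactly addition of the standard basis vector $e_m$ over $\mathbb{F}_2^n$, i.e.\ bitwise XOR with the string that has a $1$ in position $m$ and $0$ elsewhere. Viewing $\{0,1\}^n$ as the vector space $\mathbb{F}_2^n$, the step $b_i \xrightarrow{m_{i+1}} b_{i+1}$ becomes $b_{i+1} = b_i \oplus e_{m_{i+1}}$, so unwinding the whole sequence gives the closed form
\[
b_k = b_0 \oplus e_{m_1} \oplus e_{m_2} \oplus \cdots \oplus e_{m_k}.
\]
In other words, the string generated by $\omega$ from $b_0$ depends only on the sum $\bigoplus_{j=1}^{k} e_{m_j}$, and hence only on, for each coordinate $d \in \{1,\dots,n\}$, the parity of the number of indices $j$ with $m_j = d$.

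With this closed form in hand, both statements are immediate. For the first, reordering the elements of $\omega$ merely permutes the summands $e_{m_1},\dots,e_{m_k}$; since $\oplus$ is commutative and associative, the sum is unchanged, so the generated string is the same. For the second, deleting two adjacent identical elements $m_i = m_{i+1} = d$ removes the contribution $e_d \oplus e_d = 0$ from the sum, which again leaves it unchanged.

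I expect no genuine obstacle here: the only point requiring care is to phrase the bit-flip operation precisely as an involutive XOR so that the commutative-group structure of $\mathbb{F}_2^n$ can be invoked cleanly, after which each part is a one-line consequence. An equally elementary alternative, should one prefer to avoid the vector-space language, is to argue coordinatewise: the $d$-th bit of the generated string equals the $d$-th bit of $b_0$ toggled once for each occurrence of $d$ in $\omega$, so it is determined by the parity of that occurrence count; reordering preserves each count exactly, and deleting two adjacent identical elements decreases one count by $2$, so all parities — and therefore the entire generated string — are preserved.
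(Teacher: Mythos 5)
Your proof is correct and matches the paper's reasoning: the paper justifies this fact (in the paragraph following its statement) by observing that if a number $m$ occurs $h$ times in $\omega$ then the $m$-th bit of $b_0$ is flipped $h$ times, which is exactly your coordinatewise parity argument; your $\mathbb{F}_2^n$ closed form $b_k = b_0 \oplus e_{m_1}\oplus\cdots\oplus e_{m_k}$ is just a clean formalization of the same idea. No substantive difference.
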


By Fact~\ref{fact-coordinate}, we can reduce $\omega$ to a coordinate sequence $\omega'$ such that every number in $\omega$ has at most one occurrence in $\omega'$, and $\omega'$ generates the same string as $\omega$ generates. More precisely, all elements that have an even number of occurrences in $\omega$ can be deleted, and the other elements that have an odd number of occurrences only keep one occurrence. The proof is straightforward, since if a number $m$ occurs $h$ times in $\omega$, then the $m$-th bit of $b_0$ will flip $h$ times.

\begin{theorem}\label{thm-circuitcanonical}
	Every $n$-bit reversible circuit can be transformed into its canonical form based on $\mathbb{H}$.
\end{theorem}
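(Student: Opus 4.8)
The plan is to close the gap between this theorem and Proposition~\ref{prop-canonical-lim}, which already handles circuits built only from the gates of $\Delta_\mathbb{H}$. Since $\mathbf{A}\Leftrightarrow\mathbf{B}$ and $\mathbf{C}\Leftrightarrow\mathbf{D}$ imply $\mathbf{AC}\Leftrightarrow\mathbf{BD}$, it suffices to show that every \emph{single} gate $A=\mathbf{G}[P,N,q]$ can be transformed into a product of gates from $\Delta_\mathbb{H}$; chaining these transformations rewrites an arbitrary circuit into a $\Delta_\mathbb{H}$-circuit, to which Proposition~\ref{prop-canonical-lim} then applies. First I would apply Rule~\ref{rule-binary-generate} to $A$, expanding it into a product of \emph{fully controlled} gates, each of which controls on every bit other than its target and therefore exchanges exactly one pair of hypercube-adjacent strings. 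The whole problem thus reduces to one claim: if a fully controlled gate $A$ exchanges $a_i$ and $a_j$ (with $i<j$ on $\mathbb{H}$), then
\[
A \;\Leftrightarrow\; M_i M_{i+1}\cdots M_{j-1}\cdots M_{i+1} M_i,
\]
a palindrome of consecutive $\Delta_\mathbb{H}$-gates along the Hamiltonian path. As permutations the two sides agree, since conjugating the transposition $(a_{i+1}\,a_j)$ by $M_i=(a_i\,a_{i+1})$ yields $(a_i\,a_j)$, and the identity unfolds by induction on $j-i$.

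To prove this claim \emph{via the rules}, I would exploit the coordinate-sequence machinery. For a coordinate sequence $\omega=(m_1,\dots,m_k)$ tracing a path $b_0,b_1,\dots,b_k$, set $\Pi(\omega)=P_1\cdots P_{k-1}P_k P_{k-1}\cdots P_1$, where $P_l$ is the fully controlled gate exchanging the hypercube-adjacent strings $b_{l-1},b_l$; one checks that $\Pi(\omega)$ exchanges $b_0$ and $b_k$ and fixes everything else. Taking $\omega$ to be the segment of $\mathbb{H}$ from $a_i$ to $a_j$ gives $P_l=M_{i+l-1}$, so $\Pi(\omega)$ is exactly the right-hand palindrome, while the one-element sequence $(t)$, where $t$ is the unique coordinate in which $a_i,a_j$ differ, gives $\Pi((t))=A$. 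Because $a_i,a_j$ differ only in coordinate $t$, Fact~\ref{fact-coordinate} lets us reduce the path sequence to $(t)$ by repeatedly reordering adjacent entries and deleting adjacent equal ones. The heart of the argument is to lift each combinatorial move to a rule-transformation of $\Pi(\omega)$: deleting an adjacent pair $m_l=m_{l+1}$ forces $P_l=P_{l+1}$, so the neighbouring occurrences cancel by Rule~\ref{rule-gate-elimn}, yielding $\Pi(\omega')$; reordering adjacent entries $m_l\neq m_{l+1}$ replaces the local gates $P_l,P_{l+1}$ by their counterparts taken the other way around the square, which I would realize through the braid relation of Lemma~\ref{lem-ABA-BAB}, the commutations supplied by Lemma~\ref{lem-MMcommute} and Rule~\ref{rule-gateswap}, and the palindrome reversal of Lemma~\ref{lem-ABCDCBA}, invoking Rule~\ref{rule-completeness} for the opposite-corner exchange that none of the other rules produces. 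Reducing $\omega$ down to $(t)$ then transforms $\Pi(\omega)=M_i\cdots M_{j-1}\cdots M_i$ into $\Pi((t))=A$, establishing the claim.

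Assembling the pieces, every gate is rewritten as a $\Delta_\mathbb{H}$-product, hence every circuit is rewritten as a $\Delta_\mathbb{H}$-circuit, and Proposition~\ref{prop-canonical-lim} converts it to the canonical form based on $\mathbb{H}$. I expect the reordering step to be the main obstacle: unlike the deletion step, it is not a purely local cancellation, because the far part $P_{l+2}\cdots P_k\cdots P_{l+2}$ of the palindrome shares the boundary string $b_{l+1}$ with the gates being moved and so does not commute freely. Controlling the polarities of the shared controls across the square, ensuring that Lemma~\ref{lem-ABA-BAB} and Rule~\ref{rule-completeness} apply with the correct control sets, and coping with the fact that intermediate (non-reduced) coordinate sequences may revisit strings — so that the distinctness hypothesis of Lemma~\ref{lem-MMcommute} must be re-established at each stage — are the technical points demanding the most care. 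Proposition~\ref{prop-soundproperty} is also useful here, to transplant a transformation verified on the active bits to the full register.
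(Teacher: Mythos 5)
Your proposal follows essentially the same route as the paper: expand every gate into fully controlled $n$-bit gates, represent the resulting transposition of $a_i,a_j$ as the palindrome $M_iM_{i+1}\cdots M_{j-1}\cdots M_{i+1}M_i$, and lift the coordinate-sequence reduction of Fact~\ref{fact-coordinate} to rule transformations using Lemmas~\ref{lem-MMcommute}--\ref{lem-ABCDCBA} together with Rules~\ref{rule-gateswap}, \ref{rule-completeness}, and \ref{rule-swap-gate-changectrl}, then finish with Proposition~\ref{prop-canonical-lim}. The one point where you diverge --- reducing $\omega$ by single adjacent transpositions rather than the paper's batch move of a matched pair $m_g=m_h$ whose intermediate entries are pairwise distinct --- is precisely where the distinctness obstacle you flag arises, and the paper's choice of $m_g,m_h$ is designed to keep all intermediate strings distinct so that Lemma~\ref{lem-MMcommute} stays applicable throughout.
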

\begin{proof}
	By Rule~\ref{rule-qubit-add-remove}, every $m$-bit MPMCT gate ($m<n$) can be transformed into a circuit that consists of $n$-bit MPMCT gates, i.e., every $n$-ary reversible circuit can be transformed into a reversible circuit that only contains $n$-bit MPMCT gates. By Proposition~\ref{prop-canonical-lim}, it will thus be sufficient to prove that every $n$-bit MPMCT gate that is not in $\Delta_\mathbb{H}$ can be transformed into a reversible circuit that only consists of the gates in $\Delta_\mathbb{H}$.
	
	Suppose that $M$ is an $n$-bit MPMCT gate that is not in $\Delta_\mathbb{H}$, and it exchanges the two strings $a_i,a_j$ in $\mathbb{H}$ ($0\leq i< j \leq 2^n-1$). Therefore, $M$ is equivalent to the circuit
	\[
	\mathbf{C}=M_{i} M_{i+1}\cdots M_{j-2} M_{j-1} M_{j-2}\cdots M_{i+1} M_{i}.
	\]
	We show how to transform the circuit $\mathbf{C}$ into $M$. The sequence $(a_i,a_{i+1},\cdots,a_{j-1},a_j)$ defines a coordinate sequence
	\[
	\omega=(m_i,m_{i+1},\cdots,m_{j-1}),
	\]
	where for each $i\leq k < j$, $a_k,a_{k+1}$ only differ in the $m_k$-th bit. It is obvious that the target bit of $M_k$ is $q_{m_k}$ ($i\leq k < j$).
	The (left half part of) circuit $\mathbf{C}$ corresponds to the generating process of $\omega$ from $a_i$.
	
	If $a_i,a_j$ differ in the $m$-th bit, then $m$ must occur an odd times in $\omega$, and all other numbers in $\omega$ occur an even times. Hence, $\omega$ can be reduced to the sequence $(m)$ by Fact~\ref{fact-coordinate} (see Example~\ref{exam-coordinate}). To make the proof more understandable, we show the transformation for the circuit $\mathbf{C}$ according to the moving and deleting actions on $\omega$. 
	
	Let $m_g = m_h$ ($g < h$) be two elements in $\omega$ such that all numbers $m_{g+1}, \dots, m_{h-1}$ between $m_g$ and $m_h$ are different from each other, and none of them equals $m_g$. Thus, $\omega$ can be written as
	\[
	(m_i\cdots m_{g-1}, m_{g}, m_{g+1}, \dots, m_{h-1}, m_{h}, m_{h+1}, \dots, m_{j-1}).
	\]
	We move $m_h$ to the right side of $m_g$, and obtain the sequence
	\[
	(m_i\cdots m_{g-1}, m_{g}, m_{h}, m_{g+1}, \dots, m_{h-1}, m_{h+1}, \dots, m_{j-1}).
	\] 
	Then we delete $m_{g}, m_{h}$ from the sequence to get a new coordinate sequence
	\[
	\omega_1 = (m_i\cdots m_{g-1}, m_{g+1}, \dots, m_{h-1}, m_{h+1}, \dots, m_{j-1}),
	\]
	which also generates $a_j$ from $a_i$ by Fact~\ref{fact-coordinate}.
	
	Let $M_g,M_h$ be the corresponding gates for $m_g,m_h$, respectively. In the following, we show the transformation on $\mathbf{C}$. 
	Let $M'$ be a gate, $\mathbf{D}$ a circuit and $\mathbf{D}^{-1}$ its inverse. For simplicity of notation, we denote the circuit $\mathbf{D} M'\mathbf{D}^{-1}$ by $\mathbf{D} M'\|$, e.g.,
	\[
	\mathbf{C} = M_i\cdots M_{g-1} M_{g} M_{g+1} \cdots M_{h-1} M_{h} M_{h+1} \cdots M_{j-1}\|.
	\]
 	By Lemma~\ref{lem-ABCDCBA}, we have
 	\[
 	\begin{aligned}
 		& M_{h} M_{h+1} \cdots M_{j-1} \cdots M_{h+1} M_{h} \\
 	\Leftrightarrow  & M_{j-1} \cdots M_{h+1} M_{h} M_{h+1} \cdots M_{j-1}.
 	\end{aligned}
 	\]
	The circuit $\mathbf{C}$ can be transformed into
	\[
	M_i\cdots M_{g-1} \underline{M_{g} M_{g+1} \cdots M_{h-1}} M_{j-1} \cdots M_{h+1} M_{h} \| .
	\]
	By Lemma~\ref{lem-MMcommute}, for any $M_{k_1}\in \{M_{g},M_{g+1}, \dots, M_{h-1}\}$ and $M_{k_2}\in \{M_{h+1},\dots, M_{j-1}\}$, there is a control bit that has different polarities in $M_{k_1}$ and $M_{k_2}$, respectively. Hence, by Rule~\ref{rule-gateswap} we can move $M_{g} M_{g+1} \cdots M_{h-1}$ to the left side of $M_h$
	\begin{equation}\label{eqn-coord1}
	M_i\cdots M_{g-1} M_{j-1} \cdots M_{h+1} \underline{M_{g} M_{g+1} \cdots M_{h-1}} M_{h} \|.
	\end{equation}
 	By Lemma~\ref{lem-ABCDCBA}, we have
 	\[
 	\begin{aligned}
 		& M_{g+1} \cdots M_{h-1} M_{h} M_{h-1} \cdots M_{g+1} \\
 		\Leftrightarrow  & M_{h} M_{h-1} \cdots M_{g+1} \cdots M_{h-1} M_{h}.
 	\end{aligned}
 	\]
 	The circuit \eqref{eqn-coord1} can be transformed into
 	\[
 	M_i\cdots M_{g-1} M_{j-1} \cdots M_{h+1} M_{g} M_{h} M_{h-1} \cdots M_{g+1}\|.
 	\]
 	By Rule~\ref{rule-swap-gate-changectrl} and Rule~\ref{rule-completeness}, $M_{g} M_{h}$ can be removed from the circuit. We obtain
 	\begin{equation}\label{eqn-coord2}
 		M_i\cdots M_{g-1} M_{j-1} \cdots M_{h+1} M_{h-1}' \cdots M_{g+1}'\|
 	\end{equation}
 	where $M_{h-1}', \dots, M_{g+1}'$ are obtained by changing the polarity of bit $q_{m_g}$ in $M_{h-1},\dots, M_{g+1}$, respectively.
 	Here we use Rule~\ref{rule-swap-gate-changectrl} to transform the circuit so that the condition in Rule~\ref{rule-completeness} can be satisfied. More precisely, we use an X gate to pass through a line in the circuit such that the polarity of the control bit in the line is changed. Note that the selection of $m_g$ and $m_h$ also ensures that the condition in Rule~\ref{rule-completeness} is met.
 	
 	We now apply Lemma~\ref{lem-ABCDCBA} again on the circuit \eqref{eqn-coord2} to get
 	\begin{equation}\label{eqn-coord3}
 		M_i\cdots M_{g-1} M_{g+1}' \cdots M_{h-1}' M_{h+1} \cdots M_{j-1} \|,
 	\end{equation}
 	which corresponds to the generating process of $\omega_1$ from $a_i$.
	If $m_{g-1} = m_{g+1}$ (resp. $m_{h-1} = m_{h+1}$), then $M_{g-1} = M_{g+1}'$ (resp. $M_{h-1} = M_{h+1}'$). We delete $m_{g-1}, m_{g+1}$ (resp. $m_{h-1}, m_{h+1}$) from $\omega_1$, and delete $M_{g-1},M_{g+1}'$ (resp. $M_{h-1}, M_{h+1}'$) from the circuit \eqref{eqn-coord3}. We check the coordinate sequence and delete the elements and gates until no adjacent identical element exists in the coordinate sequence.
	
	The new coordinate sequence and circuit can be dealt with as that for $\omega$ and $\mathbf{C}$. We repeat the procedure until the gate $M$ is obtained.
\end{proof}

\begin{example}
	Let $\mathbb{H}$ be the Hamiltonian path and $\Delta_\mathbb{H}$ the set of gates given in Example~\ref{exam-Hamiltonian}. We show how to transform the circuit in Example~\ref{exa-template} into the canonical circuit based on $\mathbb{H}$. We first transform it into a circuit whose gates are all from $\Delta_\mathbb{H}$, then transform it into the canonical form.
	\[
	\scalebox{0.75}{\begin{quantikz}[row sep={0.7cm,between origins}, column sep=0.4cm]
		\setwiretype{n}	&  &   & \\
			& \ctrl{1} & \qw	  & \qw \\
			& \targ{}  & \ctrl{1} & \qw \\
			& \qw	   & \targ{}  & \qw
	\end{quantikz}}
	\overset{\text{Rule~\ref{rule-qubit-add-remove}}}{\Longleftrightarrow} 
	\scalebox{0.75}{\begin{quantikz}[row sep={0.7cm,between origins}, column sep=0.4cm]
			\setwiretype{n}	
			& 			 &\push{M_5}& \push{M_2}& \push{M_4}& \\
			& \ctrl{1} 	 & \ctrl{1} & \octrl{1}	& \ctrl{1}	& \qw \\
			& \targ{}  	 & \targ{}	& \ctrl{1}  & \ctrl{1}  & \qw \\
			& \octrl{-1} & \ctrl{-1}& \targ{}   & \targ{}   & \qw
	\end{quantikz}}
	\]
	\[
	\overset{\text{Rule~\ref{rule-MPMCTtoMCT}}}{\Longleftrightarrow} 
	\scalebox{0.75}{\begin{quantikz}[row sep={0.7cm,between origins}, column sep=0.4cm]
	\setwiretype{n}	
	& 			&\push{M_5}  &	 		&\push{M_5}	& \push{M_2}& \push{M_4}& \\
	&\qw		& \ctrl{1} 	 &\qw		&\ctrl{1} 	& \octrl{1}	& \ctrl{1}	& \qw \\
	&\qw		& \targ{}  	 &\qw		&\targ{}	& \ctrl{1}  & \ctrl{1}  & \qw \\
	&\gate{x}	& \ctrl{-1}  &\gate{x}  &\ctrl{-1} 	& \targ{}   & \targ{}   & \qw
	\end{quantikz}}
	\overset{\text{Rule~\ref{rule-binary-generate}}}{\Longleftrightarrow} 
	\]
	\[		
	\scalebox{0.75}{\begin{quantikz}[row sep={0.7cm,between origins}, column sep=0.4cm]
		\setwiretype{n}	
		&\textcolor{red}{\push{M_0}} & \push{M_6} &\textcolor{red}{\push{M_2}} &\push{M_4}  & \push{M_5} 	&\textcolor{red}{\push{M_0}} & \push{M_6} &\textcolor{red}{\push{M_2}} &\push{M_4}&\push{M_5}	& \textcolor{red}{\push{M_2}}& \push{M_4}& \\
		&\octrl{1}  & \ctrl{1} 	 &\octrl{1}  &\ctrl{1} 	  & \ctrl{1} 	&\octrl{1}  & \ctrl{1} 	 &\octrl{1}  &\ctrl{1} 	&\ctrl{1} 	& \octrl{1}	& \ctrl{1}	& \qw \\
		&\octrl{1}  & \octrl{1}  &\ctrl{1} 	 &\ctrl{1}	  & \targ{}  	&\octrl{1}  & \octrl{1}  &\ctrl{1} 	 &\ctrl{1} 	&\targ{}	&\ctrl{1}   & \ctrl{1}  & \qw \\
		&\targ{}    & \targ{} 	 &\targ{}    &\targ{}	  & \ctrl{-1} 	&\targ{}    & \targ{} 	 &\targ{}    &\targ{}	&\ctrl{-1} 	& \targ{}   & \targ{}   & \qw
	\end{quantikz}}
	\]
	\[
	\overset{\text{Rule~\ref{rule-gateswap}}}{\Longleftrightarrow}
	\]
	\[
	\scalebox{0.75}{\begin{quantikz}[row sep={0.7cm,between origins}, column sep=0.4cm]
			\setwiretype{n}	
			& \push{M_6}  &\push{M_4} & \push{M_5} & \push{M_6}  &\push{M_4}&\push{M_5}	& \push{M_4} &\textcolor{red}{\push{M_0}} &\textcolor{red}{\push{M_0}} & \textcolor{red}{\push{M_2}} & \textcolor{red}{\push{M_2}} & \textcolor{red}{\push{M_2}}& \\
			& \ctrl{1} 	  &\ctrl{1}	  & \ctrl{1}   & \ctrl{1} 	 &\ctrl{1} 	&\ctrl{1} 	& \ctrl{1}	 &\octrl{1} &\octrl{1}  & \octrl{1} & \octrl{1} & \octrl{1}	& \qw \\
			& \octrl{1}   &\ctrl{1}	  & \targ{}	   & \octrl{1}   &\ctrl{1} 	&\targ{}	& \ctrl{1}   &\octrl{1} &\octrl{1}	& \ctrl{1}  & \ctrl{1}	& \ctrl{1}  & \qw \\
			& \targ{} 	  &\targ{}	  & \ctrl{-1}  & \targ{} 	 &\targ{}	&\ctrl{-1} 	& \targ{}    &\targ{} 	&\targ{} 	& \targ{}   & \targ{} 	& \targ{}   & \qw
	\end{quantikz}}
	\]
	\[
	\overset{\text{Rule~\ref{rule-gate-elimn}}}{\Longleftrightarrow} 
	\scalebox{0.75}{\begin{quantikz}[row sep={0.7cm,between origins}, column sep=0.4cm]
			\setwiretype{n}	
			& \push{M_6}  &\push{M_4} & \push{M_5} & \push{M_6}  &\textcolor{blue}{\push{M_4}} &\textcolor{blue}{\push{M_5}}&\textcolor{blue}{\push{M_4}} & \push{M_2}& \\
			& \ctrl{1} 	  &\ctrl{1}	  & \ctrl{1}   & \ctrl{1} 	 &\ctrl{1} 	&\ctrl{1} 	& \ctrl{1}	 & \octrl{1} & \qw \\
			& \octrl{1}   &\ctrl{1}	  & \targ{}	   & \octrl{1}   &\ctrl{1} 	&\targ{}	& \ctrl{1} 	 & \ctrl{1}  & \qw \\
			& \targ{} 	  &\targ{}	  & \ctrl{-1}  & \targ{} 	 &\targ{}	&\ctrl{-1} 	& \targ{}  	 & \targ{}   & \qw
	\end{quantikz}}
	\]
	\[
	\overset{\text{Lemma~\ref{lem-ABA-BAB}}}{\Longleftrightarrow}
	\scalebox{0.75}{\begin{quantikz}[row sep={0.7cm,between origins}, column sep=0.4cm]
			\setwiretype{n}	
			& \push{M_6}  &\push{M_4} & \textcolor{red}{\push{M_5}} & \textcolor{red}{\push{M_6}}  &\textcolor{red}{\push{M_5}} & \push{M_4} & \push{M_5}& \push{M_2}& \\
			& \ctrl{1} 	  &\ctrl{1}	  & \ctrl{1}   & \ctrl{1} 	 &\ctrl{1} 	& \ctrl{1} 	 & \ctrl{1}	 & \octrl{1} & \qw \\
			& \octrl{1}   &\ctrl{1}	  & \targ{}	   & \octrl{1}   &\targ{}	& \ctrl{1}   & \targ{}	 & \ctrl{1}  & \qw \\
			& \targ{} 	  &\targ{}	  & \ctrl{-1}  & \targ{} 	 &\ctrl{-1} & \targ{}    & \ctrl{-1} & \targ{}   & \qw
	\end{quantikz}}
	\]
	\[
	\overset{\text{Lemma~\ref{lem-ABA-BAB}}}{\Longleftrightarrow}
	\scalebox{0.75}{\begin{quantikz}[row sep={0.7cm,between origins}, column sep=0.4cm]
			\setwiretype{n}	
			& \textcolor{blue}{\push{M_6}}  &\textcolor{blue}{\push{M_4}} & \push{M_6}  &\push{M_5} & \push{M_6} 	& \push{M_4} & \push{M_5}& \push{M_2}& \\
			& \ctrl{1} 	  &\ctrl{1}	  & \ctrl{1} 	&\ctrl{1} 	& \ctrl{1} 		& \ctrl{1} 	 & \ctrl{1}	 & \octrl{1} & \qw \\
			& \octrl{1}   &\ctrl{1}	  & \octrl{1}   &\targ{}	& \octrl{1} 	& \ctrl{1}   & \targ{}	 & \ctrl{1}  & \qw \\
			& \targ{} 	  &\targ{}	  & \targ{} 	&\ctrl{-1} 	& \targ{} 		& \targ{}    & \ctrl{-1} & \targ{}   & \qw
	\end{quantikz}}
	\]
	\[
	\overset{\text{Rule~\ref{rule-gateswap}}}{\Longleftrightarrow}
	\scalebox{0.75}{\begin{quantikz}[row sep={0.7cm,between origins}, column sep=0.4cm]
			\setwiretype{n}	
			& \push{M_4}  &\textcolor{red}{\push{M_6}} & \textcolor{red}{\push{M_6}}  &\push{M_5} & \push{M_6} 	& \push{M_4} & \push{M_5}& \push{M_2}& \\
			& \ctrl{1} 	  &\ctrl{1}	  & \ctrl{1} 	&\ctrl{1} 	& \ctrl{1} 		& \ctrl{1} 	 & \ctrl{1}	 & \octrl{1} & \qw \\
			& \ctrl{1}    &\octrl{1}  & \octrl{1}   &\targ{}	& \octrl{1} 	& \ctrl{1}   & \targ{}	 & \ctrl{1}  & \qw \\
			& \targ{} 	  &\targ{}	  & \targ{} 	&\ctrl{-1} 	& \targ{} 		& \targ{}    & \ctrl{-1} & \targ{}   & \qw
	\end{quantikz}}
	\]
	\[
	\overset{\text{Rule~\ref{rule-gate-elimn}}}{\Longleftrightarrow} 
	\scalebox{0.75}{\begin{quantikz}[row sep={0.7cm,between origins}, column sep=0.4cm]
			\setwiretype{n}	
			& \push{M_4}  &\push{M_5} & \textcolor{blue}{\push{M_6}} & \textcolor{blue}{\push{M_4}} & \push{M_5}& \push{M_2}& \\
			& \ctrl{1} 	  &\ctrl{1}   & \ctrl{1} 	& \ctrl{1} 	 & \ctrl{1}	 & \octrl{1} & \qw \\
			& \ctrl{1}    &\targ{}	  & \octrl{1} 	& \ctrl{1}   & \targ{}	 & \ctrl{1}  & \qw \\
			& \targ{} 	  &\ctrl{-1}  & \targ{} 	& \targ{}    & \ctrl{-1} & \targ{}   & \qw
	\end{quantikz}}
	\]
	\[
	\overset{\text{Rule~\ref{rule-gateswap}}}{\Longleftrightarrow} 
	\scalebox{0.75}{\begin{quantikz}[row sep={0.7cm,between origins}, column sep=0.4cm]
			\setwiretype{n}	
			& \textcolor{red}{\push{M_4}}  &\textcolor{red}{\push{M_5}} & \textcolor{red}{\push{M_4}} 	& \push{M_6} & \push{M_5}& \push{M_2}& \\
			& \ctrl{1} 	  &\ctrl{1}   & \ctrl{1} 	& \ctrl{1} 	 & \ctrl{1}	 & \octrl{1} & \qw \\
			& \ctrl{1}    &\targ{}	  & \ctrl{1} 	& \octrl{1}  & \targ{}	 & \ctrl{1}  & \qw \\
			& \targ{} 	  &\ctrl{-1}  & \targ{} 	& \targ{}    & \ctrl{-1} & \targ{}   & \qw
	\end{quantikz}}
	\]
	\[
	\overset{\text{Lemma~\ref{lem-ABA-BAB}}}{\Longleftrightarrow}
	\scalebox{0.75}{\begin{quantikz}[row sep={0.7cm,between origins}, column sep=0.4cm]
			\setwiretype{n}	
			&\push{M_5} & \push{M_4} &\textcolor{blue}{\push{M_5}} & \textcolor{blue}{\push{M_6}} & \textcolor{blue}{\push{M_5}} & \push{M_2}& \\
			&\ctrl{1}   & \ctrl{1} 	 &\ctrl{1} 	& \ctrl{1} 	 & \ctrl{1}	 & \octrl{1} & \qw \\
			&\targ{}	& \ctrl{1} 	 &\targ{}	& \octrl{1}  & \targ{}	 & \ctrl{1}  & \qw \\
			&\ctrl{-1}  & \targ{} 	 &\ctrl{-1} & \targ{}    & \ctrl{-1} & \targ{}   & \qw
	\end{quantikz}}
	\]
	\[
	\overset{\text{Lemma~\ref{lem-ABA-BAB}}}{\Longleftrightarrow}
	\scalebox{0.75}{\begin{quantikz}[row sep={0.7cm,between origins}, column sep=0.4cm]
			\setwiretype{n}	
			&\push{M_5} & \textcolor{red}{\push{M_4}} & \textcolor{red}{\push{M_6}} & \push{M_5}  & \push{M_6} & \push{M_2}& \\
			&\ctrl{1}   & \ctrl{1} 	 & \ctrl{1}   & \ctrl{1}	& \ctrl{1}	 & \octrl{1} & \qw \\
			&\targ{}	& \ctrl{1} 	 & \octrl{1}  & \targ{}		& \octrl{1}  & \ctrl{1}  & \qw \\
			&\ctrl{-1}  & \targ{} 	 & \targ{}    & \ctrl{-1} 	& \targ{}	 & \targ{}   & \qw
	\end{quantikz}}
	\]
	\[
	\overset{\text{Rule~\ref{rule-gateswap}}}{\Longleftrightarrow}
	\scalebox{0.75}{\begin{quantikz}[row sep={0.7cm,between origins}, column sep=0.4cm]
			\setwiretype{n}	
			&\push{M_5} & \push{M_6}\slice{} & \push{M_4} & \push{M_5}  & \push{M_6}\slice{} & \push{M_2}& \\
			&\ctrl{1}   & \ctrl{1} 	 & \ctrl{1}   & \ctrl{1}	& \ctrl{1}	 & \octrl{1} & \qw \\
			&\targ{}	& \octrl{1}  & \ctrl{1}   & \targ{}		& \octrl{1}  & \ctrl{1}  & \qw \\
			&\ctrl{-1}  & \targ{} 	 & \targ{}    & \ctrl{-1} 	& \targ{}	 & \targ{}   & \qw
	\end{quantikz}}
	\]
\end{example}

\begin{theorem}[\textbf{Completeness}]
	If $\mathbf{A}\equiv \mathbf{B}$, then $\mathbf{A} \Leftrightarrow \mathbf{B}$.
\end{theorem}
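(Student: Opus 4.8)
The plan is to reduce completeness to the two structural results already established: the Universality proposition (existence and, crucially, \emph{uniqueness} of a canonical form for each reversible function based on the fixed Hamiltonian path $\mathbb{H}$) and Theorem~\ref{thm-circuitcanonical} (every reversible circuit $\Leftrightarrow$ its canonical form). Since $\Leftrightarrow$ is an equivalence relation that is moreover a congruence for concatenation, and since the Soundness theorem guarantees that $\Leftrightarrow$ refines $\equiv$, these ingredients should combine into a short argument that routes both circuits through a common canonical form, exactly as sketched in Fig.~\ref{fig-ideacomplete}.

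Concretely, I would first let $f$ denote the common reversible function computed by $\mathbf{A}$ and $\mathbf{B}$, which exists precisely because $\mathbf{A}\equiv\mathbf{B}$. Applying Theorem~\ref{thm-circuitcanonical} to each circuit yields canonical forms $\mathbf{A}'$ and $\mathbf{B}'$ (both based on the fixed $\mathbb{H}$) together with transformations $\mathbf{A}\Leftrightarrow\mathbf{A}'$ and $\mathbf{B}\Leftrightarrow\mathbf{B}'$. By soundness, these transformations preserve the computed function, so $\mathbf{A}\equiv\mathbf{A}'$ and $\mathbf{B}\equiv\mathbf{B}'$; in particular both $\mathbf{A}'$ and $\mathbf{B}'$ compute the same function $f$.

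Next I would invoke the uniqueness clause of the Universality proposition: for the fixed $\mathbb{H}$, the function $f$ is computed by exactly one circuit in canonical form. Hence $\mathbf{A}'$ and $\mathbf{B}'$ must be literally the same sequence of gates, so trivially $\mathbf{A}'\Leftrightarrow\mathbf{B}'$. Chaining through transitivity of $\Leftrightarrow$ then gives $\mathbf{A}\Leftrightarrow\mathbf{A}'=\mathbf{B}'\Leftrightarrow\mathbf{B}$, i.e.\ $\mathbf{A}\Leftrightarrow\mathbf{B}$, which is the claim.

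I do not expect a genuine obstacle at this stage, since all the real work has been absorbed into Theorem~\ref{thm-circuitcanonical} and the uniqueness of the canonical form. The only subtlety worth stating carefully is the logical interlocking of soundness and uniqueness: one uses soundness to transfer the \emph{semantic} equality $\mathbf{A}'\equiv\mathbf{B}'$ (both equal $f$), and then uniqueness to upgrade that semantic equality to the \emph{syntactic} identity $\mathbf{A}'=\mathbf{B}'$, rather than trying to exhibit $\mathbf{A}'\Leftrightarrow\mathbf{B}'$ directly by the rules. This interplay — semantics pinning down a single canonical target, syntax transporting each circuit to it — is the entire content of the completeness proof.
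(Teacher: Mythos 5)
Your proposal is correct and follows essentially the same route as the paper: transform each circuit to its canonical form via Theorem~\ref{thm-circuitcanonical}, use soundness plus the uniqueness clause of the Universality proposition to identify the two canonical forms, and conclude by transitivity of $\Leftrightarrow$. Your version merely makes explicit the soundness/uniqueness interplay that the paper leaves implicit.
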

\begin{proof}
	Let $\mathbf{A}, \mathbf{B}$ be two reversible circuits such that $\mathbf{A}\equiv \mathbf{B}$. By Theorem~\ref{thm-circuitcanonical}, there is a unique reversible circuit $\mathbf{C}$ in canonical form based on $\mathbb{H}$ such that $\mathbf{A} \Leftrightarrow \mathbf{C}$ and $\mathbf{B} \Leftrightarrow \mathbf{C}$. It follows immediately that $\mathbf{A} \Leftrightarrow \mathbf{B}$.
\end{proof}

\section{Conclusion and discussion}\label{sec-conclusion}
In this paper, we present the first complete set $\mathcal{RC}$ of transformation rules for  reversible circuits. To prove the completeness, we define the canonical forms of $n$-bit reversible circuits based on a Hamiltonian path of an $n$-hypercube graph, and show that every reversible function is computed by a unique reversible circuit in the canonical form. Moreover, we show that every reversible circuit can be transformed into its canonical form by applying the rules. Therefore, any two equivalent reversible circuits can be transformed into one another through the canonical form. 
This result can be utilized to demonstrate the completeness of a rule-based optimization system. That is, if the system encompasses the rules in $\mathcal{RC}$, it is theoretically capable of achieving circuit optimality. Nevertheless, this guarantee holds only in theory. As indicated by Algorithm~\ref{algo-contruction}, the canonical circuit of a reversible function can grow exponentially in size. Hence, converting a circuit into its canonical form may require an exponential number of transformation steps, rendering the optimization process impractically inefficient for large-scale circuits. To mitigate this limitation, heuristic strategies can be adopted—for example, new rules and templates can be derived from $\mathcal{RC}$ to reduce the overall transformation cost.

It should be noted that while the rules presented in this paper can be applied to optimize gate counts, a reduction at the logical level does not necessarily imply lower physical implementation costs. This work is primarily a theoretical contribution to reversible logic synthesis, specifically the completeness of transformation rules. The theory itself is not dedicated to gate minimization, nor does it address physical implementation. Furthermore, reversible logic gates are not exclusive to quantum computing and can be realized within other models of reversible computation.

In this work, we focus on the transformation rules for reversible circuits without ancillary bits.
There are two main reasons: first, the circuits without ancillary bits are already functionally complete; second, the current quantum hardware is constrained by the limited number of reliable qubits. Therefore, the use of qubits should be minimized when designing quantum algorithms. 
In~\cite{Wu2024asymptotically}, a new synthesis method is presented that can achieve the lower bound $O(2^n n/\log n)$ when no ancillary bit is used. Given access to a single ancillary bit (not necessarily constant), any MCT gate can be decomposed into a cascade of Toffoli gates~\cite{miller2011elementary,barenco1995,Aaronson2015classification}. This decomposition implies that all reversible circuits can be synthesized using only the X, CNOT, and Toffoli gates augmented by one ancillary bit, for which the complete transformation rules are provided by~\cite{Comfort2018category}. However, when introducing additional ancillary bits is prohibited, the complete axiomatization for the reversible circuits with ancillary bits is still unknown. 

Another question is the minimality of $\mathcal{RC}$, that is, whether the five rules are independent of each other. In particular, Rule~\ref{rule-completeness} implies Rule~\ref{rule-MPMCTtoMCT}, which is widely used for circuit transformation and has a more concise form. Can we replace Rule~\ref{rule-completeness} by Rule~\ref{rule-MPMCTtoMCT} so that the new theory still preserves completeness? This is also desirable for future research.

\bibliographystyle{IEEEtran}
\bibliography{./quantumref}

@inproceedings{miller2011elementary,
	title={Elementary quantum gate realizations for multiple-control {Toffoli} gates},
	author={Miller, D Michael and Wille, Robert and Sasanian, Zahra},
	booktitle={2011 41st IEEE International Symposium on Multiple-Valued Logic},
	pages={288--293},
	year={2011},
	organization={IEEE}
}

@article{barenco1995,
	title = {Elementary gates for quantum computation},
	author = {Barenco, Adriano and Bennett, Charles H. and Cleve, Richard and DiVincenzo, David P. and Margolus, Norman and Shor, Peter and Sleator, Tycho and Smolin, John A. and Weinfurter, Harald},
	journal = {Phys. Rev. A},
	volume = {52},
	issue = {5},
	pages = {3457--3467},
	numpages = {0},
	year = {1995},
	month = {Nov},
	publisher = {American Physical Society},
	doi = {10.1103/PhysRevA.52.3457}
}

@inproceedings{Iwama2002transrule,
	author = {Iwama, Kazuo and Kambayashi, Yahiko and Yamashita, Shigeru},
	title = {Transformation Rules for Designing {CNOT}-Based Quantum Circuits},
	year = {2002},
	isbn = {1581134614},
	publisher = {Association for Computing Machinery},
	address = {New York, NY, USA},
	booktitle = {Proceedings of the 39th Annual Design Automation Conference},
	pages = {419–424},
	numpages = {6},
	location = {New Orleans, Louisiana, USA},
	series = {DAC '02}
}

@InProceedings{Toffoli1980reversible,
	author={Toffoli, Tommaso},
	editor={de Bakker, Jaco and van Leeuwen, Jan},
	title={Reversible computing},
	booktitle={Automata, Languages and Programming},
	year={1980},
	publisher={Springer Berlin Heidelberg},
	address={Berlin, Heidelberg},
	pages={632--644},
	isbn={978-3-540-39346-7}
}

@InProceedings{Thomsen2015ricerar,
	author={Thomsen, Michael Kirkedal and Kaarsgaard, Robin and Soeken, Mathias},
	editor={Krivine, Jean and Stefani, Jean-Bernard},
	title={{R}icercar: A Language for Describing and Rewriting Reversible Circuits with Ancillae and Its Permutation Semantics},
	booktitle={Reversible Computation},
	year={2015},
	publisher={Springer International Publishing},
	address={Cham},
	pages={200--215},
	isbn={978-3-319-20860-2}
}

@book{Abdessaied2016reversiblequantum,
	author = {Abdessaied, Nabila and Drechsler, Rolf},
	title = {Reversible and Quantum Circuits: Optimization and Complexity Analysis},
	year = {2016},
	isbn = {3319811584},
	publisher = {Springer Publishing Company, Incorporated},
	edition = {1st}
}

@InProceedings{Rahman2012properties,
	author={Rahman, Md. Mazder and Dueck, Gerhard W.},
	editor={Gl{\"u}ck, Robert and Yokoyama, Tetsuo},
	title={Properties of Quantum Templates},
	booktitle={Reversible Computation},
	year={2012},
	publisher={Springer Berlin Heidelberg},
	address={Berlin, Heidelberg},
	pages={125-137},
	isbn={978-3-642-36315-3}
}

@InProceedings{Hutslar2018library,
	author={Hutslar, Christian and Carette, Jacques and Sabry, Amr},
	editor={Kari, Jarkko and Ulidowski, Irek},
	title={A Library of Reversible Circuit Transformations (Work in Progress)},
	booktitle={Reversible Computation},
	year={2018},
	publisher={Springer International Publishing},
	address={Cham},
	pages={339-345},
	isbn={978-3-319-99498-7}
}

@book{Vos2010reversible,
	author = {Vos, Alexis De},
	title = {Reversible Computing: Fundamentals, Quantum Computing, and Applications},
	year = {2010},
	isbn = {3527409920},
	publisher = {Wiley-VCH}
}

@book{Taha2015reversible,
	author = {Taha, Saleem Mohammed Ridha},
	title = {Reversible Logic Synthesis Methodologies with Application to Quantum Computing},
	year = {2016},
	isbn ={978-3-319-23479-3},
	publisher = {Springer International Publishing}
}

@inproceedings{Soeken2013white,
	author = {Soeken, Mathias and Thomsen, Michael Kirkedal},
	title = {White Dots Do Matter: Rewriting Reversible Logic Circuits},
	year = {2013},
	isbn = {9783642389856},
	publisher = {Springer-Verlag},
	address = {Berlin, Heidelberg},
	doi = {10.1007/978-3-642-38986-3_16},
	booktitle = {Proceedings of the 5th International Conference on Reversible Computation},
	pages = {196–208},
	numpages = {13},
	location = {Victoria, BC, Canada},
	series = {RC'13}
}

@InProceedings{Aaronson2015classification,
	author =	{Aaronson, Scott and Grier, Daniel and Schaeffer, Luke},
	title =	{{The Classification of Reversible Bit Operations}},
	booktitle =	{8th Innovations in Theoretical Computer Science Conference (ITCS 2017)},
	pages =	{23:1--23:34},
	series =	{Leibniz International Proceedings in Informatics (LIPIcs)},
	ISBN =	{978-3-95977-029-3},
	ISSN =	{1868-8969},
	year =	{2017},
	volume =	{67},
	editor =	{Papadimitriou, Christos H.},
	publisher =	{Schloss Dagstuhl -- Leibniz-Zentrum f{\"u}r Informatik},
	address =	{Dagstuhl, Germany},
	doi =		{10.4230/LIPIcs.ITCS.2017.23}
}

@InProceedings{Rahman2014templates,
	author={Rahman, Md Zamilur and Rice, Jacqueline E.},
	editor={Yamashita, Shigeru and Minato, Shin-ichi},
	title={Templates for Positive and Negative Control {Toffoli} Networks},
	booktitle={Reversible Computation},
	year={2014},
	publisher={Springer International Publishing},
	address={Cham},
	pages={125-136},
	isbn={978-3-319-08494-7}
}

@inproceedings{Maslov2005quantum,
	author = {Maslov, D. and Young, C. and Miller, D. M. and Dueck, G. W.},
	title = {Quantum Circuit Simplification Using Templates},
	booktitle = {Proceedings of the Conference on Design, Automation and Test in Europe - Volume 2},
	year = {2005},
	isbn = {0769522882},
	publisher = {IEEE Computer Society},
	address = {USA},
	doi = {10.1109/DATE.2005.249},
	pages = {1208–1213},
	numpages = {6},
	series = {DATE '05}
}

@ARTICLE{Maslov2005toffoli,
	author={Maslov, D. and Dueck, G.W. and Miller, D.M.},
	journal={IEEE Transactions on Computer-Aided Design of Integrated Circuits and Systems}, 
	title={Toffoli network synthesis with templates}, 
	year={2005},
	volume={24},
	number={6},
	pages={807-817},
	doi={10.1109/TCAD.2005.847911}
}

@inbook{Amaru2017newmajority,
	author={Amaru, Luca Gaetano},
	chapter={3 Majority Logic},
	title={New Data Structures and Algorithms for Logic Synthesis and Verification},
	year={2017},
	publisher={Springer International Publishing},
	address={Cham},
	pages={57-100},
	isbn={978-3-319-43174-1},
	doi={10.1007/978-3-319-43174-1_6}
}

@book{Zulehner2021introducing,
	author = {Alwin Zulehner and Robert Wille},
	title = {Introducing Design Automation for Quantum Computing},
	year = {2020},
	publisher = {Springer Cham},
	isbn={978-3-030-41753-6}
}

@INPROCEEDINGS{Clement2023complete,
	author={Cl\'{e}ment, Alexandre and Heurtel, Nicolas and Mansfield, Shane and Perdrix, Simon and Valiron, Beno\^{\i}t},
	booktitle={2023 38th Annual ACM/IEEE Symposium on Logic in Computer Science (LICS)}, 
	title={A Complete Equational Theory for Quantum Circuits}, 
	year={2023},
	pages={1-13},
	doi={10.1109/LICS56636.2023.10175801}
}

@inproceedings{Robin2017category,
	author       = {J. Robin B. Cockett and Cole Comfort and Priyaa V. Srinivasan},
	editor       = {Bob Coecke andAleks Kissinger},
	title        = {The Category {CNOT}},
	booktitle    = {Proceedings 14th International Conference on Quantum Physics and Logic, {QPL} 2017, Nijmegen, The Netherlands, 3-7 July 2017},
	series       = {EPTCS},
	volume       = {266},
	pages        = {258--293},
	year         = {2017},
	doi          = {10.4204/EPTCS.266.18}
}

@inproceedings{Comfort2018category,
	author       = {Cole Comfort and J. Robin B. Cockett},
	editor       = {Peter Selinger and Giulio Chiribella},
	title        = {The category {TOF}},
	booktitle    = {Proceedings 15th International Conference on Quantum Physics and Logic, {QPL} 2018, Halifax, Canada, 3-7th June 2018},
	series       = {{EPTCS}},
	volume       = {287},
	pages        = {67--84},
	year         = {2018},
	doi          = {10.4204/EPTCS.287.4}
}

@inproceedings{Miller2003transformation,
	author = {Miller, D. Michael and Maslov, Dmitri and Dueck, Gerhard W.},
	title = {A transformation based algorithm for reversible logic synthesis},
	year = {2003},
	isbn = {1581136889},
	publisher = {Association for Computing Machinery},
	address = {New York, NY, USA},
	doi = {10.1145/775832.775915},
	booktitle = {Proceedings of the 40th Annual Design Automation Conference},
	pages = {318–323},
	numpages = {6},
	location = {Anaheim, CA, USA},
	series = {DAC '03}
}

@INPROCEEDINGS{Arabzadeh2010rule,
	author={Arabzadeh, Mona and Saeedi, Mehdi and Zamani, Morteza Saheb},
	booktitle={2010 15th Asia and South Pacific Design Automation Conference (ASP-DAC)}, 
	title={Rule-based optimization of reversible circuits}, 
	year={2010},
	pages={849-854},
	doi={10.1109/ASPDAC.2010.5419684}
}

@ARTICLE{Datta2015post,
	author={Datta, Kamalika and Sengupta, Indranil and Rahaman, Hafizur},
	journal={IEEE Transactions on Computers}, 
	title={A Post-Synthesis Optimization Technique for Reversible Circuits Exploiting Negative Control Lines}, 
	year={2015},
	volume={64},
	number={4},
	pages={1208-1214}
}

@INPROCEEDINGS{Soeken2010window,
	author={Soeken, Mathias and Wille, Robert and Dueck, Gerhard W. and Drechsler, Rolf},
	booktitle={13th IEEE Symposium on Design and Diagnostics of Electronic Circuits and Systems}, 
	title={Window optimization of reversible and quantum circuits}, 
	year={2010},
	pages={341-345}
}

@INPROCEEDINGS{Soeken2012optimizing,
	author={Soeken, Mathias and Sasanian, Zahra and Wille, Robert and Miller, D. Michael and Drechsler, Rolf},
	booktitle={2012 IEEE 42nd International Symposium on Multiple-Valued Logic}, 
	title={Optimizing the Mapping of Reversible Circuits to Four-Valued Quantum Gate Circuits}, 
	year={2012},
	pages={173-178}
}

@InProceedings{Clement2024qauntum,
	author =	{Cl\'{e}ment, Alexandre and Delorme, No\'{e} and Perdrix, Simon and Vilmart, Renaud},
	title =	{Quantum Circuit Completeness: Extensions and Simplifications},
	booktitle =	{32nd EACSL Annual Conference on Computer Science Logic (CSL 2024)},
	pages =	{20:1--20:23},
	series = {Leibniz International Proceedings in Informatics (LIPIcs)},
	ISBN =	{978-3-95977-310-2},
	ISSN =	{1868-8969},
	year =	{2024},
	volume = {288},
	editor ={Murano, Aniello and Silva, Alexandra},
	publisher =	{Schloss Dagstuhl -- Leibniz-Zentrum f{\"u}r Informatik},
	address = {Dagstuhl, Germany},
	doi = {10.4230/LIPIcs.CSL.2024.20}
}

@inproceedings{Clement2024minimal,
	author = {Cl\'{e}ment, Alexandre and Delorme, No\'{e} and Perdrix, Simon},
	title = {Minimal Equational Theories for Quantum Circuits},
	year = {2024},
	isbn = {9798400706608},
	publisher = {Association for Computing Machinery},
	address = {New York, NY, USA},
	doi = {10.1145/3661814.3662088},
	booktitle = {Proceedings of the 39th Annual ACM/IEEE Symposium on Logic in Computer Science},
	articleno = {27},
	numpages = {14},
	location = {Tallinn, Estonia},
	series = {LICS '24}
}

@article{Wu2024asymptotically,
	author = {Xian Wu and Lvzhou Li},
	title = {Asymptotically optimal synthesis of reversible circuits},
	journal = {Information and Computation},
	volume = {301},
	pages = {105235},
	year = {2024},
	issn = {0890-5401},
	doi = {https://doi.org/10.1016/j.ic.2024.105235}
}

@article{Saeedi2013synthesis,
	author = {Saeedi, Mehdi and Markov, Igor L.},
	title = {Synthesis and optimization of reversible circuits—a survey},
	year = {2013},
	issue_date = {February 2013},
	publisher = {Association for Computing Machinery},
	address = {New York, NY, USA},
	volume = {45},
	number = {2},
	issn = {0360-0300},
	doi = {10.1145/2431211.2431220},
	journal = {ACM Comput. Surv.},
	month = {March},
	articleno = {21},
	numpages = {34}
}

@INPROCEEDINGS{Bernardino2025reversible,
	author={Bernardino, Raphael and Kowada, Luis},
	booktitle={2025 IEEE 16th Latin America Symposium on Circuits and Systems (LASCAS)}, 
	title={Reversible Circuit Optimization Using {Reed-Muller} Spectrum and Rules Decomposition}, 
	year={2025},
	volume={1},
	pages={1-5},
	doi={10.1109/LASCAS64004.2025.10966267}
}

@inproceedings{Maslov2003fredkin,
	author = {Maslov, Dmitri and Dueck, Gerhard W. and Miller, D. Michael},
	title = {Fredkin/Toffoli Templates for Reversible Logic Synthesis},
	year = {2003},
	isbn = {1581137621},
	publisher = {IEEE Computer Society},
	address = {USA},
	booktitle = {Proceedings of the 2003 IEEE/ACM International Conference on Computer-Aided Design},
	pages = {256},
	series = {ICCAD '03}
}

@INPROCEEDINGS{Cheng2012simplification,
	author={Cheng, Xueyun and Guan, Zhijin and Wang, Wei and Zhu, Lingling},
	booktitle={2012 9th International Conference on Fuzzy Systems and Knowledge Discovery}, 
	title={A simplification algorithm for reversible logic network of positive/negative control gates}, 
	year={2012},
	pages={2442-2446},
	doi={10.1109/FSKD.2012.6233837}
}

@INPROCEEDINGS{Abdessaied2013exact,
	author={Abdessaied, Nabila and Soeken, Mathias and Wille, Robert and Drechsler, Rolf},
	booktitle={2013 IEEE 43rd International Symposium on Multiple-Valued Logic}, 
	title={Exact Template Matching Using Boolean Satisfiability}, 
	year={2013},
	pages={328-333},
	doi={10.1109/ISMVL.2013.26}
}

@book{Skiena1991implementing,
	author = {Skiena, Steven},
	title = {Implementing Discrete Mathematics: Combinatorics and Graph Theory with Mathematica},
	year = {1991},
	isbn = {0201509431},
	publisher = {Addison-Wesley Longman Publishing Co., Inc.},
	address = {USA}
}

%

\begin{IEEEbiography}[{\includegraphics[width=1in,height=1.25in,clip,keepaspectratio]{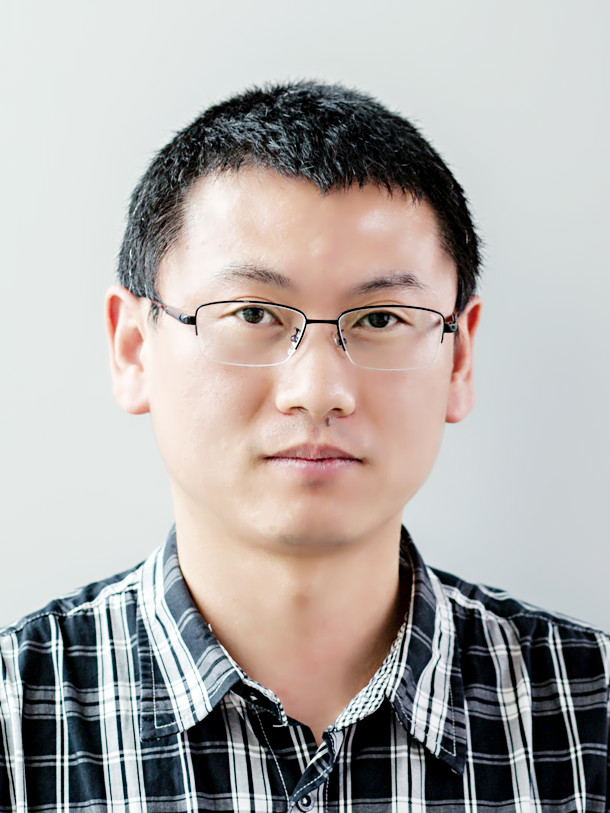}}]{Shiguang Feng}
	(Member, IEEE) received the B.S. degree in computer science and technology from Shandong Agricultural University, Tai'an, China, in 2006; the Ph.D. degree in logic from Sun Yat-sen University, Guangzhou, China, in 2012; and the Doctor of Natural Science degree in computer science from Leipzig University, Leipzig, Germany, in 2016. He is currently an associate researcher with the School of Computer Science and Engineering, Sun Yat-sen University, Guangzhou, China. His current research interests include reversible logic synthesis, quantum algorithms, and mathematical logic.
\end{IEEEbiography}

\begin{IEEEbiography}[{\includegraphics[width=1in,height=1.25in,clip,keepaspectratio]{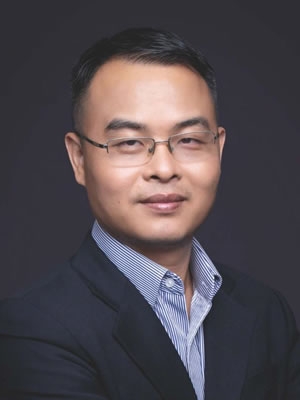}}]{Lvzhou Li}
	received his PhD degree in Computer Science from Sun Yat-sen University, China in 2009 and then worked in Sun Yat-sen University, China. Now he is a professor of the School of Computer Science and Engineering, Sun Yat-sen University, China. His research interests are quantum algorithm,  quantum circuit synthesis and optimization, and quantum machine learning.
\end{IEEEbiography}
\end{document}